\renewcommand{\footnotetextcopyrightpermission}[1]{\thankses}%}
\newtheorem{remark}[theorem]{Remark}
\newcommand{\set}[1]{\{#1\}}
	\newcommand{\pref}{\succsim\xspace}
\newcommand{\midd}{\mathbin{:}}
	\renewcommand{\ie}{that is, }
	\newcommand{\type}{\theta}
	\newcommand{\types}{{\Theta}}
	\newcommand{\baker}{\type_1}
	\newcommand{\fracc}[2]{\frac{|#1|}{|#2|}}
	\newcommand{\nbh}[1][]{
		\ifthenelse{\equal{#1}{}}{\nu}{\nu(#1)}
	}
	\newcommand{\cstr}[1][]{
		\ifthenelse{\equal{#1}{}}{\pi}{\cstr(#1)}
	}
\begin{document}

\title{Fractional Hedonic Games}

\author{Haris Aziz}
\affiliation{Data61, CSIRO and UNSW Australia}
\author{Florian Brandl}
\affiliation{Technical University of Munich}
\author{Felix Brandt}
\affiliation{Technical University of Munich}
\author{Paul Harrenstein}
\affiliation{University of Oxford}
\author{Martin Olsen}
\affiliation{Aarhus University}
\author{Dominik Peters}
\affiliation{University of Oxford}

\date{}

\begin{abstract}
	The work we present in this paper initiated the formal study of \emph{fractional hedonic games}, coalition formation games in which the utility of a player is the average value he ascribes to the members of his coalition. 
	Among other settings, this covers situations in which players only distinguish between friends and non-friends and desire to be in a coalition in which the fraction of friends is maximal.
	Fractional hedonic games thus not only constitute a natural class of succinctly representable coalition formation games, but also provide an interesting framework for network clustering. We propose a number of conditions under which the core of fractional hedonic games is non-empty and provide algorithms for computing a core stable outcome. By contrast, we show that the core may be empty in other cases, and that it is computationally hard in general to decide non-emptiness of the core.
\end{abstract}

\maketitle

\section{Introduction}

Hedonic games present a natural and versatile framework to study the formal aspects of coalition formation which has received much attention from both an economic and an algorithmic perspective. This work was initiated by   \citet{DrGr80a}, \citet{BKS01a}, \citet{CeRo01a}, and \citet{BoJa02a} and has sparked a lot of follow-up work. A recent survey was provided by \citet{AzSa15a}.
In hedonic games, coalition formation is approached from a game-theoretic angle. The outcomes are coalition structures---partitions of the players---over which the players have preferences. Moreover, the players have different individual or joint strategies at their disposal to affect the coalition structure to be formed. Various solution concepts---such as the \emph{core}, the \emph{strict core}, and several kinds of \emph{individual stability}---have been proposed to analyze these games.

The characteristic feature of hedonic games is a non-ex\-ter\-nal\-ities condition, according to which every player's preferences over the coalition structures are fully determined by the player's preferences over coalitions he belongs to, and do not depend on how the remaining players are grouped. Nevertheless, the number of coalitions a player can be a member of is exponential in the total number of players, and the development and analysis of concise representations as well as interesting subclasses of hedonic games are an ongoing concern in computer science and game theory. Particularly prominent in this respect are representations in which the players are assumed to entertain preferences over the other players, which are then systematically lifted to preferences over coalitions~\citep[see, \eg][]{BKS01a,CeRo01a,BoJa02a,AlRe04a,DBHS06a,ABH11c}.

The work presented in this paper pertains to what we will call \emph{fractional hedonic games}, a subclass of hedonic games in which each player is assumed to have cardinal utilities or \emph{valuations} over the other players. These induce preferences over coalitions by considering the \emph{average valuation} of the members in each coalition. The higher this value, the more preferred the respective coalition is. 
Previously the min, max, and sum operators have been used respectively for \emph{hedonic games based on worst players}~\citep{CeRo01a}, \emph{hedonic games based on best players}~\citep{CeHa02a}, and \emph{additively separable hedonic games}~\citep{BKS01a}. 
Despite the natural appeal of taking the average value, fractional hedonic games have enjoyed surprisingly little attention prior to this work.\footnote{\citet{Hajd06a} first mentioned the possibility of using the average value in hedonic games but did not further analyze it.} 
Fractional hedonic games can be represented by a weighted directed graph where the weight of edge~$(i,j)$ denotes the value player~$i$ has for player~$j$. However, we will be particularly interested in games that can be represented by  \emph{un}directed and \emph{un}weighted graphs. Thus, such games have symmetric valuations that only take the values $0$ and $1$. With the natural graphical representation of these games, desirable outcomes for fractional hedonic games also provide an interesting angle on network clustering.

Many natural economic problems can be modeled as fractional hedonic games.
A particular economic problem that we will consider is what we refer to as \emph{Bakers and Millers}.
Suppose there are two types of players, bakers and millers, where individuals of the same type are competitors, trading with players of the other type.
Both types of players can freely choose the `neighborhood' in which to set up their enterprises; in the formal model, each neighborhood forms a coalition. Millers want to be situated in a neighborhood with as many purchasing bakers relative to competing millers as possible, so as to achieve a high price for the wheat they produce. On the other hand, bakers seek as high a ratio of the number of millers to the number of bakers as possible, so as to keep the price of wheat low and that of bread up. We show that these problems (which belong to the class of fractional hedonic games) always admit a core stable partition. This result generalizes to situations in which there are more than two types of player who want to keep the fraction of players of their own type as low as possible. Our study of the Baker and Millers setting is inspired by \citeauthor{Sche71a}'s famous dynamic model of segregation \citep{Sche71a,Sche78a}.

Another example concerns the formation of political parties. The valuation of two players for each other may be interpreted as the extent to which their opinions overlap, perhaps measured by the inverse of their distance in the political spectrum. In political environments, players need to form coalitions and join parties to acquire influence. On the other hand, as parties become larger, disagreement among their members will increase, making them susceptible to split-offs. Thus, one could assume that players seek to maximize the \emph{average} agreement with the members of their coalition.

The contributions of the paper are as follows.
\begin{itemize}
	\item We introduce and formally define fractional hedonic games and their graphical representation. We identify the subclass of games represented by undirected and unweighted graphs (\emph{simple} and \emph{symmetric} fractional hedonic games) and discuss some of their properties.
	\item We show that fractional hedonic games may have an empty core, even in the simple symmetric case. We give an example of such a game with $40$ players. We leverage this example to show that it is $\Sigma_2^p$-complete to decide whether a given simple symmetric fractional hedonic game has non-empty core. Thus, finding a partition in the core is NP-hard. It is also coNP-complete to verify whether a given partition is in the core.
	\item Based on the graphical representation of fractional hedonic games, we identify a number of classes of graphs which induce games that admit a non-empty core. These include graphs with degree at most two, forests, multi-partite complete graphs, bipartite graphs which admit perfect matchings, and graphs with girth at least five. For each of these classes, we also present polynomial-time algorithms to compute a core stable partition.
	\item We formulate the Bakers and Millers setting as a fractional hedonic game based on complete bipartite (or, more generally, complete $k$-partite) graphs. We show that such games always admit a non-empty strict core, and that the grand coalition is always stable. We characterize the partitions in the strict core, and give a polynomial time algorithm to compute a unique \emph{finest} partition in the strict core.
	\item We discuss how computing desirable outcomes in fractional hedonic games provides an interesting game-theoretic perspective to community detection~\citep[see, \eg][]{Newm04a,Fort10a} and network clustering.\footnote{\citet{CNM04a} discuss how social network analysis can be used to identify clusters of like-minded buyers and sellers in \texttt{amazon}'s purchasing network.}
\end{itemize}

\section{Related Work}

Fractional hedonic games are related to \emph{additively separable} hedonic games~\citep[see, \eg][]{Olse09a,SuDi10a,ABS11c}.
In both fractional hedonic games and additively separable hedonic games, each player ascribes a cardinal value to every other player. In additively separable hedonic games, utility in a coalition is derived by adding the values for the other players. By contrast, in fractional hedonic games, utility in a coalition is derived by adding the values for the other players and then dividing the sum by the total number of players in the coalition. 
Although conceptually, additively separable and fractional hedonic games are similar, their formal properties are quite different. As neither of the two models is obviously superior, this shows how slight modeling decisions may affect the formal analysis. 
For example, in unweighted and undirected graphs, the grand coalition is trivially core stable for additively separable hedonic games. On the other hand, this is not the case for fractional hedonic games.\footnote{Examples of this kind show that there are additively separable hedonic games which cannot be represented as a fractional hedonic game, and \emph{vice versa}.} A fractional hedonic game approach to social networks with only non-negative weights may help detect like-minded and densely connected communities.  In comparison, when the network only has non-negative weights for the edges, any reasonable solution for the corresponding additively separable hedonic game returns the grand coalition, which is not informative.

The difference between additively separable and fractional hedonic games is reminiscent of some issues in \emph{population ethics} (see, e.g., \citealp{ART17a}), which concerns the evaluation of states of the world with different numbers of individuals alive. Two prominent principles in population ethics are \emph{total utilitarianism} and \emph{average utilitarianism}. The former claims that a state of the world is better than another if it has a higher sum of individual utility, whereas the latter ranks states by the average utility enjoyed by the individuals. Many of the paradoxes of population ethics are analogous to properties of hedonic games. For example, average utilitarianism and fractional hedonic games both suffer from the `Mere Addition Paradox' \citep{Parf84a}, according to which a state of the world (resp., a coalition) can become less attractive if we add to it another positive-utility player (but whose utility is lower than the current average). Note, however, that this paradox cannot occur for simple and symmetric fractional hedonic games.

\citet{Olse12a} examined a variant of simple symmetric fractional hedonic games and investigated the computation and existence of Nash stable partitions. In the games he considered, however,
every maximal matching is core stable and every perfect matching is a best possible outcome, even if large cliques are present in the graph. By contrast, in our setting players have an incentive to form large cliques.

Fractional hedonic games are different from, but related to, another class of hedonic games called \emph{social distance games}, which were introduced by \citet{BrLa11a}. 
In social distance games, a player's utility from another player's presence in a coalition is inversely proportional to the distance between them in the subgraph induced by the coalition.
In many situations, one does not derive an additional benefit from friends of friends and may in fact prefer to minimize the fraction of people one does not agree with or have direct connections with. In such scenarios, fractional hedonic games are more suitable than social distance games.

Fractional hedonic games also exhibit some similarity with the segregation and status-seeking models considered by \citet{MiWi01a} and \citet{LaDi12a}. 
Group formation models based on types were first considered by \citet{Sche71a}. 

Independently from our work, \citet{FLN12a} have also considered the hedonic games framework as an approach to graph clustering. However, their research does not relate to core and strict core stability and they study different classes of hedonic games. 

Since their inception in the conference version of this paper \citep{ABH13a}, fractional hedonic games have already sparked some followup work.
\citet{AGG+15b} took a welfare maximization approach to fractional hedonic games and considered the complexity of finding partitions that maximize utilitarian or egalitarian social welfare.  
\citet{BFF+14a} analyze fractional hedonic games from the viewpoint of non-cooperative game theory. They show that Nash stable partitions may not exist in the presence of negative valuations. Furthermore, they give bounds on the price of anarchy and the price of stability.
\citet{BFF+15a} and \citet{KKP16a} further examine the price of (Nash) stability in simple symmetric fractional hedonic games, and \citet{EFF16a} consider the price of Pareto optimality. \citet{BBS14a} presented computational results for various stability concepts for fractional hedonic games. \citet{PeEl15a} identified structural features for various classes of hedonic games for which finding stable partitions is NP-hard. Their analysis implies several hardness results for fractional hedonic games.

\citet{LiWe17a} discuss simple symmetric fractional hedonic games (which they call \emph{popularity games}) as a model for the formation of \emph{socially cohesive groups}. They argue that in social networks, groups form based both on individual needs and desires, and on the group's resistance to disruption. Formally, individuals wish to maximize their \emph{popularity} in the group (measured by the fraction of the group that they are connected to in the network), while insisting that the group is core stable. \citet{LiWe17a} identify several classes of networks in which the grand coalition is core stable and give some necessary conditions in terms of structural cohesiveness measures. They also show that it is NP-hard to decide whether the grand coalition is core stable in a given simple symmetric fractional hedonic game, and present and evaluate some heuristics for this question.

\citet{WHN17a} use fractional hedonic games as a model of the formation of jurisdictions, noting that the arrangement of political boundaries involves a tradeoff between efficiencies of scale and of geographic heterogeneity. In examining the core of their weighted symmetric FHGs, they randomly sampled such games and found that all their samples admit a non-empty core, suggesting that the problem of non-existence of stable outcomes is not a problem in practice. They also introduce a heuristic algorithm for finding a core stable outcome, which proceeds by repeatedly searching for a blocking coalition (using integer programming) and myopically implementing the corresponding coalitional deviation. They then apply this algorithm to specific games modeled using historical data from Japan about political boundary changes, finding that their algorithm always found a core solution and typically terminated within a few hours, even for games containing approximately 1,000 players. They conclude that FHGs are ``an appropriate way of modeling mergers and splits of political jurisdictions'' and that they ``might also be used to model the formation of students into schools or classes, workers into unions, or public employees into different pension funds''.

\section{Preliminaries}\label{sec:prelim}

Let~$N$ be a set~$\set{1,\dots,n}$ of \emph{agents} or \emph{players}. A \emph{coalition} is a subset of the players. For every player~$i\in N$, we let~$\mathcal N_i$ denote the set $\set{S\subseteq N\colon i\in S}$ of coalitions~$i$ is a member of. 
Every player $i$ is equipped with a reflexive, complete, and transitive \emph{preference relation} $\pref_i$ over the set $\mathcal N_i$. We use~$\succ_i$ and~$\sim_i$ to refer to the strict and indifferent parts of $\succsim_i$, respectively. If~$\pref_i$ is also anti-symmetric we say that~$i$'s preferences are \emph{strict}. A coalition~$S\in\mathcal N_i$ is \emph{acceptable} for a player~$i$ if $i$ weakly prefers~$S$ to being alone, \ie $S\succsim_i{\set i}$, and \emph{unacceptable} otherwise.
A \emph{hedonic game} is a pair $(N,\pref)$, where ${\pref}=(\pref_1,\dots,\pref_{n})$ is a profile of preference relations~$\pref_i$, modeling the preferences of the players. 

A \emph{valuation function} of a player~$i$ is a function~$v_i\colon N\to\mathbb R$ assigning a real value to every player.
A hedonic game~$(N,\pref)$ is said to be a \emph{fractional hedonic game (FHG)} if, for every player~$i$ in~$N$, there is a valuation function~$v_i$  such that for all coalitions $S,T\in \mathcal N_i$,
\begin{align*}
	S\pref_i T \text{ if and only if } v_i(S)\ge v_i(T)\text,
\end{align*} 
where, for all $S\in\mathcal N_i$,
\begin{align*}
	v_i(S)	& =	\frac{\sum_{j\in S}v_i(j)}{|S|}\text.
\end{align*}
Hence, every FHG can be compactly represented by a tuple of valuation functions $v = (v_1,\dots,v_n)$.
It can be shown that every FHG can be induced by valuation functions with $v_i(i) = 0$ for all $i\in N$.\footnote{Let $v_i'(j) = v_i(j) - v_i(i)$ for all $i,j\in N$. Then, $v'_i(i) = 0$, for all $i\in N$ and $S\subseteq N$, $v_i(S) = \sum_{j\in S} \frac{v_i(j)}{|S|} = \sum_{j\in S} \frac{v'_i(j) + v_i(i)}{|S|} = \sum_{j\in S} \frac{v_i'(i)}{|S|} + v_i(i)$. Thus, for all $S,T\subseteq N$, $v_i(S) \ge v_i(T)$ if and only if $v'_i(S) \ge v_i'(T)$.} Thus, we assume $v_i(i) = 0$ throughout the paper.
We will frequently associate FHGs with weighted digraphs $G = (N,N\times N, v)$ where the weight of the edge $(i,j)$ is $v_i(j)$, \ie the valuation of player $i$ for player $j$. 

Two key restriction on the valuations in an FHG will be of particular interest to us.
\begin{itemize}
\item An FHG is \emph{symmetric} if $v_i(j)=v_j(i)$ for all $i,j\in N$.
\item An FHG is \emph{simple} if $v_i(j)\in\{0,1\}$ for all $i,j\in N$.
\end{itemize}

Simple FHGs have a natural appeal. Politicians may want to be in a party which maximizes the fraction of like-minded members and, for whatever reasons, people may want to be with as large a fraction of people of the same ethnic or social group. These situations can be fruitfully modeled and understood as a simple FHG by having  the players assign value~$1$ to like-minded or ethnically similar people, and~$0$ to others.

A simple FHG $(N,\succsim_i)$ can be represented by a digraph~$(V,A)$ in which $V=N$ and $(i,j)\in A$ if and only if $v_i(j)=1$. Similarly, if $(N,\succsim_i)$ is both symmetric and simple, it can be represented by an (undirected) graph $(V,E)$ such that $V=N$ and $\set{i,j}\in E$ if and only if $v_i(j)=v_j(i)=1$. With this representation, we will often think of graphs and simple symmetric FHGs as the same thing.

The outcomes of hedonic games are \emph{partitions} of the players, also known as \emph{coalition structures}.  Given a partition~$\pi=\set{S_1,\dots,S_m}$ of the players,~$\pi(i)$ denotes the coalition in~$\pi$ of which player~$i$ is a member. We also write~$v_i(\pi)$ for $v_i(\pi(i))$, which is the utility that $i$ receives in $\pi$, reflecting the hedonic nature of the games we consider. By the same token we obtain preferences over partitions from preferences over coalitions.
We refer to $\{N\}$ as the \emph{grand coalition}.

Hedonic games are analyzed using \emph{solution concepts}, which formalize desirable or optimal ways in which the players can be partitioned (as based on the players' preferences over the coalitions). If a partition satisfies a given solution concept, it is considered to be \emph{stable} in the sense of the solution concept. 
	A basic requirement for partitions to be acceptable for all players is individual rationality. A partition $\pi$ is \emph{individually rational} if each player weakly prefers his coalition in $\pi$ over being alone, \ie for each $i\in N$, $\pi(i)\pref_i \{i\}$. Intuitively, if a partition is not individually rational, it cannot be stable, since one player has an incentive to leave his current coalition and be on his own instead.
	In this paper, we will focus on two of the most prominent solution concepts, the \emph{core} and the \emph{strict core}, taken from cooperative game theory. We say that a coalition $S \subseteq N$ \emph{(strongly) blocks} a partition~$\pi$, if each
	player $i \in S$ strictly prefers~$S$ to his current coalition~$\pi(i)$ in
	the partition~$\pi$, \ie if $S\succ_i\pi(i)$ for all $i\in S$. A partition that does not admit a blocking coalition is said to be in the \emph{core}. In a similar vein, we say that a coalition $S \subseteq N$ \emph{weakly blocks} a partition~$\pi$,
	if each player $i \in S$ weakly prefers~$S$ to~$\pi(i)$ and there exists at least one player $j \in S$ who strictly prefers~$S$ to his
	current coalition~$\pi(j)$, \ie $S\succsim_i\pi(i)$ for all $i\in N$ and $S\succ_j\pi(j)$ for some~$j\in S$. A partition does not admit a weakly blocking coalition is in the \emph{strict core}.
	Cleary, the strict core is a subset of the core. Moreover, the core is a subset of the set of individually rational coalitions, since every coalition that is not individually rational is blocked by a singleton coalition.

\begin{example}
		Consider the simple and symmetric FHG based on the graph depicted in Figure~\ref{fig:ex:hedonic_4}. In the grand coalition, the utility of each player is $\frac{1}{2}$. There is only one core stable partition: $\set{\set{1,2,3},\set{4,5,6}}$, which yields utility $\frac{2}{3}$ for each player.
Observe that, when interpreted as an additively separable hedonic game, this is not a stable partition, as the grand coalition would yield a higher utility---namely,~$3$ instead of~$2$---to all and thus be a blocking coalition.

	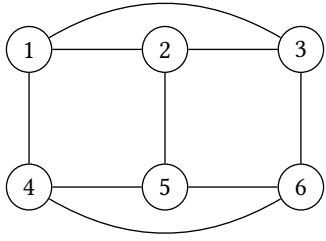
\begin{figure}[t]
	\centering
	\scalebox{1}{
	\begin{tikzpicture}[<->,auto, line width = 0.5pt]

		\def \radius {12ex}

		\draw 
			node[circle,fill=white,draw](1){$1$}
			++(0:\radius)
			node[circle,fill=white,draw](2){$2$}
			++(0:\radius)
			node[circle,fill=white,draw](3){$3$}
			++(-90:\radius)
			node[circle,fill=white,draw](6){$6$}
			++(180:\radius)
			node[circle,fill=white,draw](5){$5$}
			++(180:\radius)
			node[circle,fill=white,draw](4){$4$}			
		;

\path[draw,line width = 0.5pt,-] (1) -- (2);
\path[draw,line width = 0.5pt,-] (1) -- (4);
\path[draw,line width = 0.5pt,-] (2) -- (3);
\path[draw,line width = 0.5pt,-] (2) -- (5);
\path[draw,line width = 0.5pt,-] (4) -- (5);
\path[draw,line width = 0.5pt,-] (6) -- (5);
\path[draw,line width = 0.5pt,-] (6) -- (3);

\path[draw,line width = 0.5pt,-]	(3) edge [bend right] (1);
\path[draw,line width = 0.5pt,-]	(4) edge [bend right] (6);
	\end{tikzpicture}
	}

	\caption{Example of a simple and symmetric FHG. The only core stable partition is $\set{\set{1,2,3},\set{4,5,6}}$. In contrast, if the graph represents an additively separable hedonic game, then the partition consisting of the grand coalition is the only core stable partition.}\label{fig:ex:hedonic_4}
	\end{figure}
\end{example}

Some standard graph-theoretic terminology will be useful. The complete undirected graph on $n$ vertices is denoted by~$K_n$; an undirected cycle on $n$ vertices is denoted by~$C_n$. 
A graph~$(V,E)$ is said to be \emph{$k$-partite} if~$V$ can be partitioned into~$k$ independent sets $V_1,\dots,V_k$, \ie $v,w\in V_i$ implies $\set{v,w}\notin E$. A $k$-partite graph is \emph{complete} if for all $v\in V_i$ and $w\in V_j$ we have $\set{v,w}\in E$ if and only if $i\neq j$.

\section{Negative Results}
\label{sec:negative}

For any game-theoretic solution concept, two natural questions are whether a solution is always guaranteed to exist, and whether a solution can be found efficiently. For the core of FHGs, the answer to both of these questions turns out to be negative if we do not restrict the structure of the underlying graph.

For unrestricted FHGs (that is, if we allow any weighted digraph), it is easy to construct examples whose core is empty (see \figref{fig:core}). Even if we require the game to be symmetric, it is not difficult to find examples with empty core (see \figref{fig:symcore}). Of course, the examples given are specifically constructed so as to not admit a core stable outcome, and it is plausible that ``most'' FHGs do admit one. Indeed, \citet{WHN17a} randomly sampled 10 million symmetric FHGs similar to the one shown in \figref{fig:symcore}, and all of them had a non-empty core.

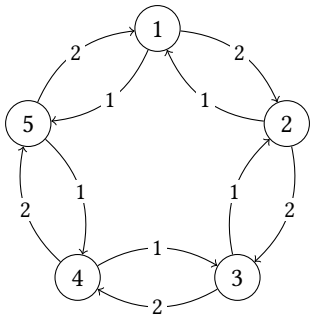
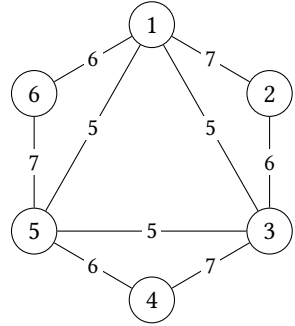
\begin{figure}[t]
	\centering
	\begin{subfigure}[b]{0.42\textwidth}
		\centering
		\begin{tikzpicture}
		
		\def \radius {12ex}
		
		\draw
		(0,0)
		+(90:\radius)
		node[circle,fill=white,draw](1){$1$}
		+(90+72:\radius)
		node[circle,fill=white,draw](2){$5$}
		+(90+2*72:\radius)
		node[circle,fill=white,draw](3){$4$}
		+(90+3*72:\radius)
		node[circle,fill=white,draw](4){$3$}
		++(90+4*72:\radius)
		node[circle,fill=white,draw](5){$2$}
		;
		
		\path[every node/.style={font=\sffamily\small},->]
		(1) edge [bend left] node[fill=white,inner sep=1.5pt]  {$1$} (2)
		(2) edge [bend left] node[fill=white,inner sep=1.5pt]  {$1$} (3)
		(3) edge [bend left] node[fill=white,inner sep=1.5pt]  {$1$} (4)
		(4) edge [bend left] node[fill=white,inner sep=1.5pt]  {$1$} (5)
		(5) edge [bend left] node[fill=white,inner sep=1.5pt]  {$1$} (1)
		;
		
		\path[every node/.style={font=\sffamily\small},->]
		(1) edge [bend left] node[fill=white,inner sep=1.5pt]  {$2$} (5)
		(2) edge [bend left] node[fill=white,inner sep=1.5pt]  {$2$} (1)
		(3) edge [bend left] node[fill=white,inner sep=1.5pt]  {$2$} (2)
		(4) edge [bend left] node[fill=white,inner sep=1.5pt]  {$2$} (3)
		(5) edge [bend left] node[fill=white,inner sep=1.5pt]  {$2$} (4)
		;
		
		\end{tikzpicture}
		\caption{An FHG given by a weighted digraph whose core is empty. All missing edges have weight $-10$.}
		\label{fig:core}
	\end{subfigure}\qquad%
	\begin{subfigure}[b]{0.425\textwidth}
		\centering
		\begin{tikzpicture}
		
		\def \n {6}
		\def \radius {12ex}

		\node[circle,draw] (1) at ({360/\n * (1.5)}:\radius) {$1$};
		\node[circle,draw] (2) at ({360/\n * (0.5)}:\radius) {$2$};
		\node[circle,draw] (3) at ({360/\n * (5.5)}:\radius) {$3$};
		\node[circle,draw] (4) at ({360/\n * (4.5)}:\radius) {$4$};
		\node[circle,draw] (5) at ({360/\n * (3.5)}:\radius) {$5$};
		\node[circle,draw] (6) at ({360/\n * (2.5)}:\radius) {$6$};

		\path[every node/.style={font=\sffamily\small}]
		(1) edge node[fill=white,inner sep=1.5pt] {$7$} (2)
		edge node[fill=white,inner sep=1.5pt] {$5$} (3)
		(2) edge node[fill=white,inner sep=2pt] {$6$} (3)
		(3) edge node[fill=white,inner sep=1.5pt] {$7$} (4)
		edge node[fill=white,inner sep=1.5pt] {$5$} (5)
		(4) edge node[fill=white,inner sep=1.5pt] {$6$} (5)
		(5) edge node[fill=white,inner sep=2pt] {$7$} (6)
		edge node[fill=white,inner sep=1.5pt] {$5$} (1)
		(6) edge node[fill=white,inner sep=1.5pt] {$6$} (1)
		;
		\end{tikzpicture}
		
		\caption{A symmetric FHG given by weighted graph whose core is empty. All missing edges have weight $-24$.}
		\label{fig:symcore}
	\end{subfigure}
	\caption{Examples of (symmetric) FHGs with empty core.}
	\label{fig:emptycore}
\end{figure}

If we consider the \emph{strict} core, it is also easy to construct an example of a simple symmetric FHG whose strict core is empty: consider the FHG represented by a cycle of size five ($C_5$). For $C_5$, any coalition of size three or more admits a blocking coalition of size two, and partitions consisting of one singleton and two coalitions of size two also admit a weakly blocking coalition.

It was open for some time whether there is a simple symmetric FHG whose core is empty. Here, we present such an example, consisting of a total of 40 players, see Figure~\ref{fig:simsymcore}. It is unclear whether smaller examples exist. Note that this result subsumes all of the non-existence results mentioned above.

\begin{figure}[t]
	\centering
	\begin{tikzpicture}[-, line width = 0.5pt]
	
	\tikzset{
		position/.style args={#1:#2 from #3}{
			at=(#3.#1), anchor=#1+180, shift=(#1:#2)
		}
	}
	
	\def \n {10}
	\def \k {9}
	\def \radius {18ex}
	\def \nodedist {.9ex}
	\def \margin {8} % margin in angles, depends on the radius
	
	\draw (0,0) circle (\radius);

	\node[circle,draw,minimum size = 6.5*\nodedist,fill = white] (a_1) at ({360/\n * (2.5)}:\radius) {};
	\node[circle,draw,minimum size = 6.5*\nodedist,fill = white] (a_2) at ({360/\n * (0.5)}:\radius) {};
	\node[circle,draw,minimum size = 6.5*\nodedist,fill = white] (a_3) at ({360/\n * (8.5)}:\radius) {};
	\node[circle,draw,minimum size = 6.5*\nodedist,fill = white] (a_4) at ({360/\n * (6.5)}:\radius) {};
	\node[circle,draw,minimum size = 6.5*\nodedist,fill = white] (a_5) at ({360/\n * (4.5)}:\radius) {};
	
	\node[circle,draw,minimum size = 6.5*\nodedist,fill = white] (b_1) at ({360/\n * (1.5)}:\radius) {};
	\node[circle,draw,minimum size = 6.5*\nodedist,fill = white] (b_2) at ({360/\n * (9.5)}:\radius) {};
	\node[circle,draw,minimum size = 6.5*\nodedist,fill = white] (b_3) at ({360/\n * (7.5)}:\radius) {};
	\node[circle,draw,minimum size = 6.5*\nodedist,fill = white] (b_4) at ({360/\n * (5.5)}:\radius) {};
	\node[circle,draw,minimum size = 6.5*\nodedist,fill = white] (b_5) at ({360/\n * (3.5)}:\radius) {};
	
	\node[circle,draw,minimum size = 6.5*\nodedist] (c_1) at ($(a_1)!0.5!(b_4)$) {};
	\node[circle,draw,minimum size = 6.5*\nodedist] (c_2) at ($(a_2)!0.5!(b_5)$) {};
	\node[circle,draw,minimum size = 6.5*\nodedist] (c_3) at ($(a_3)!0.5!(b_1)$) {};
	\node[circle,draw,minimum size = 6.5*\nodedist] (c_4) at ($(a_4)!0.5!(b_2)$) {};
	\node[circle,draw,minimum size = 6.5*\nodedist] (c_5) at ($(a_5)!0.5!(b_3)$) {};
	
	\foreach \x in {1,2,3,4,5}
	{
		
		\foreach \y in {1,2,3}
		{
			\node[circle,draw,position=42+120*\y-72*\x:{-2.7*\nodedist} from a_\x] (a_\x\y) {};
			\node[circle,draw,position=54+42+120*\y-72*\x:{-2.7*\nodedist} from c_\x] (c_\x\y) {};
			
		}
		
		\node[circle,draw,position=0-36+72-72*\x:{-2.7*\nodedist} from b_\x] (b_\x1) {};
		\node[circle,draw,position=180-36+72-72*\x:{-2.7*\nodedist} from b_\x] (b_\x2) {};

		\draw (a_\x1) -- (a_\x2);
		\draw (a_\x1) -- (a_\x3);	
		\draw (a_\x3) -- (a_\x2);
		
		\draw (c_\x1) -- (c_\x2);
		\draw (c_\x1) -- (c_\x3);	
		\draw (c_\x3) -- (c_\x2);
		
		\draw (b_\x1) -- (b_\x2);
		
	}
	
	\path[every node/.style={font=\sffamily\small}]
	
	(a_1) edge [near end] node {} (c_1) 
	(a_2) edge [near end] node {} (c_2) 
	(a_3) edge [near end] node {} (c_3) 
	(a_4) edge [near end] node {} (c_4) 
	(a_5) edge [near end] node {} (c_5) 
	
	(b_1) edge [near end] node {} (c_3) 
	(b_2) edge [near end] node {} (c_4) 
	(b_3) edge [near end] node {} (c_5) 
	(b_4) edge [near end] node {} (c_1) 
	(b_5) edge [near end] node {} (c_2) 
	;
	\end{tikzpicture}
	\caption{A simple symmetric FHG with 40 players whose core is empty (see Theorem~\ref{thm:simsymcore}). An edge between two encircled cliques indicates that every vertex of one clique is connected to every vertex of the other.}	
	\label{fig:simsymcore}
\end{figure}

\begin{theorem}\label{thm:simsymcore}
	In simple and symmetric FHGs, the core can be empty.
\end{theorem}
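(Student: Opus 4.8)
The plan is to argue by contradiction: suppose the simple and symmetric FHG $G$ of Figure~\ref{fig:simsymcore} admits a core stable partition $\pi$, and derive a coalition that blocks $\pi$.

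The first step is to collect the elementary facts about simple symmetric FHGs on which the argument rests. In a clique of size $m$ every member has utility $\frac{m-1}{m}$, which is strictly increasing in $m$; hence cliques have a strong incentive to grow, and a player adjacent to all members of a clique $S$ strictly prefers $S\cup\{i\}$ to being alone or to lying in a smaller clique. Since $G$ is built from cliques of size two and three joined pairwise by complete bipartite graphs, only finitely many ``shapes'' of coalition can meet a bounded region of $G$, and for each shape the utility of every vertex can be written down explicitly (for instance, two triangles completely joined form a $K_6$ giving utility $\frac56$, whereas a triangle completely joined to an edge forms a $K_5$ giving utility $\frac45$). Combining these computations with individual rationality and with the absence of blocking coalitions on two or three vertices, one shows that in any core stable $\pi$ the coalition containing a given clique of $G$ must belong to a short list of candidates — essentially the clique on its own, or the clique merged with an adjacent clique it is completely joined to, possibly with a few extra vertices.

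The core of the proof is then a finite case analysis exploiting the cyclic symmetry of $G$: the graph has an automorphism of order five permuting five isomorphic ``sectors'', and these sectors interact through the edges of $G$ in a cyclic way, so that the local configuration chosen by $\pi$ in one sector restricts those available in the next. For every admissible local configuration I would show that it either already contains a blocking coalition or forces the neighbouring sector into a configuration which, propagated around the ring, is self-inconsistent; since five is odd, no globally consistent choice exists, and the mismatch between two adjacent sectors yields the desired blocking coalition. The graph is designed precisely so that the locally optimal behaviour of one clique — being absorbed into the larger clique it dominates — frustrates a clique on the other side of the sector, and every attempt to satisfy that clique merely pushes the deficit onward.

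The main obstacle, and the bulk of the work, is the exhaustiveness of this analysis rather than any single idea. One must in particular rule out partitions whose coalitions straddle sector boundaries or mix the various clique-types in unexpected proportions, together with blocking coalitions that are themselves not contained in a single sector; such cases gain nothing from the symmetry reduction and have to be handled directly (they could equally be checked by computer). It is presumably this delicacy that explains why no example with fewer than forty players is currently known.
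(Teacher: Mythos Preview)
Your high-level plan matches the paper's: exhibit the specific $40$-vertex graph, assume a core partition $\pi$ exists, exploit the five-fold rotational symmetry, and reach a contradiction by case analysis. But what you have written is a description of a proof rather than a proof, and the places where you wave your hands are exactly where the substance lies.

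The paper's argument rests on two structural lemmas that you do not isolate: for every $l$, the triangle $A_l$ lies entirely inside a single coalition of $\pi$, and likewise the triangle $C_l$. These are not automatic consequences of ``cliques want to grow''; showing that $A_1$ stays together already takes roughly a page, tracking which players can possibly achieve utility $\ge \tfrac56$ or $\ge \tfrac45$ (using that $A_l\cup C_l$ is a $K_6$ while $A_l\cup B_l$, $A_l\cup B_{l-1}$, and $B_{l-2}\cup C_l$ are $K_5$'s), and ruling out several awkward splits by exhibiting explicit blocking coalitions. Only after these lemmas is your ``short list of candidates'' actually available: each $C_l$ must sit either in $A_l\cup C_l$ or in $B_{l-2}\cup C_l$, and $\pi(A_l)$ contains only neighbours of $A_l$.

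Your ``since five is odd'' parity heuristic is also not how the contradiction finally arises. The paper first dispatches the uniform configuration $A_l\cup C_l\in\pi$ for all $l$ (blocked by $A_1\cup B_1\cup B_5$), and then shows that once some $A_1$ sits instead with at least three players of $B_1\cup B_5$, this forces $B_4\cup C_1\in\pi$, hence $A_4\cup C_4\in\pi$, hence $A_2\cup C_2$, $A_3\cup C_3$, $A_5\cup C_5\in\pi$, whereupon $A_3\cup B_2\cup B_3$ blocks. This is a directed forcing chain, not a two-colouring obstruction; the length five matters only in that the chain eventually closes on itself.
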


The proofs of this and other results can be found in the appendix. \\

Now that we have seen that the core of an FHG can be empty, we can move on to some computational questions. The natural problem to consider is to \emph{find} a core stable partition. Since such an partition does not always exist, we can conveniently consider the \emph{decision} problem of whether the core of a given FHG is non-empty. It turns out that, without imposing further restrictions, answering this question is computationally difficult, and in particular NP-hard. The examples with empty core that we have seen make for convenient gadgets in hardness reductions. 

Notice that the problem of checking whether an FHG has non-empty core is not obviously contained in the class NP: the natural certificate would be a core stable partition, but it is not at all clear how to check whether a given partition is in the core; naively, this would require checking all $2^n$ possible blocking coalitions. In fact, this problem of verifying whether a partition is in the core is already coNP-complete (we sketch a proof in the appendix, an alternative proof appears in \citealp{LiWe17a}). The natural complexity class for the non-emptiness problem is $\Sigma_2^p = \text{NP}^\text{NP}$ which captures the alternation of quantifiers: ``does there \emph{exist} a partition $\pi$ such that \emph{for all} coalitions $S$, the coalition does not block?''. Indeed, we can show that the non-emptiness problem is complete for this class, even for simple and symmetric FHGs.

\begin{theorem}\label{thm:simsymcoreNP}
	Checking whether a simple symmetric FHG has an empty core is $\Sigma_2^p$-complete.
\end{theorem}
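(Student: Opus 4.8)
The plan is to establish membership and hardness for the second level of the polynomial hierarchy, and it is cleanest to reason through the complementary (non-emptiness) problem first, since that is where membership is transparent. A candidate solution is a partition $\pi$, which has polynomial size, and---as already noted in the excerpt---deciding whether a \emph{given} $\pi$ lies in the core is coNP-complete, a blocking coalition being a short certificate of instability. Hence ``$G$ has a non-empty core'' is expressible as $\exists\pi\,\forall S\,[\,S\text{ does not block }\pi\,]$, placing it in $\Sigma_2^p=\mathrm{NP}^{\mathrm{NP}}$; complementing, deciding that the core is \emph{empty} lies in $\Pi_2^p$. These two decision problems are interchangeable under complementation, so it suffices to prove that the problem is complete for its level. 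I will obtain hardness by reducing from the canonical second-level problem whose yes-instances are the true formulas $\Phi=\exists X\,\forall Y\,\phi(X,Y)$ with $\phi$ in $3$-CNF, designing $G_\Phi$ so that $\Phi\text{ true}\iff \mathrm{core}(G_\Phi)\neq\emptyset$; equivalently $\Phi\text{ false}\iff \mathrm{core}(G_\Phi)=\emptyset$. If one prefers the reduction phrased directly in terms of the emptiness problem of the theorem, one reduces instead from the complementary $\Pi_2^p$-complete problem $\forall X\,\exists Y\,\psi(X,Y)$, for which the very same construction yields ``instance true $\iff \mathrm{core}(G_\Phi)=\emptyset$''; this gives the emptiness-flavored statement and is the form I would present, while the membership argument above shows emptiness lies in $\Pi_2^p$.

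The core of the work---and the main obstacle---is the reduction, carried out entirely within simple symmetric FHGs, i.e.\ unweighted undirected graphs. I would assemble $G_\Phi$ from three kinds of components. First, for each variable $x\in X$ a \emph{selection gadget}: a small symmetric subgraph admitting exactly two locally stable configurations, read as $x=\mathrm{true}$ or $x=\mathrm{false}$, so that every core-stable partition of $G_\Phi$ induces a well-defined assignment $\alpha$ to $X$. Second, a \emph{verifier} subgraph encoding the variables $Y$ and the clauses of $\phi$, wired so that a coalition can profitably deviate precisely when it certifies a \emph{falsifying} pair---a $Y$-assignment together with a clause of $\phi$ that $(\alpha,Y)$ leaves unsatisfied. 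Third, several disjoint copies of the $40$-player empty-core gadget $H$ of Theorem~\ref{thm:simsymcore}, attached to the verifier by ``escape'' edges. The copies of $H$ act as destabilizers: on their own they admit no core-stable partition, but the escape edges allow a copy to be absorbed into a stable configuration exactly when the verifier is in a ``no falsifying certificate'' state. Consequently, an assignment $\alpha$ can be completed to a core-stable partition if and only if no falsifying $Y$ exists, i.e.\ if and only if $\forall Y\,\phi(\alpha,Y)$ holds.

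For correctness, in one direction a true $\Phi$ yields a witnessing assignment; setting the selection gadgets to $\alpha=X$, routing every copy of $H$ through its escape, and leaving the verifier satisfied gives a partition with no blocking coalition, since any would-be blocker would encode a falsifying $Y$, which does not exist. Conversely, a core-stable $\pi$ forces each selection gadget into one of its two configurations, yielding $\alpha$; stability rules out every falsifying certificate, so $\forall Y\,\phi(\alpha,Y)$ holds and $\alpha$ witnesses $\Phi$. The delicate, and hardest, part is making all of this work with \emph{average} valuations over $0/1$ edges: I must choose clique sizes and connection patterns so that (i) each selection gadget robustly admits exactly its two intended stable configurations and no third one, (ii) the average-value thresholds make a deviation strictly profitable for \emph{every} member of a purported blocking coalition exactly on falsifying certificates, and (iii) coupling the copies of $H$ to the verifier introduces neither spurious stable partitions nor spurious blocking coalitions. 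Establishing the absence of unintended blocking coalitions---of which there are exponentially many---is the crux; here the rigidity of the empty-core gadget of Theorem~\ref{thm:simsymcore} does the heavy lifting, as it forbids stabilizing the ``bad'' configurations by any means whatsoever. Together with the membership argument, this completes the plan.
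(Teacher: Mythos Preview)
Your membership argument is correct and essentially what the paper does. For hardness, however, what you have written is a plan rather than a proof: you give no concrete selection gadget, no concrete verifier encoding of the $Y$-variables and clauses, and no argument that unintended blocking coalitions are excluded. You yourself flag ``the delicate, and hardest, part'' without resolving it. In simple symmetric FHGs these constructions are genuinely difficult because utilities are averages over $0/1$ edges, so calibrating thresholds so that a coalition blocks \emph{exactly} when it encodes a falsifying certificate is the entire problem; absent an explicit construction, this is a real gap.

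The paper's route is quite different from your QBF sketch, and its key idea is one you have not hit upon. It reduces from MAXMIN-CLIQUE \citep{KoLi95a} rather than QBF, and it does so in two stages. First it proves $\Sigma_2^p$-hardness for a relaxed model with \emph{supported players}---players $i$ who, when alone, receive a specified subsidy $(l_i-1)/l_i$. Subsidies make it straightforward to enforce that a non-singleton coalition containing a supported player must be exactly that player's neighbourhood, which in turn forces the intended ``selection'' structure and makes blocking coalitions correspond precisely to large cliques in the MAXMIN-CLIQUE instance. Only afterwards is there a second, short reduction eliminating the subsidies: each supported player with parameter $l_i$ is augmented by a fresh $(l_i{-}1)$-clique, which simulates the subsidy inside a genuine simple symmetric FHG. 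The $40$-player gadget $H$ is indeed used, but in a specific way: it is attached at a vertex of the $B_2$-pair, and the argument relies on the observation (stated as a remark after Theorem~\ref{thm:simsymcore}) that deleting one $B_2$-player from $H$ yields a game whose core is non-empty. This is precisely what makes your ``escape'' idea work in both directions, and it is not something one can assume without checking. If you want to pursue a QBF-style reduction, you will almost certainly need an analogue of the supported-player device to tame the threshold calculations; without it, building bistable selection gadgets and clause-checkers directly in the $0/1$ average model is where proofs of this type typically break.
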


The proof of this statement is a rather involved reduction from the MAX-MIN CLIQUE problem \citep{KoLi95a}, and uses a notion of `subsidies' to players who are put in singleton coalitions. FHGs are not the only class of hedonic games for which checking non-emptiness of the core is $\Sigma_2^p$-complete: additively separable and Boolean hedonic games are other examples \citep{Woeg13a,Pete15a}.

As \citet{Woeg13a} argues, the fact that finding a core stable outcome is $\Sigma_2^p$-hard means that this solution concept is computationally much harder to handle than solution concepts like Nash stability, where the analogous decision problem is contained in NP. Indeed, recent advances in SAT and ILP solvers mean that many NP-complete problems of moderate size are now easily solvable in practice; this is not the case for $\Sigma_2^p$-complete problems.

However, \citet{WHN17a} present a heuristic algorithm that attempts to find a core stable partition by repeatedly searching for a blocking coalition using an ILP solver and implementing this deviation. They find that this approach is reasonably efficient for real-world examples with up to 1,000 players.
		
\section{Positive Results}
\label{sec:positive}

In this section, we present a number of subclasses of simple and symmetric FHGs for which the core is non-empty. Since these games can be represented by unweighted and undirected graphs, we will focus on different graph classes. 
In particular we show existence results for the following classes of graphs: 
graphs with degree at most two, forests, multi-partite complete graphs, bipartite graphs which admit perfect matchings, regular bipartite graphs, and graphs with girth at least five.
All of our proofs are constructive in the sense that we show that the core is non-empty by outlining a way to construct a partition in the core; in each case this construction can be performed in polynomial time. 
	
\subsection{Graphs with bounded degree}

If a graph is extremely sparse, then intuitively it does not admit interesting blocking coalitions. Indeed, we have the following result.

\begin{theorem}
	For simple and symmetric FHGs represented by graphs of degree at most 2, the core is non-empty. 	
\end{theorem}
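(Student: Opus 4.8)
The plan is to exploit the fact that a graph $G$ of maximum degree at most $2$ is a disjoint union of isolated vertices, paths $P_k$, and cycles $C_k$, and to construct a core stable partition component by component. The first thing I would establish is a general reduction: in any simple symmetric FHG, if a coalition $S$ strongly blocks a partition $\pi$, then some subset $S'\subseteq S$ that is connected in the induced subgraph $G[S]$ also blocks $\pi$. Since utilities are non-negative and every member of $S$ strictly gains, no member of $S$ can be isolated in $G[S]$; and if $G[S]$ is disconnected, picking any connected component $S'$ (a proper subset) gives, for each $i\in S'$, $v_i(S') = \deg_{S'}(i)/|S'| > \deg_{S'}(i)/|S| = v_i(S) > v_i(\pi(i))$, so $S'$ blocks; iterate. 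Hence a blocking coalition may be assumed connected, and therefore contained in a single connected component of $G$. Since a partition obtained as the union of partitions of the components keeps each player's coalition inside its component, it suffices to partition each component so that it admits no internal blocking coalition.

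The second ingredient is the utility computation under the degree bound. If $G$ has maximum degree at most $2$ and $G[S]$ is connected with $|S|\ge 2$, then $G[S]$ is a path $P_k$ or a cycle $C_k$ with $k=|S|$; in a path-coalition the two endpoints receive utility $1/k$ and the internal vertices receive $2/k$, and in a cycle-coalition every vertex receives $2/k$. In particular the only way a player can obtain utility strictly above $1/2$ is to be the center of a $P_3$-coalition or a vertex of a triangle-coalition, each giving $2/3$, which is the overall maximum.

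Now I would exhibit the partitions. For an isolated vertex, take the singleton. For a path or a cycle on an \emph{even} number of vertices, take a perfect matching, so every player has utility exactly $1/2$; any connected coalition of size $k\ge 2$ gives its endpoints (or, for a cycle-coalition, all members) utility at most $1/k\le 1/2$, so it cannot strictly improve every member, and no blocking coalition exists. For the triangle $C_3$, take the grand coalition, giving every player the maximum utility $2/3$. For a cycle $C_n$ with $n\ge 5$ odd, or a path $P_n$ with $n\ge 3$ odd, take one $P_3$ on three consecutive vertices together with a perfect matching of the remaining $n-3$ vertices (which form a path of even length); then matched vertices have utility $1/2$, the $P_3$-center has $2/3$, and the two $P_3$-endpoints have $1/3$. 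The key features are that \emph{no} vertex has utility $0$, and the two minimum-utility vertices (the $P_3$-endpoints, at distance $2$ in $P_n$ or $C_n$) are non-adjacent. From this one checks that no blocking coalition $S$ exists: an edge-coalition would need two adjacent vertices of utility $<\tfrac12$, impossible; a $P_3$-coalition would need both of its endpoints to have utility $<\tfrac13$, i.e.\ $0$, impossible; a $P_k$-coalition with $k\ge 4$ would need its endpoints to have utility $<1/k\le\tfrac14$, impossible; and the only cycle-coalition available inside $C_n$ is the whole cycle, which gives everyone $2/n<\tfrac12$ and so leaves every matched vertex worse off.

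The main obstacle is the odd case: since a perfect matching is unavailable, one must argue that the single unavoidable $P_3$ does not spawn a blocking coalition, and the verification above is the heart of the proof. Everything hinges on the two facts highlighted — that after inserting the $P_3$ no player is left with utility $0$, and that the two remaining low-utility players are non-adjacent — which simultaneously rule out the obvious two-player deviation and any deviation in which a low-utility player tries to become the center of a new $P_3$ or the endpoint of a larger path-coalition. The even cycles and paths, the triangle, and isolated vertices are routine by comparison.
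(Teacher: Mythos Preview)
Your proof is correct. Both you and the paper partition into coalitions of size at most $3$ and then rule out blocking coalitions, but the constructions and the style of analysis differ. The paper runs a greedy algorithm---first extract all triangles ($C_3$ components), then take a maximal matching in what remains---and argues directly that any blocking coalition containing a matched vertex must have size exactly $3$ and cannot be a triangle (since those were already removed), so some member fails to strictly improve. In particular the paper's partition may contain utility-$0$ singletons, and the argument hinges on showing these cannot be rescued by a size-$3$ deviation. Your approach instead handles each component explicitly (perfect matching for even components, one $P_3$ plus a perfect matching of the rest for odd components), ensuring that no non-isolated player has utility $0$, and then runs a case analysis over the possible \emph{shapes} of a connected blocking coalition (edge, $P_3$, longer path, full cycle). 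Your preliminary reduction to connected blocking coalitions is a clean device the paper does not state explicitly. The paper's argument is shorter; yours yields a more structured partition with higher minimum utility and makes the case analysis more transparent.
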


The proof employs a simple greedy algorithm partitioning the players into coalitions of size at most 3. Such a strategy is successful in this case since the connected components of graphs of degree at most 2 are paths and cycles, and in this situation, players are relatively happy in a small coalition together with an immediate neighbor.

Theorem~\ref{thm:simsymcore} shows that the positive result for the degree bound of 2 cannot be extended to a bound of 11 (which is the maximum degree of the example given there). It might be interesting to close this gap; but the case of degree 3 already seems difficult.

\subsection{Forests} 

The example of an FHG with empty core that we gave above depended crucially on an underlying cyclic structure of the game. If we do not allow such cycles, the problem disappears:

\begin{theorem}\label{th:trees}
	For simple and symmetric FHGs represented by undirected forests, the core is non-empty.
\end{theorem}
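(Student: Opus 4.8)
The plan is to induct on the number of vertices, constructing the core-stable partition by repeatedly peeling off a small ``star-shaped'' coalition sitting next to a leaf. Two observations power a reduction, both resting on the fact that in a simple symmetric FHG a player's utility in a coalition $S$ equals $d_S(i)/|S|$, where $d_S(i)$ counts $i$'s neighbours inside $S$. First, if a coalition $S$ blocks a partition $\pi$ and $S_1$ is a connected component of the subgraph $S$ induces, then each $i\in S_1$ has $v_i(S)=\tfrac{1}{|S|}\sum_{j\in S_1}v_i(j)\le v_i(S_1)$ (restricting to $S_1$ only shrinks the denominator and deletes zero-valued terms), so $S_1$ blocks $\pi$ as well; hence we may restrict attention to \emph{connected} blocking coalitions. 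Second, by the same computation, the restriction of a blocking coalition of a forest to a single tree component is again blocking, so a partition assembled from core-stable partitions of the tree components of a forest is core-stable. It therefore suffices to treat a single tree $T$, and we induct on $|V(T)|$.

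\emph{Base case:} if $T$ is a star $K_{1,m}$ (including the degenerate stars $K_1$ and $K_2$), output the grand coalition. Every connected coalition of size at least two contains the centre, whose utility $j/(j+1)$ in a sub-star on $j+1\le m+1$ vertices never exceeds its utility $m/(m+1)$ in the full star; and singletons yield utility $0$. So nothing blocks. \emph{Inductive step:} suppose $T$ is not a star. Root $T$ at any vertex and pick a non-leaf $v$ of maximum depth; then all children of $v$ are leaves, and $v$ has a non-leaf parent $p$ (were $v$ the root, or $p$ a leaf, $T$ would be a star). Let $L$ be the set of leaf-neighbours of $v$, so $m:=|L|\ge 1$ and $p$ is the only non-leaf neighbour of $v$. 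Put $C:=\{v\}\cup L$; since $p\notin C$, the tree $T-C$ is non-empty, so by induction it has a core-stable partition $\pi'$. Output $\pi:=\pi'\cup\{C\}$.

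It remains to verify $\pi$ is core-stable. In $\pi$, player $v$ has utility $m/(m+1)\ge\tfrac12$ and each $\ell\in L$ has utility $1/(m+1)\le\tfrac12$. Let $S$ be a connected coalition blocking $\pi$; split on $S\cap C$. If $S\cap C=\emptyset$, then $S\subseteq V(T-C)$ and, all relevant utilities depending only on adjacencies inside $T-C$, $S$ blocks $\pi'$ there, contradicting the inductive hypothesis. If $S$ contains some $\ell\in L$, then $v\in S$ (being $\ell$'s unique neighbour; and $|S|\ge2$ since $\{\ell\}$ gives $\ell$ utility $0$); now $v_\ell(S)=1/|S|>1/(m+1)$ forces $|S|\le m$, while $v\in S$ gives $v_v(S)\le\tfrac{|S|-1}{|S|}\le\tfrac{m-1}{m}<\tfrac{m}{m+1}$, so $v$ does not strictly improve --- a contradiction. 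Finally, if $S\cap C=\{v\}$, then $p$ is the only neighbour $v$ can have inside $S$, so $v_v(S)\le\tfrac12\le\tfrac{m}{m+1}$ and again $v$ does not strictly improve. So no blocking coalition exists, and $\pi$ lies in the core. Each peeling step removes at least two vertices and takes linear time (a deepest non-leaf is found in one traversal), so the whole construction is polynomial. I expect the main obstacle to be the choice of coalition to peel: plausible alternatives such as taking a maximum matching, or greedily pairing each leaf with its neighbour, do \emph{not} work --- on $K_{1,3}$, say, pairing the centre with one leaf leaves two singletons which, together with the centre, form a blocking coalition --- so one must peel the \emph{whole} local star $\{v\}\cup L$; and the heart of the verification is the tension in the middle case, where making a peeled leaf strictly better off caps $|S|$ from above while making the centre $v$ strictly better off requires a large $|S|$.
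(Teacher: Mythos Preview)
Your proof is correct and follows essentially the same approach as the paper: both peel off ``stars'' consisting of a deepest internal vertex together with its leaf children, and both verify stability by observing that the centre of such a star is already near-optimal while any improvement for a leaf would force the blocking coalition to be too small for the centre to benefit. Your presentation via explicit induction on $|V(T)|$ and the clean three-way split on $S\cap C$ is arguably tidier than the paper's layer-by-layer BFS description (which needs an extra patch-up step for leftover vertices), but the underlying construction and verification are the same.
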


The proof employs a simple dynamic programming algorithm, which essentially exploits that the preferences of a vertex are somewhat opposed to the preferences of its grandparent, so that blocking coalitions would need to be very `local'. The algorithm matches generations into pairs and produces a partition in which vertices are locally satisfied, making it stable.

The conclusion of Theorem~\ref{th:trees} could be reached in an alternative way. Notice that the coalitions in any partition in the core of a simple and symmetric FHG need to be \emph{connected} in the underlying graph: otherwise, a proper connected component would block. Thus, an FHG given by a graph $G$ can be viewed as a \emph{hedonic game with graph structure} with communication structure given by $G$ in the sense of \citet{IgEl16a}. They showed that, by a result due to \citet{Dema04a}, the core of such games is non-empty if $G$ is a forest. However, this method does not yield a polynomial-time algorithm to produce an element of the core.

\subsection{Bakers and Millers: complete $k$-partite graphs}

In the introduction we referred to the \emph{Bakers and Millers} setting, in which the players are of two different types and each of them prefers the fraction of players of the other type to be as high as possible. The setting could arise if individuals of the same type are competitors engaging in trade with individuals of the other type. 

This idea can easily be extended to multiple types.
Let  $\types=\{\type_1,\ldots,\type_k\}$ be a set of types that partitions the set~$N$ of players, where $k=|\types|$. Let $\type(i)$ denote the type of player~$i$. A hedonic game $(N,\succsim)$ is called a \emph{Bakers and Millers game} if the preferences of each player~$i$ are such that for all coalitions $S,T\in\mathcal N_i$, 
\[
	\text{
	$S\mathrel{\pref_i}T$\quad if and only if\quad $\frac{|S\cap \type(i)|}{|S|}\le \frac{|T\cap \type(i)|}{|T|}$.}
\]
Thus, a player prefers coalitions in which a larger fraction of players are of a different type.
With this formalization, we see that a Bakers and Millers game with~$k$ types is a simple and symmetric FHG represented by a \emph{complete $k$-partite graph} with the maximal independent sets representing the types, \ie a graph~$(V,E)$ with $V=N$ and 
\[
	E	= \set{\set{i,j}\midd \type(i)\neq\type(j)}\text.
\]

In this setting, the grand coalition is always in the strict core: Since the types partition the player set, observe that for every coalition~$S$ we have
\[
	\frac{|S\cap \type_1|}{|S|}+\dots+\frac{|S\cap \type_t|}{|S|}=1\text.
\]
Now assume for a contradiction that the grand coalition~$N$ is not in the strict core. Then there is a (weakly blocking) coalition~$S$ such that $\frac{|S\cap \type(i)|}{|S|}< \frac{|N\cap \type(i)|}{|N|}$ for some $i\in S$ and $\frac{|S\cap \type(j)|}{|S|}\le\frac{|N\cap \type(j)|}{|N|}$ for all $j\in S$. But then 
\[
	\frac{|S\cap \type_1|}{|S|}+\dots+\frac{|S\cap \type_t|}{|S|}<	\frac{|N\cap \type_1|}{|N|}+\dots+\frac{|N\cap \type_t|}{|N|}\text{,}
\]
\ie $1<1$, a contradiction. Generalizing this idea we obtain the following theorem. 

\begin{theorem}\label{thm:corestable1}
	Let $(N,\succsim)$ be a Bakers and Millers game with type space
	 $\types=\set{\type_1,\dots,\type_t}$ and~$\cstr=\set{S_1,\dots,S_m}$  a partition. Then, $\cstr$ is in the strict core if and only if for all types~$\type\in\types$ and all coalitions~$S,S'\in\pi$,  \[\frac{|S\cap\type|}{|S|}=\frac{|S'\cap\type|}{|S'|}\text.\]
\end{theorem}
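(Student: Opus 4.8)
The plan is to prove the two implications separately. The ``if'' direction generalizes the computation given above for the grand coalition: suppose every coalition of $\pi$ realizes the same type proportions, say $\frac{|S\cap\theta|}{|S|}=r_\theta$ for all $S\in\pi$, so $\sum_{\theta\in\Theta}r_\theta=1$, and suppose for contradiction that a coalition $T$ weakly blocks $\pi$. For each type $\theta$ with a representative $i\in T$, the fact that $i$ weakly prefers $T$ to $\pi(i)$ gives $\frac{|T\cap\theta|}{|T|}\le\frac{|\pi(i)\cap\theta|}{|\pi(i)|}=r_\theta$; for types with no representative in $T$ we have trivially $\frac{|T\cap\theta|}{|T|}=0\le r_\theta$; and at least one of the former inequalities is strict. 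Summing over $\theta\in\Theta$ gives $1=\sum_\theta\frac{|T\cap\theta|}{|T|}<\sum_\theta r_\theta=1$, a contradiction, so $\pi$ is in the strict core.

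For the ``only if'' direction I would argue by contraposition, so assume that two coalitions of $\pi$ realize different type proportions; the goal is a coalition that weakly (indeed strongly) blocks $\pi$. Set $M_\theta:=\max_{S\in\pi}\frac{|S\cap\theta|}{|S|}$ and let $S_\theta^{+}\in\pi$ attain it. The first step is to show $\sum_{\theta\in\Theta}M_\theta>1$: for any fixed $S_1\in\pi$ we have $\sum_\theta M_\theta\ge\sum_\theta\frac{|S_1\cap\theta|}{|S_1|}=1$, and equality would force $\frac{|S_1\cap\theta|}{|S_1|}=M_\theta$ for every $\theta$, hence---these being probability vectors---would force all coalitions of $\pi$ to share the same proportions, contrary to assumption. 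Exploiting this slack, pick a rational target vector $g=(g_\theta)_\theta$ with $g_\theta\ge0$, $\sum_\theta g_\theta=1$ and $g_\theta<M_\theta$ for every (non-empty) type $\theta$---for instance $g_\theta=M_\theta/\sum_{\theta'}M_{\theta'}$. Finally, assemble a coalition $T$ containing exactly $|T|\cdot g_\theta$ players of each type $\theta$, taking those players from coalitions $S$ with $\frac{|S\cap\theta|}{|S|}\ge g_\theta$, in particular from $S_\theta^{+}$, where $\frac{|S_\theta^{+}\cap\theta|}{|S_\theta^{+}|}=M_\theta>g_\theta$. Then every member of $T$ has a strictly smaller fraction of his own type in $T$ than in his coalition under $\pi$, so every member strictly prefers $T$ and $T$ strongly blocks $\pi$.

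The step I expect to be the main obstacle is this last one, the realization of $g$ as an actual integer coalition: one needs $|T|\cdot g_\theta$ to be a non-negative integer for every $\theta$ and not to exceed the number of type-$\theta$ players available in coalitions whose same-type fraction is at least $g_\theta$. Getting this to work requires choosing $g$ rational with a sufficiently small denominator and, when a single $S_\theta^{+}$ is too small, spreading the demand for type-$\theta$ players across several source coalitions with large enough same-type fraction; one clean way to organize the bookkeeping is to first produce a ``fractional'' blocking coalition of total weight one---which exists since each $S_\theta^{+}$ contains at least one type-$\theta$ player and $\sum_\theta M_\theta>1$---and then clear denominators. One should also check the degenerate case in which $\frac{|S\cap\theta|}{|S|}\in\{0,M_\theta\}$ for all $S$ and $\theta$ (where no single coalition's proportion vector is strictly dominated), but the slack $\sum_\theta M_\theta>1$ still provides a strictly interior target $g$, so the same construction applies.
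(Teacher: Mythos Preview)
Your ``if'' direction is correct and essentially identical to the paper's argument.

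Your ``only if'' direction has a genuine gap: you aim to produce a \emph{strongly} blocking coalition, but in general only a \emph{weakly} blocking one exists. Consider two types $\theta_1=\{1,2\}$ and $\theta_2=\{3\}$ with the partition $\pi=\{\{1\},\{2,3\}\}$. The type proportions differ across coalitions, so the theorem says $\pi$ is not in the strict core. Yet no coalition strongly blocks: the only coalition giving player~$1$ a strict improvement while containing someone else is $\{1,3\}$ or $\{1,2,3\}$; in $\{1,3\}$ player~$3$ is merely indifferent, and in $\{1,2,3\}$ player~$2$ is strictly worse off. So your parenthetical ``(indeed strongly)'' is false, and with it the whole construction collapses, since your target vector $g$ with $g_\theta<M_\theta$ for every type is precisely designed to make every member strictly better off. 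The integrality bookkeeping you flag as the main obstacle is in fact secondary to this; even if you could realize $g$ exactly, you could not always source all the type-$\theta$ players from coalitions with own-type fraction strictly above $g_\theta$ (in the example, $g_{\theta_1}=2/3$ but only player~$1$ sits in a coalition with $\theta_1$-fraction exceeding $2/3$, while you need two such players).

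The paper avoids this by constructing a weakly blocking coalition directly via a small local modification rather than a global mixture. Given coalitions $S,T\in\pi$ and a type $\theta$ with $\frac{|S\cap\theta|}{|S|}>\frac{|T\cap\theta|}{|T|}$, it picks $i\in S\cap\theta$ and either (if $T\cap\theta\neq\emptyset$) swaps $i$ for some $j\in T\cap\theta$, so that $T'=(T\setminus\{j\})\cup\{i\}$ has exactly the same type proportions as $T$---making all of $T\setminus\{j\}$ indifferent---while $i$ is strictly better off; or (in the remaining degenerate case) simply adjoins $i$ to $T$. This swap idea sidesteps all integrality issues and delivers exactly the weak block that the strict-core definition requires.
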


Let~$d$ denote the greatest common divisor of $|\theta_1|,\dots,|\theta_{t}|$, which we know can be computed in time linear in~$t$ \citep[cf.][]{Brad70a}. Theorem~\ref{thm:corestable1} can now be rephrased as follows: a partition~$\pi$ for a Bakers and Millers game is in the strict core if and only if for all coalitions~$S$ in~$\pi$ there is a positive integer~$k_S$ such that for all types~$\theta_i$ we have $|S\cap\theta_i|=k_S|\theta_i|/{d}$. Thus, for the grand coalition~$N$ we have $k_N=d$. There is also a partition~$\pi$ such that $k_S=1$ for all coalitions~$S$ in~$\pi$ which is in the strict core; no finer partition is in the strict core.
We say that two partitions~$\pi$ and~$\pi'$ are \emph{identical up to renaming players of the same type} if there is a bijection $f\colon N\to N$ 
such that for all players~$i$ we have $\theta(i)=\theta(f(i))$ and
$\pi'=\set{\set{f(i)\midd i\in S}\midd S\in\pi}$.
Hence, we have the following corollary.

\begin{corollary}
	For every Bakers and Millers game, there is a \emph{unique} finest partition in the strict core (up to renaming players of the same type), which, moreover, can be computed in linear time.
\end{corollary}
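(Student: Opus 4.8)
The plan is to derive the corollary directly from Theorem~\ref{thm:corestable1} in its rephrased, arithmetic form. Write $d=\gcd(|\type_1|,\dots,|\type_t|)$; since $d$ divides each $|\type_i|$, the quantities $|\type_i|/d$ are positive integers. By the rephrasing, a partition $\cstr$ is in the strict core exactly when every coalition $S\in\cstr$ carries, for some positive integer $k_S$, precisely $k_S|\type_i|/d$ players of each type~$\type_i$; in particular $|S|=k_S|N|/d$, so \emph{every} coalition of \emph{every} strict core partition has size a positive multiple of $|N|/d$. This single observation drives the whole argument.

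For existence of a finest element, I would exhibit it explicitly. Partition the players of each type $\type_i$ arbitrarily into $d$ blocks $\type_i^1,\dots,\type_i^d$ of size $|\type_i|/d$, and set $\cstr^* = \set{S_1,\dots,S_d}$ with $S_r = \type_1^r\cup\dots\cup\type_t^r$. Each $S_r$ contains exactly $|\type_i|/d$ players of type $\type_i$ (\ie $k_{S_r}=1$), so $\cstr^*$ is in the strict core by Theorem~\ref{thm:corestable1}. It is finest: a strict core partition properly refining $\cstr^*$ would split some $S_r$ and hence contain a coalition of size strictly between $1$ and $|N|/d$, which cannot be a positive multiple of $|N|/d$. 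In fact $\cstr^*$ is, up to renaming, a refinement of \emph{every} strict core partition: given any such $\cstr$, refine each coalition $S\in\cstr$ (with parameter $k_S$) by cutting each type's share of $S$ into $k_S$ equal blocks of size $|\type_i|/d$; the outcome is again an ``all-ones'' partition, of the same shape as $\cstr^*$.

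For uniqueness up to renaming, suppose $\cstr$ and $\cstr'$ are strict core partitions with all parameters equal to $1$. Both then consist of exactly $d$ coalitions, each containing exactly $|\type_i|/d$ players of type $\type_i$. Enumerate the coalitions as $S_1,\dots,S_d$ and $S_1',\dots,S_d'$, and for each $r$ and each type $\type_i$ choose any bijection from the type-$\type_i$ players of $S_r$ onto those of $S_r'$ (possible since both sets have size $|\type_i|/d$). The union of these maps is a bijection $f\colon N\to N$ with $\type(i)=\type(f(i))$ for all $i$ and $\cstr'=\set{\set{f(j)\midd j\in S}\midd S\in\cstr}$, so $\cstr$ and $\cstr'$ are identical up to renaming players of the same type.

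Finally, for the running time: $d$ is computed from $|\type_1|,\dots,|\type_t|$ in linear time as already noted, and partitioning each type into $d$ equal blocks and assembling the coalitions $S_r$ costs time linear in $n$; so $\cstr^*$ is produced in linear time. I expect no deep obstacle here: the one place needing care is pinning down the intended reading of ``finest partition in the strict core \emph{up to renaming}'' and the bookkeeping in the uniqueness bijection (matching coalitions and, within them, same-type players); once Theorem~\ref{thm:corestable1} and the multiple-of-$|N|/d$ size fact are in hand, the rest is routine.
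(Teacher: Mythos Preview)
Your proposal is correct and follows essentially the same approach as the paper: the paper derives the corollary directly from the $\gcd$ rephrasing of Theorem~\ref{thm:corestable1} stated just before it, observing that a partition with all $k_S=1$ exists, lies in the strict core, and admits no strictly finer strict-core partition, while the $\gcd$ itself is computable in linear time. You simply flesh out the details the paper leaves implicit---the explicit construction of $\cstr^*$, the refinement argument via the ``multiple of $|N|/d$'' size constraint, and the bijection establishing uniqueness up to renaming---but the underlying idea is identical.
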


As the strict core is a subset of the core, the ``if''-direction of \thmref{thm:corestable1} also holds for the core: every partition~$\pi$ such that  $\frac{|S\cap\type|}{|S|}=\frac{|S'\cap\type|}{|S'|}$ for all types~$\type\in\types$ and all coalitions~$S,S'\in\pi$ is in the core. The inverse of this statement, however, does not generally hold:
Consider three players~$1$,~$2$, and~$3$, with~$1$ belonging to type~$\type_1$, while~$2$ and~$3$ belong to type~$\type_2$. Then, the coalition structure $\{\{1,2\}, \{3\}\}$ is in the core but not in the strict core: the coalition $\set{1,3}$ would be weakly blocking.

\subsection{Graphs with large girth}

The \emph{girth} of a graph is the length of the shortest cycle in the graph. For example, bipartite graphs have a girth of at least four. 
Graphs with a girth of at least five do not admit triangles or cycles of length four. FHGs described by such graphs always admit a core partition. The key idea behind this result is to pack the vertices of the graph representing a fractional game into stars while maximizing a particular objective function (namely, maximizing the \emph{leximin objective}). The resulting partition is in the core.

\begin{theorem}
	For simple and symmetric FHGs represented by graphs with girth at least five, the core is non-empty, and there always exists a partition into stars that is in the core.
\end{theorem}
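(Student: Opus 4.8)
The plan is to single out one canonical partition into stars and show directly that it cannot be blocked. Call a partition a \emph{star partition} if every part induces a star $K_{1,d}$ (with $d=0$ a singleton and $d=1$ an edge allowed). These form a finite, non-empty family (all singletons is one), so we may pick one, $\pi$, that is \emph{leximin-optimal}: among all star partitions, the vector $(v_i(\pi))_{i\in N}$, sorted non-decreasingly, is lexicographically maximal. I claim $\pi$ is in the core; if not, I fix a blocking coalition $S$ of minimum size and build a star partition that beats $\pi$ in the leximin order, a contradiction. Three easy observations set the stage. In a star partition every utility lies in $\{0\}\cup\{1/s:s\ge 2\}\cup\{1-1/s:s\ge 2\}$, according as a vertex is a singleton, a leaf of a size-$s$ star, or the centre of a size-$s$ star. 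Every $i\in S$ has a neighbour in $S$, since $v_i(S)>v_i(\pi)\ge 0$. And a minimum blocking coalition is connected, since a connected component $S'$ of a disconnected $S$ satisfies $v_i(S')>v_i(S)>v_i(\pi)$ for all $i\in S'$ and so blocks.

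The crux is a structural lemma: \emph{if $|S|\ge 3$, then $G[S]$ is an induced star $K_{1,d}$ with $d\ge 2$.} The argument runs as follows. By minimality no two-element subset of $S$ blocks, so every edge of $G[S]$ has an endpoint $w$ with $v_w(\pi)\ge v_w(\{u,v\})=\tfrac12$; hence $H=\{i\in S:v_i(\pi)\ge\tfrac12\}$ is a vertex cover of $G[S]$. Each $h\in H$ has more than $|S|/2$ neighbours in $S$ (because $v_h(S)>v_h(\pi)\ge\tfrac12$), so two adjacent vertices of $H$ would have intersecting $S$-neighbourhoods, producing a common neighbour and hence a triangle --- excluded by girth $\ge 5$. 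So $H$ is an independent vertex cover, which makes $G[S]$ bipartite with sides $H$ and $L=S\setminus H$, both independent; since each $h\in H$ has all of its $>|S|/2$ neighbours in $L$, we get $|L|>|H|$. If $|H|\ge 2$, two distinct vertices of $H$ share at most one neighbour (no $C_4$, by girth $\ge 5$), so their neighbourhoods in $L$ have union of size at least $|S|$, impossible as $|L|<|S|$. Hence $|H|=1$, say $H=\{h\}$, and by connectedness every vertex of $L$ is adjacent to $h$, i.e. $G[S]=K_{1,d}$ with centre $h$ and $d=|S|-1\ge 2$.

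Granting the lemma, consider first $|S|\ge 3$, so $S=\{h,l_1,\dots,l_d\}$ is a star. Blocking gives $v_h(\pi)<d/(d+1)$ and $v_{l_i}(\pi)<1/(d+1)$ for all $i$; minimality applied to $S\setminus\{l_i\}$ (a star $K_{1,d-1}$ in which every remaining leaf strictly gains, as its utility there is $1/d>1/(d+1)$) forces $v_h(\pi)\ge (d-1)/d$. Since $(d-1)/d$ is the only attainable utility value in $[(d-1)/d,\,d/(d+1))$, we get $v_h(\pi)=(d-1)/d$, so $\pi(h)$ is a size-$d$ star centred at $h$ and its $d-1$ leaves have utility $1/d$; each $l_i$ is not a centre (a centre has utility $\ge\tfrac12$), hence a singleton or a leaf of a star of size $\ge d+2$. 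Now move a single leaf $l_1$ into $\pi(h)$: since $l_1$ is non-adjacent to the leaves of $\pi(h)$ (a triangle with $h$ is excluded), $\pi(h)\cup\{l_1\}$ is still a star, and removing $l_1$ from its own part yields a star partition $\pi'$. The only utilities that change are: $h$ from $(d-1)/d$ up to $d/(d+1)$; the $d-1$ leaves of $\pi(h)$ from $1/d$ slightly down to $1/(d+1)$; $l_1$ from $v_{l_1}(\pi)$ up to $1/(d+1)$; and, if $l_1$ lived in a star of size $s_1\ge d+2$, its centre down from $1-1/s_1$ to $1-1/(s_1-1)\ge d/(d+1)$ and its $s_1-2$ co-leaves up from $1/s_1$ to $1/(s_1-1)$. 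Among these affected players the smallest old utility is $1/s_1$ (or $0$, if $l_1$ was a singleton), while every new utility is at least $1/(s_1-1)>1/s_1$ (resp.\ at least $1/(d+1)>0$); since the two multisets of affected utilities have equal size and all other players are untouched, $\pi'$ strictly beats $\pi$ in the leximin order --- contradiction. The leftover case $|S|=2$, a blocking edge $\{a,b\}$ with $v_a(\pi),v_b(\pi)<\tfrac12$, is handled identically: neither endpoint is a centre, so each is a singleton or a leaf of a star of size $\ge 3$ (and if both are leaves they lie in different stars, since leaves of one star are non-adjacent), and pulling $a,b$ out to form the edge $\{a,b\}$ again strictly raises the smallest affected utility.

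The step I expect to be hardest is the structural lemma --- getting from ``smallest blocking coalition'' to ``induced star'' --- and it is precisely here that the two consequences of girth $\ge 5$ are both needed: triangle-freeness forces the high-utility vertex cover $H$ to be independent, and $C_4$-freeness collapses $H$ to a single vertex. After that, the only real care is verifying that the local re-packing genuinely improves the leximin objective even when the leaves of the blocking star are buried deep inside large stars of $\pi$; this reduces to the multiset comparison sketched above (together with the standard fact that appending a common multiset to both sides preserves the leximin order), which I have not written out in full.
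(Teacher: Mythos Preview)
Your proof is correct and shares the paper's main idea: pick a leximin-optimal partition into stars and show that any blocking coalition would allow a local re-packing that strictly improves the leximin objective. The execution of the contradiction, however, is genuinely different.

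The paper first extracts a structural property of the leximin-optimal $\pi$ itself, namely that no two $\pi$-leaves are adjacent in $G$ (otherwise splitting them off as an edge improves leximin). It then case-splits on the number of $\pi$-\emph{centres} in an arbitrary blocking coalition $S$: zero centres is vacuous; one centre forces $S$ to be a star, and moving one of its leaves into $\pi(c)$ improves leximin; two or more centres is ruled out directly by a pigeonhole count on neighbourhoods combined with the ``at most one common neighbour'' characterisation of girth $\ge 5$. You instead work with a \emph{minimum} blocking coalition and prove a structural lemma about $S$: the vertices of $S$ with $v_i(\pi)\ge\tfrac12$ form an independent vertex cover of $G[S]$ (triangle-freeness), and this cover collapses to a single vertex (no $C_4$), so $G[S]$ is an induced star. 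Minimality is then used a second time to pin down $v_h(\pi)=(d-1)/d$ exactly, which makes the leximin bookkeeping for the single-leaf transfer completely explicit.

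What each route buys: the paper's ``no adjacent leaves'' is a reusable property of $\pi$ that feeds the later polynomial-time local-search argument almost for free, and its case split is short. Your minimal-coalition lemma is more self-contained and handles the ``many centres'' obstruction and the ``$S$ is a star'' conclusion in one stroke; the extra use of minimality to nail $v_h(\pi)$ gives you finer control and makes the leximin comparison a clean multiset inequality (minimum of the old affected multiset strictly below every new value), whereas the paper's criterion ``(a) and (b)'' needs a little extra thought to see that it really implies a leximin gain. Either argument is complete; yours is arguably tidier for the existence statement, while the paper's is closer to the algorithmic follow-up.
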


The argument establishing this, while constructive, does not directly yield a polynomial-time algorithm finding an element of the core, since it is not clear whether a star packing optimizing our leximin objective function can be found polynomial time. In the appendix, we show that certain local maxima are also core stable, and that they can be found in polynomial time through local search.

It is worth observing that, even if a core partition consisting of stars exists, there may still be other stable partitions. Consider, for example, a game given by a star with 3 leaves on the vertex set $\{ c , \ell_1,\ell_2,\ell_3\}$. Then the partition $\pi = \{ \{c, \ell_1, \ell_2\}, \{\ell_3\} \}$ is in the core, but it is not a star packing.

\begin{figure}[t]
	\centering
		\scalebox{.65}{
			\LARGE
			\begin{tikzpicture}
			\tikzstyle{pfeil}=[--,draw]
			%\Large
			\tikzstyle{pfeil}=[->,>=angle 60, shorten >=1pt,draw]
			\tikzstyle{onlytext}=[]
			\tikzstyle{node}=[circle,fill=gray!45]

			\draw 
			(-10,0)
			node[circle,fill=white,draw,inner sep=.4ex](a){$a$}
			++(270:3) 
			+(180:1.5) node[circle,fill=white,draw,inner sep=.4ex](0){$0$}
			+(  0:1.5) node[circle,fill=white,draw,inner sep=.4ex](1){$1$}
			;                      
			
			\draw 
			(-5,0)
			node[circle,fill=white,draw,inner sep=.4ex](b){$b$}
			++(270:3) 
			+(180:1.5) node[circle,fill=white,draw,inner sep=.4ex](2){$2$}
			+(  0:1.5) node[circle,fill=white,draw,inner sep=.4ex](3){$3$}
			;                      
			\draw
			(0,0)
			node[circle,fill=white,draw,inner sep=.4ex](c){$c$}
			++(270:3)                     
			node[circle,fill=white,draw,inner sep=.4ex](5){$5$}
			+(180:1.7)                                      
			node[circle,fill=white,draw,inner sep=.4ex](4){$4$}
			+(0:1.7)                                        
			node[circle,fill=white,draw,inner sep=.4ex](6){$6$}
			;
			\draw 
			(5,0)
			node[circle,fill=white,draw,inner sep=.4ex](d){$d$}
			++(270:3) 
			node[circle,fill=white,draw,inner sep=.4ex](8){$8$}
			+(180:1.7)                                       
			node[circle,fill=white,draw,inner sep=.4ex](7){$7$}
			+(0:1.7)                                        
			node[circle,fill=white,draw,inner sep=.4ex](9){$9$}
			;
			
			\path[draw,very thick,-] (a) -- (0);
			\path[draw,very thick,-] (a) -- (1);      
			
			\path[draw,dotted,-]	(a)--(2);
			\path[draw,dotted,-]	(a)--(3);
			\path[draw,dotted,-]	(a)--(4);
			\path[draw,dotted,-]	(a)--(5);
			\path[draw,dotted,-]	(a)--(6);
			\path[draw,dotted,-]	(a)--(7);
			\path[draw,dotted,-]	(a)--(8);
			\path[draw,dotted,-]	(a)--(9);

			\path[draw,very thick,-] (b) -- (2);
			\path[draw,very thick,-] (b) -- (3);      
			
			\path[draw,dotted,-]	(b)--(0);
			\path[draw,dotted,-]	(b)--(1);
			\path[draw,dotted,-]	(b)--(4);
			\path[draw,dotted,-]	(b)--(5);
			\path[draw,dotted,-]	(b)--(6);
			\path[draw,dotted,-]	(b)--(7);
			\path[draw,dotted,-]	(b)--(8);
			\path[draw,dotted,-]	(b)--(9);

			\path[draw,very thick,-] (c) -- (4);
			\path[draw,very thick,-] (c) -- (5);
			\path[draw,very thick,-] (c) -- (6);
			
			\path[draw,dotted,-]	(c)--(2);
			\path[draw,dotted,-]	(c)--(3);
			\path[draw,dotted,-]	(c)--(0);
			\path[draw,dotted,-]	(c)--(1);
			\path[draw,dotted,-]	(c)--(7);
			\path[draw,dotted,-]	(c)--(8);
			\path[draw,dotted,-]	(c)--(9);

			\path[draw,very thick,-] (d) -- (9);
			\path[draw,very thick,-] (d) -- (7);
			\path[draw,very thick,-] (d) -- (8);
			
			\path[draw,dotted,-]	(d)--(2);
			\path[draw,dotted,-]	(d)--(3);
			\path[draw,dotted,-]	(d)--(0);
			\path[draw,dotted,-]	(d)--(1);
			\path[draw,dotted,-]	(d)--(4);
			\path[draw,dotted,-]	(d)--(5);
			\path[draw,dotted,-]	(d)--(6);

			\end{tikzpicture}
		}
		
		\caption{The complete bipartite graph $K_{4,10}$ in which no star packing yields a stable partition. For instance, the partition indicated by the solid edges is not stable as $\set{a,b,4,5,6,7,8}$ would deviate.}
		\label{fig:no_stable_star_packing}
	\end{figure}
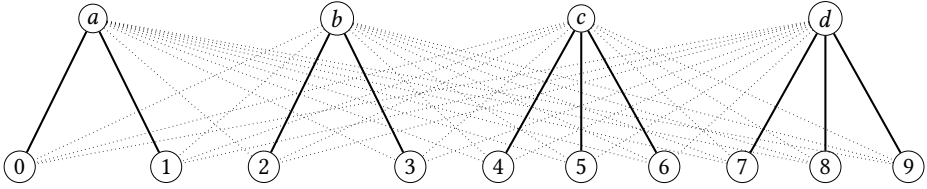

\subsection{Bipartite graphs}

For FHGs on bipartite graphs (whose girth is always at least 4), it is not always the case that there are star-packings that also yield partitions in the core: it can be checked that no partition into stars of the FHG given by the complete bipartite graph $K_{4,10}$ with 14 vertices is in its core; see \figref{fig:no_stable_star_packing}. On the other hand, since $K_{4,10}$ is a Bakers and Millers game, by \thmref{thm:corestable1}, the grand coalition is in the core, which thus is non-empty. This example shows that any method of finding partitions in the core in games on bipartite graphs based on finding star-packings is bound to fail.

We have not been able to prove whether the core is non-empty for all bipartite graphs, and this remains an interesting open problem. For certain subclasses of bipartite graphs, positive results can still be obtained. For example, we can observe that perfect matchings, if they exist, are in the core.
\begin{lemma}\label{lem:perfect_matchings}
	For every FHG that is represented by an undirected bipartite graph admitting a perfect matching the core is non-empty.
\end{lemma}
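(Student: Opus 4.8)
The plan is to show that a partition induced by a perfect matching is itself in the core. Fix a bipartition $N = L \cup R$ of the graph $G$ and a perfect matching $M$; let $\pi_M$ be the partition whose coalitions are exactly the pairs $\{i,j\}$ with $\{i,j\} \in M$. Since every such pair is an edge of $G$, we have $v_i(\pi_M) = \frac{1}{2}$ for every player $i$. Consequently, a coalition $S$ can block $\pi_M$ only if $v_i(S) > \frac{1}{2}$ for all $i \in S$, which, writing $N(i)$ for the set of neighbours of $i$ in $G$, amounts to $2\,|N(i) \cap S| > |S|$ for every $i \in S$.

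Next I would derive a contradiction from bipartiteness. Put $S_L = S \cap L$ and $S_R = S \cap R$. If $S_R = \emptyset$, then $S \subseteq L$ is an independent set, so $v_i(S) = 0$ for every $i \in S$ and $S$ does not block; the case $S_L = \emptyset$ (which subsumes singleton coalitions) is symmetric. So we may assume $S_L, S_R \neq \emptyset$. For any $i \in S_L$ we have $N(i) \cap S \subseteq S_R$, hence $|S_R| \ge |N(i) \cap S| > \frac{1}{2}(|S_L| + |S_R|)$, giving $|S_R| > |S_L|$. Running the same argument for a player in $S_R$ yields $|S_L| > |S_R|$, a contradiction. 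Therefore $\pi_M$ admits no blocking coalition and the core is non-empty.

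I do not expect a genuine obstacle here: the argument is short, and the only point requiring a moment's care is the treatment of coalitions lying entirely on one side of the bipartition (including singletons), which is dispatched by observing that such coalitions give utility $0$ and so cannot block a partition in which everyone receives utility $\frac{1}{2}$. The conceptual content is the simple observation that in a bipartite graph no coalition can make all of its members strictly better off than utility $\frac{1}{2}$, so any partition guaranteeing every player utility at least $\frac{1}{2}$ is automatically core stable — and a perfect matching yields precisely such a partition.
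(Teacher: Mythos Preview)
Your proof is correct and follows essentially the same approach as the paper: both arguments observe that in a bipartite graph every coalition has one side of size at most half, so some member's utility is at most $\tfrac12$, whence a perfect matching (giving everyone exactly $\tfrac12$) cannot be blocked. The paper's version is slightly terser, but the logic is identical.
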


\begin{proof}
	Let $\set{N',N''}$ be the respective bipartition of~$N$. For every coalition $S\subseteq N$, either $\frac{|N'\cap S|}{|S|}\le {1\over 2}$ or $\frac{|N''\cap S|}{|S|}\le {1\over 2}$. Hence, every coalition~$S$ contains at least one player~$i$ with $v_i(S)\le {1\over 2}$. In a perfect matching, considered as a partition, every player has value $1\over 2$. Hence, every perfect matching is in the core and the claim follows.
\end{proof}
We can then obtain the following result as a corollary of Hall's Theorem.

\begin{corollary}
	For all bipartite $k$-regular graphs the core of the corresponding FHG is non-empty.
\end{corollary}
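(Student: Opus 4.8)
The plan is to reduce the corollary to Lemma~\ref{lem:perfect_matchings} by showing that every $k$-regular bipartite graph admits a perfect matching. The case $k=0$ is trivial (the graph has no edges, so the partition into singletons is in the core), so I would assume $k \geq 1$. Let $\set{N',N''}$ be the bipartition. First I would observe that $|N'| = |N''|$: counting edges incident to $N'$ gives $k|N'| = |E|$, and likewise $k|N''| = |E|$, so the two sides have equal size.

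Next I would verify Hall's condition for $N'$. Fix $A \subseteq N'$ and let $N(A) \subseteq N''$ be its neighborhood. Every edge incident to a vertex of $A$ has its other endpoint in $N(A)$, so the number of such edges is $k|A|$ on one hand, and at most $k|N(A)|$ on the other (since each vertex of $N(A)$ has degree exactly $k$). Hence $k|A| \le k|N(A)|$, i.e.\ $|A| \le |N(A)|$. By Hall's Theorem there is a matching saturating $N'$, and since $|N'| = |N''|$ this matching is in fact perfect. Applying Lemma~\ref{lem:perfect_matchings} then gives that the core of the corresponding FHG is non-empty. (One can also invoke König's edge-colouring theorem instead of Hall's Theorem here, decomposing the edge set into $k$ perfect matchings, but a single application of Hall suffices.)

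There is no real obstacle: the only thing to be slightly careful about is the equality $|N'| = |N''|$, which is needed to upgrade the $N'$-saturating matching produced by Hall's Theorem into a perfect matching so that Lemma~\ref{lem:perfect_matchings} applies verbatim.
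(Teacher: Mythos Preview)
Your proposal is correct and matches the paper's intended approach exactly: the paper states the corollary as an immediate consequence of Hall's Theorem together with Lemma~\ref{lem:perfect_matchings}, and your argument (edge-counting to get $|N'|=|N''|$ and to verify Hall's condition, hence a perfect matching, hence non-empty core) is precisely the standard way to flesh this out.
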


It would be desirable to find additional examples of classes of bipartite graphs for which a non-empty core is guaranteed.

\section*{Acknowledgments}

This material is based upon work supported by
the Australian Government's
Department of Broadband, Communications and the Digital
Economy, the Australian Research Council, the Asian
Office of Aerospace Research and Development through
grant AOARD-124056, and the Deutsche Forschungsgemeinschaft under grants {BR~2312/7-1} and {BR~2312/10-1}. Paul Harrenstein is supported by the~ERC under Advanced Grant~291528 (``RACE''). Dominik Peters is supported by~EPSRC.

Preliminary results of this paper were presented at the 
Blankensee Colloquium on Neighborhood Technologies in Berlin (August, 2012),
the 13th International Conference on Autonomous Agents and Multi-Agent Systems in Paris (May, 2014), and the 14th International Conference on Autonomous Agents and Multi-Agent Systems in Istanbul (May, 2015).

\bibliographystyle{ACM-Reference-Format}
%\bibliography{../pamas/abb,../pamas/group,fhg}
%%% -*-BibTeX-*-
%%% Do NOT edit. File created by BibTeX with style
%%% ACM-Reference-Format-Journals [18-Jan-2012].

\newpage

\setcounter{theorem}{0}

\appendix

	\section{Proofs}

	\subsection{Empty core}
	
	To prove \thmref{thm:simsymcore}, we give a simple and symmetric FHG that does not admit a core stable partition.
	Since this game is fairly large (40 players), we first illustrate the construction by giving a simpler example (15 players) from a slightly larger class of games.
	To this end, we say that an FHG is \emph{social} if $v_i(j)\ge 0$ for all $i,j\in N$.
	Clearly, every simple FHG is also social.
	The FHG depicted in \figref{fig:socsymcore} is social and symmetric, but has an empty core.
	We omit the proof, since \thmref{thm:simsymcore} proves a stronger statement.
	
	\begin{figure}
			
	\centering
	
		\begin{tikzpicture}[-, line width = 0.5pt, auto]
		%\begin{tikzpicture}

		\tikzset{
		    position/.style args={#1:#2 from #3}{
		        at=(#3.#1), anchor=#1+180, shift=(#1:#2)
		    }
		}

		%\begin{tikzpicture}

		\def \n {10}
		\def \k {9}
		\def \radius {18ex}
		\def \nodedist {.9ex}
		\def \margin {8} % margin in angles, depends on the radius
		\def \cliqueopacity {1} % opacity of cliques within components
		\def \labelposition {1.25} % distance of labels to the center
		\def \labelpositionC {.85}

		\draw (0,0) circle (\radius);

		\node[circle,draw,minimum size = 6.5*\nodedist,fill = white] (a_1) at ({360/\n * (2.5)}:\radius) {$a_1$};
		\node[circle,draw,minimum size = 6.5*\nodedist,fill = white] (a_2) at ({360/\n * (0.5)}:\radius) {$a_2$};
		\node[circle,draw,minimum size = 6.5*\nodedist,fill = white] (a_3) at ({360/\n * (8.5)}:\radius) {$a_3$};
		\node[circle,draw,minimum size = 6.5*\nodedist,fill = white] (a_4) at ({360/\n * (6.5)}:\radius) {$a_4$};
		\node[circle,draw,minimum size = 6.5*\nodedist,fill = white] (a_5) at ({360/\n * (4.5)}:\radius) {$a_5$};

		\node[circle,draw,minimum size = 6.5*\nodedist,fill = white] (b_1) at ({360/\n * (1.5)}:\radius) {$b_1$};
		\node[circle,draw,minimum size = 6.5*\nodedist,fill = white] (b_2) at ({360/\n * (9.5)}:\radius) {$b_2$};
		\node[circle,draw,minimum size = 6.5*\nodedist,fill = white] (b_3) at ({360/\n * (7.5)}:\radius) {$b_3$};
		\node[circle,draw,minimum size = 6.5*\nodedist,fill = white] (b_4) at ({360/\n * (5.5)}:\radius) {$b_4$};
		\node[circle,draw,minimum size = 6.5*\nodedist,fill = white] (b_5) at ({360/\n * (3.5)}:\radius) {$b_5$};

		\node[circle,draw,minimum size = 6.5*\nodedist] (c_1) at ($(a_1)!0.5!(b_4)$) {$c_1$};
		\node[circle,draw,minimum size = 6.5*\nodedist] (c_2) at ($(a_2)!0.5!(b_5)$) {$c_2$};
		\node[circle,draw,minimum size = 6.5*\nodedist] (c_3) at ($(a_3)!0.5!(b_1)$) {$c_3$};
		\node[circle,draw,minimum size = 6.5*\nodedist] (c_4) at ($(a_4)!0.5!(b_2)$) {$c_4$};
		\node[circle,draw,minimum size = 6.5*\nodedist] (c_5) at ($(a_5)!0.5!(b_3)$) {$c_5$};

		\foreach \x in {1,...,10}
		{
			\node at ({360/\n * (\x)}:1.1*\radius) {$4$};
		}

		\path[every node/.style={font=\sffamily\small}]
		
		(a_1) edge [near end] node {$5$} (c_1)
		(a_2) edge [near end] node {$5$} (c_2)
		(a_3) edge [near end] node {$5$} (c_3)
		(a_4) edge [near end] node {$5$} (c_4)
		(a_5) edge [near end] node {$5$} (c_5)

		(b_1) edge [near end] node {$4$} (c_3)
		(b_2) edge [near end] node {$4$} (c_4)
		(b_3) edge [near end] node {$4$} (c_5)
		(b_4) edge [near end] node {$4$} (c_1)
		(b_5) edge [near end] node {$4$} (c_2)
		;

		\end{tikzpicture}
		\caption{A social symmetric FHG in which no core stable partition exists. The weight of an edge $\{i,j\}$ denotes $v_i(j)$. All missing edges have weight $0$.}
		\label{fig:socsymcore}
		\end{figure}

	The simple and symmetric FHG with empty core depicted in \figref{fig:simsymcorelabel} is derived from the game given in \figref{fig:socsymcore} by replacing all players $a_i$ and $c_i$ by a clique of $3$ players and all players $b_i$ by a clique of $2$ players.
	These cliques are denoted by $A_i$, $C_i$, and $B_i$, respectively.
	Then, the number of players a player is connected to in the union of two connected cliques in \figref{fig:simsymcorelabel} is equal to the weight of the edge between the corresponding players in \figref{fig:socsymcore}.

	\begin{theorem}
	In simple and symmetric FHGs, the core can be empty.
	\end{theorem}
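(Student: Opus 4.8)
The plan is to prove the theorem by exhibiting one concrete simple and symmetric FHG whose core is provably empty, and then checking by case analysis that it admits no core stable partition. The game I would use is the $40$-player game $G$ obtained by blowing up the cyclic ``social'' gadget of \figref{fig:socsymcore} into cliques, as depicted in \figref{fig:simsymcore}: it has five $3$-vertex cliques $A_1,\dots,A_5$, five $2$-vertex cliques $B_1,\dots,B_5$ and five $3$-vertex cliques $C_1,\dots,C_5$ (all indices read modulo~$5$), joined by a complete bipartite join between $A_i$ and $C_i$, between $B_j$ and $C_{j+2}$, and between $A_k$ and each of $B_{k-1}$ and $B_k$; here the number of neighbours a vertex has inside the union of two joined gadget-cliques exactly equals the corresponding edge weight in \figref{fig:socsymcore}. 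The construction is invariant under the rotation $i\mapsto i+1$, and I would lean on this $5$-fold symmetry throughout to keep the case analysis manageable.

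Before the case analysis I would record a few structural facts that every partition~$\pi$ in the core of~$G$ must satisfy. (1)~Every coalition of~$\pi$ that contains an edge induces a \emph{connected} subgraph: if $S\in\pi$ were disconnected, each player of~$S$ retains the same neighbours inside its own connected component of~$S$ but strictly shrinks the denominator, so the component of~$S$ that contains an edge strictly improves all of its members and strongly blocks~$\pi$. (2)~No two adjacent players both receive utility below~$\frac12$, for otherwise the edge between them is a blocking coalition; more generally, for every clique $K$ of~$G$ at least one of its members gets at least $\frac{|K|-1}{|K|}$. (3)~I would tabulate the utility profile of each union of two or three pairwise-joined gadget-cliques: $A_i\cup C_i$ is a $K_6$ and gives everyone $\frac56$; $B_j\cup C_{j+2}$ and $A_k\cup B_k$ are copies of $K_5$ giving everyone $\frac45$; $A_k\cup B_{k-1}\cup B_k$ gives the $A_k$-players $\frac67$ and the $B$-players $\frac47$; $A_k\cup C_k\cup B_{k-2}$ gives the $C_k$-players $\frac78$; and so on. Each such set is a candidate blocking coalition, and the proof consists in showing that whatever~$\pi$ does, one of these candidates is in fact blocking.

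The heart of the argument is then a case distinction on how~$\pi$ carves up the five copies of $K_6$ of the form $A_i\cup C_i$ and the five pairs~$B_j$. The driving tension is that keeping some $A_i\cup C_i$ as an intact coalition only gives its members utility $\frac56$, which is \emph{not} the best the $A_i$-players can obtain --- together with $B_{i-1}\cup B_i$ they reach $\frac67$ --- so an intact $A_i\cup C_i$ is destabilised unless the neighbouring cliques $B_{i-1},B_i$ are already placed where their members get more than $\frac47$; but the only ways to achieve that for a $B$-clique compete with the stability of the \emph{next} gadget around the cycle, and with the pull that the $B_j\cup C_{j+2}$ joins and the $C_k$-optimal sets $A_k\cup C_k\cup B_{k-2}$ exert on the $C$-players. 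Propagating these constraints around the $5$-cycle leaves no globally consistent assignment, and in each of the finitely many surviving configurations I would name an explicit blocking coalition of size at most~$8$. The main obstacle --- and the reason this result sits in the appendix and is preceded by the smaller $15$-player example of \figref{fig:socsymcore} --- is precisely the bookkeeping: the many overlapping $K_5$'s and $K_6$'s give every player several tempting coalitions, so one must be disciplined about verifying the utility of \emph{every} member of each proposed deviation, and about using the rotational symmetry to prevent the case tree from blowing up.
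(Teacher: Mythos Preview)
Your proposal is correct and follows essentially the same approach as the paper: the same $40$-player blown-up gadget, the same use of the $5$-fold symmetry, the same clique observation that some member of each $K_m$ must get at least $\frac{m-1}{m}$, and the same cyclic tension between the $A_i\cup C_i$, $B_j\cup C_{j+2}$ and $A_i\cup B_{i-1}\cup B_i$ candidate deviations. The one organizational point worth flagging is that the paper front-loads most of the bookkeeping into a single reduction lemma---showing that in any core-stable $\pi$ each clique $A_l$ lies inside one coalition and each $C_l$ lies inside one coalition---after which the remaining case split is short; your outline does not isolate this step explicitly, and without it the ``case distinction on how $\pi$ carves up the five copies of $K_6$'' is much larger than it needs to be.
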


	\begin{proof}
		The core of the FHG depicted in \figref{fig:simsymcorelabel} is empty.
		For two players $i,j\in N$ we say that $i$ is connected to $j$ if $i$'s valuation for $j$ is $1$ (and \emph{vice versa}).
		Let $\pi$ be in the core.
		The first step is to show that $A_l\subseteq S\in\pi$ and $C_l\subseteq T\in\pi$ for all $l\in\{1,\dots,5\}$.
		We show both statements for $l = 1$. 
		The rest follows from the symmetry of the game.  
	
		  	$A_1\subseteq S\in\pi$: Assume for contradiction that this is not the case. 
			Since $A_1\cup C_1$ is a $6$-clique, at least one player $i\in A_1\cup C_1$ has a valuations of at least $\nicefrac{5}{6}$ for his coalition (otherwise $A_1\cup C_1$ is blocking). 
			Assume $i\in A_1$. 
			If $\pi(i)$ contains a player that $i$ is not connected to, then $u_i(\pi) \leq \nicefrac{9}{11} < \nicefrac{5}{6}$ since $i$ is connected to at most $9$ players in any coalition. 
			Hence $\pi(i)$ only contains players $i$ is connected to.
			But then $A_1\cup \pi(i)$ is blocking, since every player in $A_1$ is connected to the same players as $i$, a contradiction.
			Hence $i\in C_1$. 
			$A_1\cap\pi(i) = \emptyset$ implies $u_i(\pi)\leq\nicefrac{4}{5}$. 
			If $\pi(i)$ contains a player that $i$ is not connected to, then $u_i(\pi)\leq\nicefrac{7}{9}<\nicefrac{5}{6}$, since $i$ is connected to at most $7$ players in any coalition.
			Hence, $\pi(i)\cap A_1=S\neq\emptyset$ and $\pi(i)$ only contains players $i$ is connected to. 
			Thus, $C_1\subseteq\pi(i)$ (otherwise $C_1\cup\pi(i)$ is blocking).
		
			At least one player $k_1$ in $A_1\cup B_1$ and at least one player $k_2$ in $A_1\cup B_5$ has a valuation of at least $\nicefrac{4}{5}$ for his coalition, since both sets are $5$-cliques. 
			$k_1,k_2\not\in S$, since $u_j(\pi)\leq\nicefrac{4}{6}$ for all $j\in S$.
			If $k_1\in A_1\setminus S$, then $\pi(k_1)$ only contains players that $k_1$ is connected to, otherwise $u_{k_1}(\pi)\leq\nicefrac{5}{7}<\nicefrac{4}{5}$. 
			Then $\pi(k_1)\cup S$ is blocking.
			Hence $k_1\in B_1$.
			Analogously, it follows that $k_2\in B_5$.
			We show that $\pi(k_1)\neq\pi(k_2)$.
			Assume for contradiction that $\pi(k_1) = \pi(k_2) = T$.
			If $S$ contains at least two players that $k_1$ is not connected to, we have $u_{k_1}(T)\leq\nicefrac{10}{13}<\nicefrac{4}{5}$ (since $k_1$ is connected to at most $10$ players in any coalition).
			Hence, $T$ contains one player $k_1$ is not connected to, namely $k_2$.
			The analogous assertion holds for $k_2$.
			Since $u_{k_2}(T)\geq\nicefrac{4}{5}$, we have $|T|\geq 10$.
			But then $T$ contains at least $2$ players $k_1$ is not connected to, since there are only $3$ players that both $k_1$ and $k_2$ are connected to.
			This implies that $u_{k_1}(T)\le\nicefrac{10}{13}<\nicefrac{4}{5}$, a contradiction.

			If $B_4\subseteq\pi(i)$, it follows that $u_j(\pi)\leq\nicefrac{4}{7}<\nicefrac{2}{3}$ for all $j\in S$.
			If $j\in A_1\setminus S$ is in a coalition with a player that $j$ is not connected to, $u_j(\pi)\leq\nicefrac{3}{5}<\nicefrac{2}{3}$, since $\pi(j)$ cannot contain a player from $\pi(i)$ and from both $\pi(k_1)$ and $\pi(k_2)$ (since $\pi(k_1)\neq\pi(k_2)$).
			Hence $S\cup\pi(j)$ is blocking.
		
			If $|\pi(i)\cap B_4| = 1$ it follows that $|S| = 2$ and $u_j(\pi) = \nicefrac{4}{6}$ for all $j\in S$.
			At least one player $k$ in $A_5\cup B_4$ has a valuation of at least $\nicefrac{4}{5}$ for his coalition.
			If $k\in\pi(i)$ it follows that $u_k(\pi) = \nicefrac{3}{6}<\nicefrac{4}{5}$, a contradiction.
			If $k\in B_4\setminus\pi(i)$, then $\pi(k)$ only contains players that $k$ is connected to.
			If $A_4\subseteq\pi(k)$ or $A_5\subseteq\pi(k)$, then $A_4\cup C_4$ and $A_5\cup C_5$ are blocking, respectively.
			$|\pi(k)\cap A_4| = 2$ or $|\pi(k)\cap A_5| = 2$ is not possible since our previous analysis for $A_1$ and $C_1$ also applies to $A_4$ and $C_4$, and  $A_5$ and $C_5$, respectively.
			But then, $u_k(\pi)\le\nicefrac{2}{3}<\nicefrac{4}{5}$.
			Hence $k\in A_5$. This implies that $\pi(k)$ only contains players that $k$ is connected to.
			Hence $A_5\subseteq\pi(k)$, otherwise $A_5\cup\pi(k)$ is blocking.
			Also $\pi(k)\neq A_5\cup B_5$, because otherwise $A_5\cup C_5$ is blocking.
			Hence $A_5\cap \pi(k_2) = \emptyset$. 
			Thus, $\pi(k_2)$ can only contain players $k_2$ is connected to.
			This implies $B_5\subseteq\pi(k_2)$.
			As $u_{k_2}(\pi)\geq\nicefrac{4}{5}$, $|\pi(k_2)\cap C_2|\geq 2$. 
			Thus, if $\pi(k_1)$ contains some player in $A_2$, then $A_2\cup C_2$ is blocking.
			If $\pi(k_2) = B_5\cup C_2$, then $A_2\cup C_2$ is blocking and otherwise $C_2$ is blocking.
			This contradicts that $\pi$ is stable.

			$C_1\subseteq T\in\pi$: At least one player $i$ in $B_4\cup C_1$ has a valuation of at least $\nicefrac{4}{5}$ for his coalition (otherwise $B_4\cup C_1$ is blocking).
			Assume $i\in B_4$ and $u_j(\pi) < \nicefrac{4}{5}$ for all $j\in C_1$. 
			Then $A_l\subseteq\pi(i)$ for some $l\in\{4,5\}$. 
			But then $A_l\cup C_l$ is blocking. 
			Hence the assumption is wrong and $i\in C_1$. 
			Note that $\pi(i)$ cannot contain a player $i$ is not connected to, otherwise $u_i(\pi)\leq\nicefrac{7}{9}<\nicefrac{4}{5}$, since $i$ is connected to at most $7$ players in any coalition.
			But then $C_1\subseteq\pi(i)$, otherwise $C_1\cup\pi(i)$ is blocking. 	
			% \item $K_3^{b_1}\subseteq\pi(i)\in\pi$:

	It cannot be that $\pi(i)\subseteq A_l\cup B_l$ or $\pi(i)\subseteq A_l\cup B_{l-1}$ for $i\in A_l$, since $A_l\cup C_l$ is blocking for all $l\in\{1,\dots,5\}$.

	If $A_1\cup C_1\cup S\in\pi$ with $\emptyset\neq S\subseteq B_4$, then $u_i(\pi)\le\nicefrac{5}{7}<\nicefrac{4}{5}$ for all $i\in A_1$.
	Hence $B_1\cup C_3, B_5\cup C_2\in\pi$, otherwise either $A_1\cup B_1$ or $A_1\cup B_5$ are blocking ($u_i(A_1\cup B_1) = u_i(A_1\cup B_5) = \nicefrac{4}{5}$ for all $i\in A_1$). 
	But then $u_i(\pi)\le\nicefrac{4}{5}$ for all $i\in A_2$.
	Hence $A_2\cup C_2$ is blocking, a contradiction.
	In any other partition in which some $i\in A_1$ is in a coalition with a player that he is not connected to, we have $u_i(\pi)\le\nicefrac{9}{11}<\nicefrac{5}{6}$ for all $i\in A_1$ and $u_j(\pi)\le\nicefrac{4}{5}<\nicefrac{5}{6}$ for all $j\in C_1$.
	Hence $A_1\cup C_1$ is blocking.
	Hence $\pi(i)$ only contains players $i$ is connected to for all $i\in A_l$ and $l\in\{1,\dots,5\}$.

	We have shown previously that at least one player $i_l\in C_l$ has a valuation of at least $\nicefrac{4}{5}$ for his coalition for all $l\in\{1,\dots,5\}$. 
	Hence, $\pi(i_l)$ cannot contain a player that $i_l$ is not connected to.
	Therefore, either $\pi(i_l) = A_l\cup C_l$ or $\pi(i_l) = B_{l-2}\cup C_l$ for all $l\in\{1,\dots,5\}$. 
	If $A_l\cup C_l\in\pi$ for all $j\in\{1,\dots,5\}$, then $A_1\cup B_1\cup B_5$ is blocking.
	Hence we can assume without loss of generality that $A_1\cup S\in\pi$ with $S\subseteq B_1\cup B_5$.
	If $|S|<3$, then $A_1\cup C_1$ is blocking.
	Hence $|S|\geq 3$.
	Without loss of generality, $B_5\subseteq S$.
	It follows that $B_4\cup C_1\in\pi$, since one player in $C_1$ has a valuation of at least $\nicefrac{4}{5}$ for his coalition.
	This implies that $A_4\cup C_4\in\pi$.
	Furthermore $A_2\cup C_2, A_3\cup C_3\in\pi$, otherwise $B_5\cup C_2$ and $B_1\cup C_3$ are blocking, respectively.
	Then we get $A_5\cup C_5\in\pi$.
	But then $A_3\cup B_2\cup B_3$ is blocking.
	Hence, $\pi$ is not in the core, a contradiction.

	\begin{figure}
		\centering
			\begin{tikzpicture}[-, line width = 0.5pt]
		
			\tikzset{
			    position/.style args={#1:#2 from #3}{
			        at=(#3.#1), anchor=#1+180, shift=(#1:#2)
			    }
			}

			%\begin{tikzpicture}

			\def \n {10}
			\def \k {9}
			\def \radius {18ex}
			\def \nodedist {.9ex}
			\def \margin {8} % margin in angles, depends on the radius
			\def \cliqueopacity {1} % opacity of cliques within components
			\def \labelposition {1.25} % distance of labels to the center
			\def \labelpositionC {.85}
	
			\draw (0,0) circle (\radius);

			\node[circle,draw,minimum size = 6.5*\nodedist,fill = white] (a_1) at ({360/\n * (2.5)}:\radius) {};
			\node[circle,draw,minimum size = 6.5*\nodedist,fill = white] (a_2) at ({360/\n * (0.5)}:\radius) {};
			\node[circle,draw,minimum size = 6.5*\nodedist,fill = white] (a_3) at ({360/\n * (8.5)}:\radius) {};
			\node[circle,draw,minimum size = 6.5*\nodedist,fill = white] (a_4) at ({360/\n * (6.5)}:\radius) {};
			\node[circle,draw,minimum size = 6.5*\nodedist,fill = white] (a_5) at ({360/\n * (4.5)}:\radius) {};
	
			\node[circle,draw,minimum size = 6.5*\nodedist,fill = white] (b_1) at ({360/\n * (1.5)}:\radius) {};
			\node[circle,draw,minimum size = 6.5*\nodedist,fill = white] (b_2) at ({360/\n * (9.5)}:\radius) {};
			\node[circle,draw,minimum size = 6.5*\nodedist,fill = white] (b_3) at ({360/\n * (7.5)}:\radius) {};
			\node[circle,draw,minimum size = 6.5*\nodedist,fill = white] (b_4) at ({360/\n * (5.5)}:\radius) {};
			\node[circle,draw,minimum size = 6.5*\nodedist,fill = white] (b_5) at ({360/\n * (3.5)}:\radius) {};
	
			\node[circle,draw,minimum size = 6.5*\nodedist] (c_1) at ($(a_1)!0.5!(b_4)$) {};
			\node[circle,draw,minimum size = 6.5*\nodedist] (c_2) at ($(a_2)!0.5!(b_5)$) {};
			\node[circle,draw,minimum size = 6.5*\nodedist] (c_3) at ($(a_3)!0.5!(b_1)$) {};
			\node[circle,draw,minimum size = 6.5*\nodedist] (c_4) at ($(a_4)!0.5!(b_2)$) {};
			\node[circle,draw,minimum size = 6.5*\nodedist] (c_5) at ($(a_5)!0.5!(b_3)$) {};

			\node at ({360/\n * (2.5)}:\radius*\labelposition) {$A_1$};
			\node at ({360/\n * (0.5)}:\radius*\labelposition) {$A_2$};
			\node at ({360/\n * (8.5)}:\radius*\labelposition) {$A_3$};
			\node at ({360/\n * (6.5)}:\radius*\labelposition) {$A_4$};
			\node at ({360/\n * (4.5)}:\radius*\labelposition) {$A_5$};	
			
			\node at ({360/\n * (1.5)}:\radius*\labelposition) {$B_1$};
			\node at ({360/\n * (9.5)}:\radius*\labelposition) {$B_2$};
			\node at ({360/\n * (7.5)}:\radius*\labelposition) {$B_3$};
			\node at ({360/\n * (5.5)}:\radius*\labelposition) {$B_4$};
			\node at ({360/\n * (3.5)}:\radius*\labelposition) {$B_5$};
			
			\node at ({360/\n * (2)}:\radius*\labelpositionC) {$C_2$};
			\node at ({360/\n * (10)}:\radius*\labelpositionC) {$C_3$};
			\node at ({360/\n * (8)}:\radius*\labelpositionC) {$C_4$};
			\node at ({360/\n * (6)}:\radius*\labelpositionC) {$C_5$};
			\node at ({360/\n * (4)}:\radius*\labelpositionC) {$C_1$};

			\foreach \x in {1,2,3,4,5}
			{
	
				\foreach \y in {1,2,3}
				{
					\node[circle,draw,position=42+120*\y-72*\x:{-2.7*\nodedist} from a_\x,opacity = \cliqueopacity] (a_\x\y) {};
					\node[circle,draw,position=54+42+120*\y-72*\x:{-2.7*\nodedist} from c_\x,opacity = \cliqueopacity] (c_\x\y) {};
			
				}
		
				\node[circle,draw,position=0-36+72-72*\x:{-2.7*\nodedist} from b_\x,opacity = \cliqueopacity] (b_\x1) {};
				\node[circle,draw,position=180-36+72-72*\x:{-2.7*\nodedist} from b_\x,opacity = \cliqueopacity] (b_\x2) {};

				\draw[opacity = \cliqueopacity] (a_\x1) -- (a_\x2);
				\draw[opacity = \cliqueopacity] (a_\x1) -- (a_\x3);	
				\draw[opacity = \cliqueopacity] (a_\x3) -- (a_\x2);
		
				\draw[opacity = \cliqueopacity] (c_\x1) -- (c_\x2);
				\draw[opacity = \cliqueopacity] (c_\x1) -- (c_\x3);	
				\draw[opacity = \cliqueopacity] (c_\x3) -- (c_\x2);
		
				\draw[opacity = \cliqueopacity] (b_\x1) -- (b_\x2);
		
			}
  
			\path[every node/.style={font=\sffamily\small}]
	
			(a_1) edge [near end] node {} (c_1) 
			(a_2) edge [near end] node {} (c_2) 
			(a_3) edge [near end] node {} (c_3) 
			(a_4) edge [near end] node {} (c_4) 
			(a_5) edge [near end] node {} (c_5) 
	
			(b_1) edge [near end] node {} (c_3) 
			(b_2) edge [near end] node {} (c_4) 
			(b_3) edge [near end] node {} (c_5) 
			(b_4) edge [near end] node {} (c_1) 
			(b_5) edge [near end] node {} (c_2) 
			;

			\end{tikzpicture}
		\caption{A simple and symmetric FHG with empty core. For all $l\in\{1,\dots,5\}$, $A_l$ and $C_l$ denote cliques of $3$ players and $B_l$ denotes a clique of $2$ players. An edge from one clique to another denotes that every player in the first clique is connected to every player in the second clique. All depicted edges have weight $1$. All missing edges have weight $0$.}
		\label{fig:simsymcorelabel}
		\end{figure}
		\end{proof}
		
		\begin{remark}
			\label{rmk:stable-sub-example}
			Suppose we delete one of the players from $B_2$ in the game above, so that now $|B_2| = 1$. The resulting game admits a partition $\pi$ in its core: $\pi = \{ A_1 \cup B_1 \cup B_5, A_2 \cup C_2, A_3 \cup C_3, A_4 \cup C_4, A_5 \cup C_5, B_4 \cup C_1, B_2, B_3 \}$. That $\pi$ is in the core can be checked by hand or by computer.
		\end{remark}
		
		\subsection{Hardness results}
		
		We will now show that it is computationally hard to decide whether a given FHG admits a non-empty core. This problem turns out to be $\Sigma_2^p$-complete, and thus complete for the second level of the polynomial hierarchy, even for simple and symmetric FHGs. Our argument is rather involved; shorter proofs exist when aiming only for NP-hardness and without the restriction to simple and symmetric games \citep{BBS14a,PeEl15a}.
		
		We will start our reduction from the problem MAXMIN-CLIQUE, which is $\Pi_2^p$-complete \citep{KoLi95a}:
		
		\begin{quote}
		\textbf{MINMAX-CLIQUE} \\
		\textbf{Instance:} An undirected graph $H = (V, E)$ whose vertex set $V = \bigcup_{i = 1}^n \bigcup_{j = 1}^c V_{i,j}$ is partitioned into a grid with $n$ rows and $c$ columns, and a target integer $k$.\\
		\textbf{Question:} Is it the case that for every way of choosing exactly one $V_{i,j}$ for each row $i$, the union of the $n$ chosen cells contains clique of size $k$?
		\end{quote}
		
		From the reduction presented by \citet{KoLi95a}, it follows that this problem remains $\Pi_2^p$-complete even if $c = 2$, all the $V_{i,j}$'s contain the same number of vertices (say $|V_{i,j}| = m$), and $k = n$.
		From this, it is easy to see that the problem with $k = n+\frac{nm}2$ is also hard: just add a clique of $2nm$ ($=|V|$) new vertices to $H$, connect each of the new vertices to every of the old vertices, and distribute the new vertices into the grid so that each $V_{i,j}$ contains precisely $m$ of these new vertices. Note that after this reduction, the value of $m$ has doubled.
		
		Taking the complement of the problem we have now arrived at, we find that the following problem is $\Sigma_2^p$-complete. (Note the change from ``minmax'' to ``maxmin''.)
		
		\begin{quote}
		\textbf{MAXMIN-CLIQUE} \\
		\textbf{Instance:} An undirected graph $H = (V, E)$ whose vertex set $V = \bigcup_{i = 1}^n \bigcup_{j = 1,2} V_{i,j}$ is partitioned into a grid with $n$ rows and 2 columns, where all cells contain the same number of vertices, say $|V_{i,j}| = m$ for all $i,j$. \\
		\textbf{Question:} Is there a way to choose exactly one of $V_{i,1}$ and $V_{i,2}$ for each row $i$ so that the union of the $n$ chosen cells does \emph{not} contain a clique of size $n+\frac{nm}2$?
		\end{quote}
		
		We will not give a direct reduction from MAXMIN-CLIQUE to our problem about FHGs, but will instead consider an intermediate problem first. Later, we show how to extend this to the case we are actually interested in. Our intermediate problem uses a modification of allowing so-called \emph{supported players}, who are unusually happy in a singleton coalition. A similar device also appears in the $\Sigma_2^p$-hardness proof by \cite{Pete15a} for additively separable hedonic games. The formal definition of our problem is as follows.
		
		\begin{quote}
			\textbf{Core-non-emptiness with Supported Players} \\
		\textbf{Instance:} An undirected unweighted graph $G = (N,E)$, defining an FHG. This hedonic game is then modified by identifying a number of \textit{supported} players $S\subseteq N$ who receive a specified subsidy when they are alone, i.e., for each $i\in S$, we set $v(\{i\}) = (l_i - 1)/l_i$ for some given integer $l_i \ge 4$ (encoded in unary). \\
		\textbf{Question:} Does the given hedonic game admit a non-empty core?
		\end{quote}
		
		Later we will show a reduction from this problem to the case without supported players; there the technical assumption that the subsidies satisfy $l_i \ge 4$ will become useful.
		
		\begin{theorem}
			Core-non-emptiness with Supported Players is $\Sigma_2^p$-complete.
		\end{theorem}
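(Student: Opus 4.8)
The plan is to prove the two standard halves: containment in $\Sigma_2^p$, which is straightforward, and $\Sigma_2^p$-hardness via a gadget reduction from MAXMIN-CLIQUE in the hard special case isolated above.

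\textbf{Membership in $\Sigma_2^p$.} A core stable partition $\pi$ is a short certificate: it has size polynomial in $|N|$, and since each subsidy $l_i$ is encoded in unary the whole instance already has size at least $\max_i l_i$, so every number appearing in the comparison of utilities is polynomially bounded. Given a guessed $\pi$, verifying that $\pi$ lies in the core is a $\Pi_1^p$ (coNP) predicate: $\pi$ is \emph{not} in the core iff there is a coalition $S\subseteq N$ such that every $i\in S$ strictly prefers $S$ to $\pi(i)$, where for a supported player $i$ the comparison against $\{i\}$ uses the subsidised value $(l_i-1)/l_i$ and for every other player it uses $v_i(\cdot)$. Hence core non-emptiness is the $\Sigma_2^p$ statement ``$\exists\pi\,\forall S$: $S$ does not block $\pi$'', giving the upper bound.

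\textbf{Hardness: the architecture.} For the lower bound I would reduce from MAXMIN-CLIQUE in the special case established above: two columns, each cell $V_{i,j}$ of the same size $m$, and target $k=n+\tfrac{nm}{2}$. The reduction outputs an unweighted graph $G$ with a designated set of supported players, built from three kinds of pieces: (i) a copy of the grid vertices of $H$, which carry the edges of $H$; (ii) one \emph{row gadget} per row $i$; and (iii) a single \emph{verification gadget} wired to all grid vertices. The row gadgets should be designed so that in any core stable partition each row $i$ is forced into a definite ``selected'' state $j_i\in\{1,2\}$ (for instance, all of $V_{i,j_i}$ joins a distinguished coalition of the gadget while the players of $V_{i,\,3-j_i}$ are pushed into small coalitions), and — crucially — so that for \emph{every} target profile $(j_1,\dots,j_n)$ there is at least one locally consistent stable configuration of the row gadgets in isolation. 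This is what makes the existential quantifier over partitions mirror the existential quantifier over column choices. The verification gadget is then attached so that, once a profile is fixed and $U=\bigcup_i V_{i,j_i}$ is the union of the selected cells, the remaining players can be partitioned stably iff $U$ contains \emph{no} clique of size $k$: a $k$-clique inside $U$ ``closes'' a cyclic sub-structure of the verification gadget arranged exactly like the cyclic clique configuration underlying \figref{fig:simsymcorelabel} and Theorem~\ref{thm:simsymcore}, so that the whole game inherits its empty core, whereas if no such clique exists the cycle never closes and the natural partition is stable. The supported players are the devices that make these triggers fire at exactly the intended coalition sizes: a supported player with denominator $l$ is content precisely in an $l$-clique in which it is adjacent to everyone, so the choice of the $l_i$'s calibrates which coalitions are ``just barely'' acceptable and pins the critical size to $k=n+\tfrac{nm}{2}$.

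\textbf{Correctness and the main obstacle.} The forward direction would take a column choice whose selected union has no $k$-clique, assemble the locally consistent configurations of all row gadgets together with the stable state of the verification gadget, and check that no coalition blocks. The backward direction would take a core stable partition, read off the induced column choice, and argue that if its selected union contained a $k$-clique, that clique — possibly augmented by a few gadget and supported players — would start the unavoidable blocking sequence inherited from the empty-core example. Because utilities in an FHG are \emph{averages}, everything here depends on exact cardinalities: the clique sizes inside the gadgets, the number of padding players, and the subsidy denominators must be tuned so that the threshold is exactly $k=n+\tfrac{nm}{2}$ and so that no unintended coalition mixing row-gadget players, verification-gadget players, and grid vertices can ever strictly improve all of its members. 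Proving this ``no spurious blocking coalition'' property in both directions — an exhaustive but structured case analysis in the spirit of, though considerably longer than, the proof of Theorem~\ref{thm:simsymcore} — is where the real work lies and is the step I expect to be the main obstacle. (The hypothesis $l_i\ge 4$ is not needed for this theorem; it is carried along only because the later reduction eliminating supported players will use it.)
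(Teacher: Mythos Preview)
Your membership argument is fine and matches the paper. The hardness plan, however, diverges from the paper's construction in a way that leaves a real gap.

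Your architecture centers on a single \emph{verification gadget} that a $k$-clique in the selected cells would ``close'' into a copy of the empty-core game of Theorem~\ref{thm:simsymcore}. This is not what the paper does, and it is hard to see how it could be made to work: the empty-core example is a fixed 40-vertex graph whose emptiness depends on precise clique sizes $2$ and~$3$, so there is no obvious way for a $k$-clique of input-dependent size to slot in and ``complete'' such a cycle. You would need a parametric family of empty-core configurations indexed by $k$, which you do not have.

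The paper's idea is much simpler and sidesteps this entirely: the $k$-clique is \emph{itself} the blocking coalition. The reduction attaches to every grid vertex $v$ a private clique $C_v$ of $k-3$ supported players (subsidy $(k-2)/(k-1)$) together with a ``mate'' $v'$. In any core-stable partition, a vertex $v$ in a selected cell is forced into the coalition $\{v,v'\}\cup C_v$, where it gets utility exactly $(k-2)/(k-1)$. Then a $k$-clique among selected vertices gives each member $(k-1)/k>(k-2)/(k-1)$ and blocks directly---no cyclic verification gadget is needed. The special choice $k=n+\tfrac{nm}{2}>\tfrac{|V|}{2}+1$ is what guarantees, in the converse, that only genuine cliques can block (a non-clique $S\subseteq V$ would have a member with utility $\le(|S|-2)/|S|<(k-2)/(k-1)$).

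The empty-core game is used, but for a different purpose than you imagine: copies of it are hung off the row-choice players $z_i$ and the mates $v'$, with those players playing the role of one of the two $B_2$-vertices. By Remark~\ref{rmk:stable-sub-example}, deleting that vertex leaves a game with non-empty core; with it present, the core is empty. So these copies act as \emph{forcing devices}: in any stable partition, $z_i$ and $v'$ must be in a coalition with a neighbor outside their private copy. For $z_i$, the supported players in $X_{i,1}\cup X_{i,2}$ then pin $\pi(z_i)$ to exactly $\{z_i\}\cup X_{i,j}\cup V_{i,j}\cup V'_{i,j}$ for one $j$, which encodes the column choice; for $v'$ (when $v$ is in a selected cell), the supported players in $C_v$ pin $\pi(v')$ to $\{v,v'\}\cup C_v$.

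So the missing idea in your plan is twofold: (i) use the empty-core game as a local forcing gadget (via the stable-sub-example observation), not as a global verification trigger; and (ii) calibrate each grid vertex's utility to $(k-2)/(k-1)$ via its own supported clique, so that the clique you are testing for blocks directly.
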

		
		\begin{proof}
			We reduce from MAXMIN-CLIQUE. So let $H = (V,E)$ be a given graph with vertex partition $V = \bigcup_{i = 1}^n \bigcup_{j = 1,2} V_{i,j}$ with $|V_{i,j}| = m$ for all $i$ and $j$ and with target clique size $k = n+\frac{nm}2$. We now construct a game $G = (N,E')$ with supported players $S$.
			
			Let $M$ be a big number, $M = 20m^2n$ will do.
			
			We produce the following players.
			\begin{itemize}
				\item For each row $i$, we introduce a player $z_i$ who will eventually be responsible for choosing one of the cells $V_{i,1}$ or $V_{i,2}$.
				\item For each of the two cells $V_{i,1}$ and $V_{i,2}$ of a row, we introduce a set of $M$ supported players; $|X_{i,1}| = |X_{i,2}| = M$. Each of these players receive subsidy $(M+2m)/(M+2m+1)$.
				\item Each original vertex $v\in V$ is also a player $v\in N$.
				\item For each original vertex $v\in V$, we introduce a \textit{mate} $v'$.
				\item For each $v\in V$, we also introduce a set $C_v$ of $k-3$ supported players with subsidy $(k-2)/(k-1)$.
				\item For each player $z_i$, we introduce a set $O_{z_i}$ of 39 players who will form a copy of the game from Theorem~\ref{thm:simsymcore} with empty core.
				\item For each mate player $v'$, we introduce a set $O_{v'}$ of 39 players who will form a copy of the game from Theorem~\ref{thm:simsymcore} with empty core.
			\end{itemize}
			
			If $W \subseteq V$ is a subset of vertices, let's write $W' = \{ v' : v\in W \}$ for the collection of mates of vertices in $W$.
			Summarizing, we have produced the following set of players:
			\[ N = V \cup V' \cup \{ z_i : i \in [n] \} \cup \bigcup_{\mathclap{i,j \in [n] \times [2]}} X_{i,j} \cup \bigcup_{v\in V} C_v \cup \bigcup_{v' \in V'} O_{v'} \cup \bigcup_{i \in [n]} O_{z_i} , \]
			of which the following are supported:
			\[ S = \bigcup_{\mathclap{i,j \in [n] \times [2]}} X_{i,j} \cup \bigcup_{v\in V} C_v. \]
			
			We also need to construct the set of edges $E'$:
			
			\begin{itemize}
				\item All original edges from $E$ are in $E'$.
				\item For each $v$, the set $C_v \cup \{v,v'\}$ forms a clique of size $k-1$.
				\item For each cell $V_{i,j}$, the set $X_{i,j} \cup \{ z_i \}$ forms a clique of size $M + 1$.
				\item For each $v\in V_{i,j}$, both $v$ and $v'$ are connected to all vertices in $X_{i,j}$.
				\item The sets $O_{z_i} \cup \{z_i\}$ and $O_{v_i'} \cup \{v_i'\}$ form a copy of the game from Theorem~\ref{thm:simsymcore}, such that the distinguished player ($z_i$ and $v_i$ respectively) is one of the two players in $B_2$.
				\item There are no other edges.
			\end{itemize}
			
			This completes the description of the reduction. \\
			
			$\implies$: Suppose the game $G$ admits a core-stable partition $\pi$. We show how to choose cells $t  : [n] \to \{1,2\}$ so that $\bigcup_i V_{i, t(i)}$ contains no clique of size $k$.
			
			Note first that every coalition in $\pi$ needs to be connected in $G$, since otherwise a connected component blocks $\pi$. Consider some row $i\in [n]$. Since the game restricted to the players in $O_{z_i} \cup \{z_i\}$ does not possess a core-stable partition, and by connectedness, the player $z_i$ needs to be together with one of his neighbors outside $O_{z_i}$, i.e., with a neighbor from $X_{i,1} \cup X_{i,2}$. Say this neighbor $x$ comes from $X_{i,1}$, so $x \in \pi(z_i) \cap X_{i,1}$. We show that in fact $\pi(z_i) = \{z_i\} \cup X_{i,1} \cup V_{i,1}' \cup V_{i,1}$.
			
			\begin{itemize}
				\item $\supseteq$: We know that $x\in \pi(z_i)$. Now $x$ is supported with subsidy $(M+2m)/(M+2m+1)$; since $\{x\}$ does not block $\pi$, it must be the case that $x$'s utility in $\pi$ is at least as high as its subsidy. Hence $|\pi(z_i)| = |\pi(x)| \ge M+2m+1$, and $x$ must have at least $M+2m$ neighbors in $\pi(z_i)$. Recalling that $m=|V_{i,j}|$ for all $i$ and $j$, and looking at the reduction, we see that $x$ only has $M+2m$ neighbors in total, namely $V_{i,1} \cup V_{i,1}' \cup \{z_i\} \cup X_{i,1} \setminus \{x\}$, and hence this must form a subset of $\pi(z_i)$.
				\item $\subseteq$: If there are any additional players in $\pi(z_i)$, then $x$ obtains utility strictly less than $(M+2m-1)/(M+2m)$, and then $\{x\}$ would block $\pi$, invoking his subsidy.
			\end{itemize}
			
			We deduce that for each row $i$, either $\pi(z_i) = \{z_i\} \cup X_{i,1} \cup V_{i,1}' \cup V_{i,1}$ or $\pi(z_i) = \{z_i\} \cup X_{i,2} \cup V_{i,2}' \cup V_{i,2}$. 
			This allows us to choose cell $V_{i,2}$ in the former case (setting $t(i)=2$) and $V_{i,1}$ in the latter (setting $t(i)=1$). Note that $z_i$ is together with the cell that is \emph{not} chosen.
			
			Now let's consider the players in $X_{i,t(i)}$ corresponding to a chosen cell. Given what we know so far about $\pi$, these players only have $M+2m-1$ remaining neighbors (since $z_i$ is in a different coalition). Thus, no non-singleton coalition can give such a player utility exceeding the subsidy $(M+2m)/(M+2m+1)$. Hence each player in $X_{i,t(i)}$ is in a singleton in $\pi$.
			
			Now consider a vertex $v\in V_{i,t(i)}$ in a chosen cell and look at its mate $v'$. Since the game restricted to the players in $O_{v'} \cup \{v'\}$ does not possess a core-stable partition, the player $v'$ needs to be together with a neighbor outside $O_{v'}$, i.e., needs to be together with $v$ and/or a player in $C_v$. In fact, we can see that $v'$ needs to be together with at least one player from $C_v$: if not, then $v'$ obtains utility at most $11/12$ (because $v'$ has 10 neighbors in $O_{v'}$ plus the neighbor $v$), and then $C_v \cup \{v'\}$ is blocking. Thus we have shown that there is $c\in C_v$ with $c \in \pi(v')$. Since $\{c\}$ is not blocking, $c$'s utility in $\pi$ must exceed its subsidy $(k-2)/(k-1)$. Thus $|\pi(v')| = |\pi(c)| \ge k-1$ and $c$ needs to have at least $k-2$ neighbors in $\pi(v')$. But $c$ has exactly $k-2$ neighbors, and so, like above, we have $\pi(v') = \{v,v'\} \cup C_v$. In particular, each $v\in V_{i,t(i)}$ in a chosen cell obtains utility $(k-2)/(k-1)$.
			
			Finally, suppose for a contradiction that the union $\bigcup_i V_{i,t(i)}$ of the chosen cells contains a clique $K\subseteq V$ of size $k$. Then each vertex of $K$ obtains utility $(k-1)/k$ in $K$, so $K$ blocks $\pi$, a contradiction. Hence, with our choice of $t : [n] \to \{1,2\}$, the set $\bigcup_i V_{i,t(i)}$ contains no clique of size $k$.  \\
			
			$\impliedby$: Suppose there is a way of choosing $t : [n] \to \{1,2\}$ so that $\bigcup_i V_{i,t(i)}$ contains no clique of size $k$. We construct a partition $\pi$ of $N$ which is core-stable. For each row $i\in [n]$:
			
			\begin{itemize}
				\item $\{x\}\in \pi$ for each $x\in X_{i,t(i)}$.
				\item $\{v,v'\} \cup C_v \in \pi$ for each $v\in V_{i,t(i)}$.
				\item $\{z_i\} \cup X_{i,\lnot t(i)} \cup V_{i,\lnot t(i)}' \cup V_{i,\lnot t(i)} \in \pi$, where $\lnot t(i) = 3 - t(i)$ is the not-chosen index.
				\item $\{c\} \in \pi$ for each $c \in C_v$ for $v\in V_{i,\lnot t(i)}$.
				\item The players in sets $O_{z'}$ and $O_{v'}$ are partitioned in the way indicated in Remark~\ref{rmk:stable-sub-example}.
			\end{itemize}
			
			Let us note first that each player $z_i$ receives utility $M/(M+2m+1) > 11/12$, and also each mate $v'$ either receives utility $(k-2)/(k-1) > 11/12$ or $(1 + M)/(M+2m+1) > 11/12$ since $M$ is chosen large enough. Because $z_i$ and $v'$ players only have 10 neighbors in $O_{z_i}$ and $O_{v'}$, respectively, they will not block in a coalition that is contained entirely within $O_{z_i} \cup \{z_i\}$ or $O_{v_i} \cup \{v_i\}$, because those only bring utility at most $10/11$.
			
			We now show that $\pi$ admits no blocking coalitions. To do so, we will go through all the players to check that they have no incentive to deviate. (We will say that a player $i$ is \emph{not blocking} if $i$ is not part of any blocking coalition.) First notice that $\pi$ is individually rational, and in particular every supported player receives at least its subsidy. Therefore no singleton coalition blocks $\pi$.
			
			\begin{itemize}
				\item The players $x \in X_{i,\lnot t(i)}$ are in a best-possible coalitions: they are together with exactly their neighbors. Hence they will never be part of a blocking coalition.
				\item The players $x \in X_{i,t(i)}$ (who form singletons in $\pi$ and currently receive their subsidy) will not deviate, because the only coalition that exceeds their subsidy would be $x$'s neighborhood $\{z_i\} \cup X_{i,t(i)} \cup V_{i, t(i)}' \cup V_{i,t(i)}$, yet this is not a blocking coalition since $z_i$ is not better off in it. 
				\item For each $z_i$, we have excluded all the neighbors $x\in X_{i,j}$ of $z_i$ as possible members of a blocking coalition. This would only leave a blocking coalition contained entirely within $O_{z_i} \cup \{z_i\}$, which is not an improvement for $z_i$ as argued above. Hence $z_i$ will not block.
				\item Each $c \in C_v$ where $v\in V_{i,t(i)}$ is in a best-possible coalition because its coalition is precisely its neighborhood, and hence will not deviate.
				\item Each $c \in C_v$ where $v\in V_{i,\lnot t(i)}$ (who forms a singleton in $\pi$ and currently receives its subsidy) cannot block, because the only coalition that exceeds its subsidy would be its neighborhood $\{v,v'\} \cup C_v$, but this coalition is not blocking because $v'$ is not better off (since $(1 + M)/(M+2m+1) > (k-2)/(k-1)$ by choice of $M$ large enough).
				\item Consider a mate player $v'$. We have already shown that all of its neighbors, except possibly $v$ and from $O_{v'}$, are not part of blocking coalitions. But a blocking coalition contained in $\{v,v'\} \cup O_{v'}$ brings utility at most $11/12$ to $v'$ (because $v'$ only has 11 neighbors in this set), and thus $v'$ is not better off in such a coalition. Hence no mate player is blocking.
				\item No player in $O_{z_i}$ or $O_{v'}$ can be part of a blocking coalition by Remark~\ref{rmk:stable-sub-example}, since $z_i$ and $v'$, respectively, are not blocking.
				\item Each $v\in V_{i,\lnot t(i)}$ currently receives utility $\ge (1 + M)/(M+2m+1)$ which, for our choice of $M$ large enough, exceeds the utility $v$ could receive in any coalition $S\subseteq V$ consisting entirely of original vertices (this quantity being at most $(|V|-1)/|V| = (2nm-1)/2nm$).
				\item Thus, any blocking coalition $S$ that we have not yet excluded must consist entirely of original vertices in chosen cells, that is $S\subseteq \bigcup_{i=1}^n V_{i, t(i)}$. Because each $v\in S$ currently obtains utility $(k-2)/(k-1)$, $S$ must give each member a utility exceeding this value. We show that $S$ is a clique in the graph $H$, and of size $\ge k$, which gives a contradiction.
				
				Let $r := |S|$. Note that $r \le mn = |\bigcup_{i=1}^n V_{i, t(i)}|$. Suppose that $S$ is not a clique. Then there exists a vertex $v\in S$ which is not connected to every other $w\in S$. Thus 
				\[ u_v(S) \le \frac{r-2}{r} \le \frac{mn-2}{mn} < \frac{k-2}{k-1},  \]
				where the last inequality follows from $k > \frac{mn}{2} + 1$ by simple algebra. But because $S$ is assumed to be blocking, we know that $(k-2)/(k-1) < u_v(S)$, and hence we have a contradiction. Thus, $S$ must be a clique. Since each $v\in S$ obtains utility $> (k-2)/(k-1)$ in it, we must have that $|S| \ge k$, a contradiction.
			\end{itemize}
			Thus, no blocking coalition exists, and hence $\pi$ is in the core.
		\end{proof}
		
		With this result in place, we can now formally state the following problem, and prove it to be hard:
		\begin{quote}
			\textbf{Core-non-emptiness for simple and symmetric FHGs} \\
			\textbf{Instance:} An undirected unweighted graph $G = (N,E)$, defining an FHG. \\
			\textbf{Question:} Does the given FHG game admit a non-empty core?
		\end{quote}
		
		\begin{theorem}
			\label{thm:reduction-supported}
			Core-non-emptiness for simple and symmetric FHGs is $\Sigma_2^p$-complete.
		\end{theorem}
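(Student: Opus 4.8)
First I would observe that the problem sits in $\Sigma_2^p = \mathrm{NP}^{\mathrm{NP}}$ for the obvious reason: a core‑stable partition of $(N,E)$ is a certificate of polynomial size (at most $|N|$ blocks), and checking that a \emph{given} partition $\pi$ has no blocking coalition is a coNP predicate — its negation, ``there is a set $S$ with $S\succ_i\pi(i)$ for every $i\in S$'', is plainly in NP, since for a guessed $S$ and each $i\in S$ one only has to count the neighbours of $i$ inside $S$ and inside $\pi(i)$. Hence the whole statement reads ``$\exists\,\pi\ \forall\,S\ (\dots)$'' and lies in $\Sigma_2^p$. All the work is in the lower bound.

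\textbf{The reduction.} For hardness I would reduce from \textsc{Core‑non‑emptiness with Supported Players}, shown $\Sigma_2^p$‑hard by the preceding theorem. Given such an instance — a graph $G=(N,E)$ together with supported players $i\in S$ whose singleton subsidy is $v(\{i\})=(l_i-1)/l_i$, with $l_i\ge 4$ written in unary — I would build a plain simple symmetric FHG $G'=(N',E')$ by attaching to each supported player $i$ a \emph{private clique of guards}: introduce $l_i-1$ fresh vertices $D_i$, make $D_i\cup\{i\}$ an $l_i$‑clique, and add no further edges incident to $D_i$. Everything else is kept unchanged. Since each $l_i$ is given in unary, only $\sum_{i\in S}(l_i-1)$ vertices are added, so the construction is polynomial. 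The point of the gadget is that in the coalition $D_i\cup\{i\}$ player $i$ has exactly $l_i-1$ of its $l_i-1$ possible neighbours present, hence utility $(l_i-1)/l_i$ — precisely the subsidy — while each guard attains its unique optimal coalition; thus ``being alone with one's subsidy'' in $G$ is faithfully mimicked by ``being alone with one's guards'' in $G'$.

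\textbf{Structural lemma (the main obstacle).} The heart of the argument is: in every core‑stable $\pi'$ of $G'$ and for every supported $i$, either $\pi'(i)=D_i\cup\{i\}$, or $\pi'(i)\cap D_i=\varnothing$ (in which case $D_i$ is itself a block of $\pi'$ and $i$ receives at least $(l_i-1)/l_i$). I would prove this by case analysis on $\pi'(i)\cap D_i$. If $\pi'(i)\supsetneq D_i\cup\{i\}$, every guard gets at most $(l_i-1)/(l_i+1)<(l_i-2)/(l_i-1)$ — this inequality is exactly where $l_i\ge 4$ is needed — so the guard set $D_i$ strictly blocks. If $\pi'(i)$ meets $D_i$ in a proper nonempty subset, then the excluded guards can only sit among themselves, so they are below $(l_i-2)/(l_i-1)$; either the guards remaining with $i$ are ``diluted'' (so $D_i$ blocks), or $\pi'(i)=(D_i\setminus\{d_0\})\cup\{i\}$ exactly, in which case the starved guard $d_0$, the other guards, and $i$ all strictly improve inside $D_i\cup\{i\}$, which therefore blocks. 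Finally, if $\pi'(i)\cap D_i=\varnothing$ but $D_i$ is split, the smaller fragment gives its guards less than $(l_i-2)/(l_i-1)$, so $D_i$ blocks; and $i$ must enjoy at least $(l_i-1)/l_i$ for otherwise $D_i\cup\{i\}$ blocks. Every step reduces to an elementary comparison of two fractions, but checking that \emph{some} deviation is always available across all ways the guards could be scattered is the delicate part; I expect the middle case (guards partly in, partly out, with $i$ possibly already happy from its original edges) to be the one that needs the most care.

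\textbf{Correspondence.} Given the lemma, the equivalence is routine. Map a core‑stable $\pi'$ of $G'$ to the partition $\pi$ of $N$ obtained by deleting all guard vertices and replacing each block $D_i\cup\{i\}$ by the singleton $\{i\}$; conversely, turn a core‑stable $\pi$ of the supported game into $\pi'$ by replacing each singleton $\{i\}$ of a supported player with $D_i\cup\{i\}$ and letting each remaining guard set $D_j$ form its own block. A coalition $S\subseteq N$ containing no guard has identical utilities in $G$ and in $G'$, and by the lemma its members occupy ``matching'' coalitions in $\pi$ and $\pi'$ (with a supported player's subsidy in $G$ corresponding to its utility $(l_i-1)/l_i$ inside $D_i\cup\{i\}$ in $G'$), so $S$ blocks on one side exactly when it does on the other. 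The only remaining candidate blocking coalitions in $G'$ are the private cliques $D_j\cup\{j\}$, and one checks that such a coalition blocks $\pi'$ precisely when $\{j\}$ blocks $\pi$ via its subsidy; in particular, if $\pi$ was core‑stable this never happens. Hence $G'$ has a non‑empty core if and only if the supported‑players instance does, which together with membership yields $\Sigma_2^p$‑completeness.
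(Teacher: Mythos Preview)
Your proposal is correct and follows essentially the same route as the paper: the identical private-clique gadget (attach $l_i-1$ fresh guard vertices forming a clique with each supported player $i$), the same structural lemma that in any core-stable partition of $G'$ the guards $D_i$ appear either as the block $D_i$ or as the block $D_i\cup\{i\}$, and the same back-and-forth correspondence between stable partitions. The only stylistic difference is that the paper first catalogs the guard's preference levels (showing $D_i\cup\{i\}$ is uniquely best and $D_i$ together with the sets $(D_i\setminus\{d_0\})\cup\{i\}$ are exactly the second-best coalitions, using $l_i\ge 4$ at the same spot you do) and then derives the lemma in one line, whereas you carry out the case split on $\pi'(i)\cap D_i$ directly; both arguments are valid, and your anticipated ``delicate middle case'' is precisely the case $\pi'(i)=(D_i\setminus\{d_0\})\cup\{i\}$ that the paper's preference analysis also singles out.
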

		\begin{proof}
			We reduce from Core-non-emptiness with Supported Players. So let $G = (N,E)$ be a FHG modified by having supported players $S \subseteq N$ where $i\in S$ gets subsidy $(l_i - 1)/l_i$ where $l_i \ge 4$.
			
			We build a new FHG $H = (N', E')$ without supported players such that $G$ possesses a core-stable partition if and only if $H$ does.
			
			The player set $N'$ of $H$ subsumes every original player from $N$, so $N \subseteq N'$. In addition, for each supported player $i\in S$, we add a set $C_{i}$ of $(l_i - 1)$ new players. Together we have $N' = N \cup \bigcup_{i\in S} C_i$.
			
			The edge set $E'$ of $H$ subsumes the original edges, so $E \subseteq E'$. Also, the sets $C_i \cup \{i\}$ form a clique of $l_i$ players for each $i\in S$. There are no other edges.
			
			This completes the description of the reduction. Before we prove correctness, let us analyze the preferences of players $j\in C_i$. Clearly, $j$'s unique most-preferred coalition is $C_i \cup \{i\}$, which is precisely $j$'s neighborhood. Ranked second are all coalitions of $(l_i-2)$ neighbors of $j$ together with $j$ (that is, the coalitions $C_i$ and $C_i \setminus \{k\} \cup \{i\}$ for some $k \in C_i \setminus \{j\}$) which give $j$ utility $(l_i - 2)/(l_i - 1)$. All other coalitions are ranked lower than these: let $C\ni j$ be any other coalition. 
			\begin{itemize}
				\item If $|C| \le l_i -2$, then $j$ obtains utility $\le (l_i - 3)/(l_i-2) < (l_i - 2)/(l_i - 1)$.
				\item If $|C| = l_i-1$, then, because $C$ is a coalition different from the ones considered above, $j$ has at most $l_i -3$ neighbors in $C$, so obtains utility $\le (l_i - 3)/(l_i-1) < (l_i - 2)/(l_i - 1)$.
				\item If $|C| = l_i$, then again, because $C$ is assumed to be different from $C_i \cup \{i\}$, $j$ has at most $l_i -2$ neighbors in $C$, so obtains utility $\le (l_i - 2)/l_i < (l_i - 2)/(l_i - 1)$.
				\item If $|C| \ge l_i + 1$, then $j$ can obtain utility at most $(l_i - 1)/(l_i + 1)$, which is worse than $(l_i - 2)/(l_i - 1)$ for $l_i > 3$.
			\end{itemize}
			Suppose $G$ has a core-stable partition $\pi$. Consider the following partition $\pi'$ of $H$: 
			\[\pi' = (\pi \setminus \{ \{i\} : i \in S \}) \cup \{ C_i \cup \{i\} : i\in S \text{ and } \{i\} \in \pi \} \cup \{ C_i :  i\in S \text{ and } \{i\} \not\in \pi \}.\]
			Thus, supported players $i\in S$ who are in a singleton coalition in $\pi$ join the coalition $C_i \cup \{i\}$ in $\pi'$.
			
			 We claim that $\pi'$ is core-stable in $H$. Note first that sets of form $C_i \cup \{i\}$ are not blocking: In this coalition, player $i$ receives utility $(l_i - 1)/l_i$ (which is equal to $i$'s subsidy) and so if $C_i \cup \{i\}$ was blocking $\pi'$, then $\{i\}$ would be blocking $\pi$. As we have seen, the coalitions $C_i$ are ranked second-best for its members, who therefore do not block either. Hence no player from any $C_i$ is blocking. Hence any potential blocking coalition for $\pi'$ is contained entirely in $N'$, and hence would also be a blocking coalition for $\pi$, which is a contradiction. Hence $\pi'$ is core-stable.
			
			Suppose $H$ has a core-stable partition $\pi'$. First note that for each $i\in S$, either $C_i \cup \{i\} \in \pi'$ or $C_i\in \pi'$, since otherwise either $C_i$ or $C_i \cup \{i\}$ blocks (by our observations about the preferences of players $j\in C_i$ above). Build the following partition $\pi$ of $N$: if $C_i \cup \{i\} \in \pi'$ then put $i$ in a singleton in $\pi$: $\{i\}\in \pi$; and for every $S\in \pi'$ with $S\subseteq N$, also put $S \in \pi$. The result is core-stable in $G$: for suppose not, and there is a blocking coalition $S\subseteq N$. If $S = \{i\}$ is a singleton and $i$ is supported, then $C_i \cup \{i\}$ would block $\pi'$. In all other cases $S$ would also block $\pi'$. Both cases give a contradiction, and so $\pi$ is core-stable.
		\end{proof}
	
		The reduction in the proof of Theorem~\ref{thm:reduction-supported} can also be used to show that it is coNP-complete to verify whether a given coalition structure is core-stable in a given simple symmetric FHG. We only sketch the argument, which is by reduction from clique. Given an instance $(G,k)$ of the clique problem (which asks whether $G$ contains a clique of size at least $k$, where we may assume that $k\ge \frac n2 +2$), we produce an FHG based on the same graph $G$, and make every vertex a supported player with subsidy $(k-2)/(k-1)$. In this game with supported players, the all-singletons coalition structure is in the core, unless there is a clique in $G$ of size at least $k$ (whence the clique would block, giving its members the payoff $(k-1)/k$). The approach of Theorem~\ref{thm:reduction-supported} can then be used to get rid of the subsidies.
		
		\citet{LiWe17a} give an alternative hardness proof of this problem. They show that verifying whether the grand coalition is core stable is coNP-complete even for graphs of degree 2 and that satisfy some further structural constraints, also by a reduction from the clique problem. \citet{LiWe17a} also present some heuristic approaches to answer this question.

		%%%%%%%%%%%%%%%%%%%%%%%%%%%%%%%%
		%%%% Appendix: Positive results %%%%%%%%%
		%%%%%%%%%%%%%%%%%%%%%%%%%%%%%%%%
		\subsection{Positive results}
		
		\subsection*{Graphs with bounded degree}
		
		\begin{theorem}
			For simple and symmetric FHGs represented by graphs of degree at most 2, the core is non-empty. 	
		\end{theorem}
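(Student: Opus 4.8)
The plan is to decompose the game into connected components, and then --- using that every component of a graph of maximum degree at most $2$ is an isolated vertex, a path, or a cycle --- to exhibit a core-stable partition of each component by a simple linear-time greedy rule. First I would check that this decomposition is legitimate: it suffices to produce, for every connected component $C$, a core-stable partition of the FHG induced by $C$, because their union is then core-stable in the whole game. Indeed, suppose $\pi$ keeps every coalition inside a single component and is core-stable on each component but some $S$ blocks $\pi$; pick a component $C$ with $S\cap C\neq\emptyset$. Every $i\in S\cap C$ has all its neighbours inside $C$, so $v_i(S\cap C)\ge v_i(S)>v_i(\pi(i))$, the first inequality because passing from $S$ to $S\cap C$ keeps $i$'s neighbours in the coalition but weakly shrinks its size; since $\pi(i)\subseteq C$, the set $S\cap C$ blocks $\pi$ on $C$, a contradiction. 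So I may work one component at a time, placing each isolated vertex in a singleton.

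For a path (or a cycle, starting at an arbitrary vertex), list the vertices in order and take consecutive pairs $\{v_1,v_2\},\{v_3,v_4\},\dots$; if the number $k$ of vertices is odd, merge the leftover vertex $v_k$ into the previous coalition to form a triple $\{v_{k-2},v_{k-1},v_k\}$ (for the triangle $C_3$ this produces the grand coalition). The resulting utilities are immediate: every player in a pair gets $\frac12$; in a triple that induces a path on three vertices, the middle vertex gets $\frac23$ and the two end vertices get $\frac13$; every vertex of the triangle $C_3$ gets $\frac23$; an isolated vertex gets $0$. Hence the only players with utility below $\frac12$ are the end vertices of path-inducing triples (value $\frac13$), of which each component has at most two since it has at most one triple, and the isolated-vertex singletons (value $0$), which form components of their own.

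It remains to show this partition $\pi$ has no blocking coalition, and the workhorse is the degree bound $v_i(S)=|N_i\cap S|/|S|\le 2/|S|$: a player never exceeds $\frac23$; it exceeds $\frac12$ only in a coalition of size $3$ in which it is adjacent to both others; and for $|S|\ge 4$ it exceeds $\frac13$ only when $|S|=4$. By the reduction, a blocking coalition $S$ lies inside one component and every member has a neighbour in $S$; I would rule $S$ out by cases on $|S|$. If $|S|\le 2$: a singleton gives $0$, an edge gives both endpoints $\frac12$, which would need two adjacent players currently below $\frac12$ --- impossible, as the two end vertices of a component's single triple are non-adjacent. If $|S|=3$: $S$ induces a triangle (forcing the component to be $C_3$, where everyone already has $\frac23$), or $S$ induces a path (whose two leaves get $\frac13$ and would have to be currently below $\frac13$, that is, isolated vertices, which have no neighbour in $S$), or $S$ has a vertex with no neighbour in $S$ (who gets $0$). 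If $|S|=4$: every member is currently below $\frac12$, hence a path-triple endpoint or an isolated-vertex singleton; the latter still gets $0$ in $S$, so all four would be path-triple endpoints, contradicting that a component has at most two. If $|S|\ge 5$: $v_i(S)\le\frac25<\frac13$, so every member would be an isolated vertex, again getting $0$ in $S$. Hence $\pi$ is in the core, and it is computed in linear time. The delicate step I anticipate is the odd cycle, and $C_3$ in particular: the leftover vertex must be absorbed into a triple rather than left alone, since a lone vertex is blocked by the triangle (and only $C_3$ behaves this way), and one must verify that the two endpoints of a component's single triple are never adjacent --- exactly where ``at most one triple per component'' is used. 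The remainder is bookkeeping forced by the maximum-degree-$2$ bound.
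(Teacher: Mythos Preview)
Your argument is essentially correct and the strategy is sound, but there is one arithmetic slip you should fix: in the case $|S|\ge 5$ you write $v_i(S)\le\frac25<\frac13$, which is false since $\frac25>\frac13$. This matters, because a triple-endpoint (current utility $\frac13$) \emph{could} strictly improve in a size-$5$ coalition containing both of its neighbours. The repair is immediate: simply reuse your $|S|=4$ reasoning for all $|S|\ge 4$. From $v_i(S)\le 2/|S|\le\frac12$ every member of a blocking $S$ has current utility strictly below $\frac12$; isolated vertices have no neighbour in $S$ and so do not strictly improve, hence every member is a triple-endpoint---but each component has at most two of those, contradicting $|S|\ge 4$. (For $|S|\ge 6$ your inequality $2/|S|\le\frac13$ does hold, so only $|S|=5$ is affected.)

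Your route is different from the paper's. The paper greedily extracts all $K_3$'s (under the degree bound these are exactly the triangle components) and then greedily extracts $K_2$'s, possibly leaving unmatched singletons even inside non-trivial components; its stability argument then shows that a size-$3$ blocking coalition containing a matched vertex would force degree~$3$ or an overlooked triangle. You instead walk each path or cycle linearly, pairing consecutive vertices and absorbing any leftover into a terminal triple, so that no non-isolated vertex is ever alone but some players sit at utility~$\frac13$. The paper's partition gives every non-singleton player at least~$\frac12$, which makes large blocking coalitions trivially impossible; your partition trades that for at most two $\frac13$-players per component, a bound you then exploit. Both are linear-time; yours is more explicitly structural (it uses the path/cycle classification directly), while the paper's is phrased as clique-extraction and never needs to name the component types.
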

		\begin{proof}
		
		We present a polynomial-time algorithm to compute a partition in the core. The partition is computed as follows. First keep finding $K_3$s until no more can be found. This takes time ${n \choose 3}$. Let us call the set of vertices matched into $K_3$s as $V_3$. We remove $V_3$ from the graph along with $E_3$---the edges between vertices in $V_3$. We then repeat the procedure by deleting $K_2$s instead of $K_3$s. Let us call the set of vertices matched into pairs by $V_2$. In that case, $V\setminus (V_2\cup V_3)$ are the unmatched vertices. The partition obtained is $\pi$. 
	
		In order to prove that $\pi$ is in the core, consider the potential blocking coalitions. We know that vertices in $V_3$ cannot be in a blocking coalition because each vertex in $V_3$ is in its most favored coalition. Also there does not exist a blocking coalition consisting solely of vertices from $V\setminus (V_2\cup V_3)$. If this were the case, then we had not deleted all $K_2$s from $(V\setminus V_3, E\setminus E_3)$. Now let us assume that there exists a $v_2\in V_2$ which is in a blocking coalition.
		A blocking coalition has to be of size $3$, since $v_2$ has utility $\nicefrac{1}{2}$ in $\pi$ and utility at most $\nicefrac{1}{2}$ in any coalition of size $2$ or size at least $4$. Moreover, a blocking coalition cannot contain two vertices from $V_1$, since for this to be the case $v_2$ has to be connected to one vertex in $V_2$ and two vertices in $V_1$, which violates the degree constraints.
		Hence, the coalition is of the form $\{v_1, v_2, v_2'\}$ where $v_1\in V\setminus (V_2\cup V_3)$ and $v_2, v_2'\in V_2$. If the utility of $v_2$ is greater than $\nicefrac{1}{2}$, then the utility of $v_2'$ is less than $\nicefrac{1}{2}$. Since $v_2'$ obtained utility $\nicefrac{1}{2}$ in $\pi$, $\{v_1, v_2, v_2'\}$ is not a blocking coalition.\end{proof}

		\subsection*{Forests}
		
		\begin{theorem}
			For simple and symmetric FHGs represented by undirected forests, the core is non-empty.
		\end{theorem}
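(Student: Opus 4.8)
The plan is to construct, for every forest, an explicit partition into stars that lies in the core, and to produce it by a linear-time bottom-up pass over each tree. Two preliminary observations make this tractable. First, one may restrict attention to \emph{connected} blocking coalitions: if a disconnected coalition $S$ blocked a partition $\pi$, then a connected component $C$ of $S$ containing an edge would itself block, since each $i\in C$ has the same neighbours inside $C$ as inside $S$ while $|C|<|S|$, so $v_i(C)>v_i(S)>v_i(\pi)$; and a coalition with no internal edge cannot block, as all its members have utility $0$. A connected subgraph of a forest is a tree, so it suffices to exclude blocking subtrees. Second, a counting lemma: for any $\pi$, if $S$ is a subtree on $s\ge 2$ vertices and $A:=\{i\in S: v_i(S)>\tfrac12\}$, then $|A|\le 1$. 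Indeed, writing $d_S(i)$ for the number of neighbours of $i$ inside $S$, each $i\in A$ has $d_S(i)\ge\lfloor s/2\rfloor+1$, so $\sum_{i\in A}d_S(i)\ge|A|(\lfloor s/2\rfloor+1)$, whereas $S$ is a tree and the subgraph it induces on $A$ is a forest, so $\sum_{i\in A}d_S(i)\le(|A|-1)+(s-1)$; comparing gives $|A|\lfloor s/2\rfloor\le s-2$, impossible for $|A|\ge 2$ since $2\lfloor s/2\rfloor\ge s-1$. In particular no subtree makes all of its members strictly better off than $\tfrac12$, and every blocking subtree contains at most one player whose current utility is $\ge\tfrac12$.

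The construction: root each tree of the forest arbitrarily and process its vertices from the leaves upwards. When $v$ is processed, if at least one child of $v$ is still ``unclaimed'', then $v$ forms the star consisting of $v$ and all its unclaimed children (which become its leaves) and $v$ becomes ``claimed''; otherwise $v$ stays unclaimed, to be claimed later by its parent — unless $v$ is the root, in which case $v$ is placed in a singleton. This is a linear-time procedure and outputs a partition $\pi$ into stars (plus possibly the singleton root), each coalition joining one ``generation'' of the rooted tree to the next. Every player ends up being either a star centre (utility $\ge\tfrac12$), the leaf of a $2$-leaf star (utility exactly $\tfrac12$), the leaf of a star with $d\ge 2$ leaves (utility $1/(d+1)<\tfrac12$), or the singleton root (utility $0$).

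To verify that $\pi$ is in the core, call a player \emph{deficient} if its utility is $<\tfrac12$, i.e.\ it is the leaf of a $\ge 2$-leaf star or the singleton root. The key structural fact is that the deficient players form an \emph{independent set}: the neighbours of a star-leaf are its centre and its own children, and by construction the children of an ``unclaimed-then-claimed'' leaf are themselves star centres, none deficient; likewise the neighbours of the singleton root are its children, which are star centres. Now suppose a subtree $S$ with $s\ge 2$ vertices blocks $\pi$. By the lemma it has at most one player with current utility $\ge\tfrac12$, hence at least $s-1$ deficient players; since deficient players are independent and $S$ is connected, $S$ has exactly one non-deficient vertex $u$ and all edges of $S$ are incident to $u$, so $S$ is a star centred at $u$ with $s-1\ge 1$ deficient leaves. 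Then $v_u(S)=(s-1)/s$ and each leaf gets $1/s$ in $S$, so blocking requires $(s-1)/s>v_u(\pi)$ and $1/s>v_\ell(\pi)$ for every leaf $\ell$ of $S$. A short case analysis on the status of $u$ in $\pi$ finishes the argument: if $u$ is a $2$-star leaf, $u$ has no deficient neighbour (the other endpoint has utility $\tfrac12$ and $u$'s children are star centres), so $s-1=0$; if $u$ is a centre with one leaf, that leaf has utility $\tfrac12$ and is not deficient, so $u$'s only possible deficient neighbour is its parent, giving $s-1\le 1$, yet $(s-1)/s>v_u(\pi)=\tfrac12$ forces $s\ge 3$; and if $u$ is a centre with $d\ge 2$ leaves, then $(s-1)/s>d/(d+1)$ forces $s\ge d+2$, while $u$ has at most $d+1$ deficient neighbours (its $d$ leaves and possibly its parent), so $s=d+2$ and $S$ contains all $d$ leaves of $u$'s star — but such a leaf has utility $1/(d+1)$ in $\pi$ and only $1/s=1/(d+2)$ in $S$, so it does not strictly improve, a contradiction. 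Hence $\pi$ admits no blocking coalition, and since the construction is linear-time, this also gives the promised algorithm.

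I expect the concluding case analysis, and the elementary degree counting in the lemma, to be the only tedious parts. The genuine obstacle is conceptual: finding the right construction, namely one that packs the vertices into stars so that every low-utility (``deficient'') vertex is surrounded only by vertices that are too content to want to recruit it. This is exactly what the alternating parent/child pattern of the rooted bottom-up pass guarantees, and it is the precise form of the heuristic that a vertex's preferences are opposed to those of its grandparent.
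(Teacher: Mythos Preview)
Your proof is correct. The construction is essentially the paper's (root each tree, process bottom-up, and let each vertex form a star with its currently unclaimed children), with one cosmetic difference: the paper reattaches a leftover vertex to a smallest neighbouring coalition, whereas you simply leave an unclaimed root as a singleton and show this is harmless.

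The verification, however, is genuinely different. The paper argues by an informal induction on layers: vertices in coalitions drawn from the lowest two layers cannot participate in a blocking coalition, then one peels those off and repeats. Your argument instead rests on two structural facts that the paper never isolates: the counting lemma that in any subtree $S$ at most one vertex can achieve $v_i(S)>\tfrac12$, and the observation that the ``deficient'' players (utility $<\tfrac12$) produced by the construction form an independent set. Together these force every putative blocking subtree to be a star whose centre is the unique non-deficient vertex, reducing the problem to a short finite case analysis on that centre. This route is more self-contained and more rigorous than the paper's layer induction, and the counting lemma is a clean observation that could be reused elsewhere (indeed it echoes the paper's later girth-$5$ argument). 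One terminological slip: your ``$2$-leaf star'' should read ``$2$-vertex star'' (an edge); the context and your subsequent use of ``$2$-star'' make the intent clear, but it would be worth tidying.
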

		\begin{proof}	
			We present an algorithm to compute a partition in the core for an undirected tree. We can assume that the graph is connected---and therefore a tree---because if it were not, then the same algorithm for a tree could be applied to each connected component separately.
		%We present an algorithm to compute a core stable partition. 
		% Denote the graph representing the game by $G=(V,E)$.
		Pick an arbitrary vertex $v_0\in V$ and run breadth first search on it.
		Let $L_0$ consist of $v_0$, $L_1$ of all the vertices at a distance of~$1$ from $v_0$, and~$L_k$ of all vertices at a distance of~$k$ from~$v_0$. 
		Let $L_\ell$ be the last layer of the tree.
		% Let us say that the last layer of the tree is $L_l$. 
		We construct a partition~$\pi$, which we will later claim is in the core. 
		% The partition is constructed as follows. 
		Initialize~$\pi$ to the empty set. For each vertex~$v$ in the second last layer $L_{\ell-1}$ which has a child in the last layer $L_\ell$, add the set $\{v\}\cup \{w\midd w\in L_{\ell} \textrm{~and~} \{v,w\}\in E\}$ to $\pi$. Remove the sets of this form $\{v\}\cup \{w\midd w\in L_{\ell} \textrm{~and~} \{v,w\}\in E\}$ from the tree and repeat the process until no more layers are left.
		The partition returned is $\pi$. The procedure terminates properly. In each iteration, the last layer of the tree is removed along with some or all the vertices of the second last layer. If a vertex is left alone, send it to a smallest coalition that one of its neighbors is a member of.

		We now prove that~$\pi$ is in the core. For the base case, we show that no vertex from a coalition~$S\in\pi$ consisting of only the lowermost two layers, \ie~$L_{\ell}$ and~$L_{\ell-1}$, can be in a blocking coalition. If the vertex~$u$ in question is from the second last layer, then it will 
		only be in a blocking coalition $S$ if $S$ contains $u$, all the children of $u$ as well as the parent of $u.$ But then $S$ is not a blocking coalition for the children of $u$.
		For a leaf node $v$ to be in a blocking coalition, it will need to be with its parent $u$ but in a smaller coalition. But this means that $u$ is not in a blocking coalition. 
		We remove all vertices from coalitions only containing vertices from the last and second last layer and repeat the argument inductively.
		\end{proof}

\subsection*{Bakers and Millers: complete $k$-partite graphs}

\begin{theorem}
	Let $(N,\succsim)$ be a Bakers and Millers game with type space
	 $\types=\set{\type_1,\dots,\type_t}$ and~$\cstr=\set{S_1,\dots,S_m}$  a partition. Then, $\cstr$ is in the strict core if and only if for all types~$\type\in\types$ and all coalitions~$S,S'\in\pi$,  \[\frac{|S\cap\type|}{|S|}=\frac{|S'\cap\type|}{|S'|}\text.\]
\end{theorem}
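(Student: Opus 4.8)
The plan is to treat the two implications separately, parametrising everything by the fraction $f_{S,\theta} := |S\cap\theta|/|S|$ of type-$\theta$ players in a coalition $S$; note that $v_i(S) = 1 - f_{S,\theta(i)}$ for every player $i$, and that $\sum_{\theta\in\types} f_{S,\theta} = 1$ because the types partition $S$.

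For the ``if'' direction I would simply generalise the argument already given in the text for the grand coalition. Let $f_\theta$ denote the common value of $f_{S,\theta}$ over all $S\in\pi$, so $\sum_\theta f_\theta = 1$, and suppose some coalition $T$ weakly blocks $\pi$. Then $f_{T,\theta(i)}\le f_{\theta(i)}$ for every $i\in T$ and $f_{T,\theta(j)} < f_{\theta(j)}$ for some $j\in T$, hence $f_{T,\theta}\le f_\theta$ for every type $\theta$ (choosing a member of that type in $T$, or using $f_{T,\theta}=0$ if the type is absent from $T$) with a strict inequality at $\theta(j)$; summing over all types gives $1 = \sum_\theta f_{T,\theta} < \sum_\theta f_\theta = 1$, a contradiction.

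For the ``only if'' direction I would argue the contrapositive: assuming the stated equalities fail, I would construct a weakly blocking coalition, splitting into two cases according to whether some type is represented with two different \emph{positive} fractions. \emph{Case A:} there are a type $\theta$ and coalitions $S_1, S_2\in\pi$ with $0 < f_{S_1,\theta} < f_{S_2,\theta}$. Picking $j\in S_1\cap\theta$ and $i\in S_2\cap\theta$ and setting $T := (S_1\setminus\{j\})\cup\{i\}$ produces a coalition with $|T| = |S_1|$ and $f_{T,\theta'} = f_{S_1,\theta'}$ for \emph{every} type $\theta'$ (one type-$\theta$ player was swapped out for another and no other counts changed); thus every member of $S_1\setminus\{j\}$ is indifferent between $T$ and $\pi$, while $i$ strictly prefers $T$ since $f_{S_1,\theta}<f_{S_2,\theta}$, so $T$ weakly blocks $\pi$. \emph{Case B:} no such type exists, so for every type $\theta$ all coalitions of $\pi$ meeting $\theta$ share a common fraction $r_\theta$, and hence every type-$\theta$ player has utility exactly $1-r_\theta$ in $\pi$. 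Since the equalities fail, some type $\theta^*$ is unbalanced, which in this case forces some coalition of $\pi$ to be disjoint from $\theta^*$; from $|\theta| = \sum_{S\cap\theta\neq\emptyset} r_\theta |S|$ one gets $r_\theta = |\theta|\big/\sum_{S\cap\theta\neq\emptyset}|S| \ge |\theta|/n$ for every type, with strict inequality for $\theta^*$. As every type-$\theta$ player has utility $1-|\theta|/n$ in the grand coalition, $\{N\}$ weakly blocks $\pi$, the type-$\theta^*$ players being the strict improvers.

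I expect the only genuine design choice to be this case split; after that each case reduces to a couple of short verifications, and the part needing the most care is checking that the Case A swap really does leave every type's fraction unchanged (so the old members of $S_1$ remain \emph{exactly} indifferent) and that the $\theta^*$-free coalition in Case B is precisely what makes $r_{\theta^*}$ strictly exceed $|\theta^*|/n$. With the theorem established, I would then record the $\gcd$-reformulation, the existence and uniqueness (up to renaming) of a finest strict-core partition, and the failure of the reverse inclusion for the ordinary core (witnessed by the three-player example) as immediate consequences.
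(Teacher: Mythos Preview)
Your proposal is correct. The ``if'' direction is identical to the paper's, and in the ``only if'' direction your Case~A swap $T=(S_1\setminus\{j\})\cup\{i\}$ is exactly the paper's main construction. The difference lies in how the remaining case is handled. The paper's decomposition is local: having fixed $S,T\in\pi$ and types $\theta,\theta'$ with $f_{S,\theta}>f_{T,\theta}$ and $f_{S,\theta'}<f_{T,\theta'}$, it distinguishes whether \emph{both} $S\cap\theta'=\emptyset$ and $T\cap\theta=\emptyset$; if so, it shows that $T\cup\{i\}$ (with $i\in S\cap\theta$) weakly blocks. Your decomposition is global: either \emph{some} type has two distinct positive fractions (your Case~A), or every type's positive fractions coincide, in which case an unbalanced type must be absent from some coalition and the grand coalition weakly blocks. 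Your Case~B argument is neat because it recycles the ``grand coalition is strictly core stable'' computation already in the text and avoids the size comparison and index bookkeeping the paper needs for $T\cup\{i\}$; the paper's version, on the other hand, produces a smaller and more explicit blocking coalition. Both routes are short; yours is marginally more conceptual.
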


\begin{proof}
	First assume that for all types~$\type\in\types$ and all coalitions~$S$ and~$S'$ in~$\pi$ we have  $\frac{|S\cap\type|}{|S|}=\frac{|S'\cap\type|}{|S'|}\text,$
but that a weakly blocking coalition~$T$ for~$\cstr$ exists. Then,  $\frac{|T\cap\type(j)|}{|T|}\le\frac{|\cstr[j]\cap\type(j)|}{|\cstr[j]|}$
for all $j\in T$, while
there is some~$i\in T$ with $\frac{|T\cap\type(i)|}{|T|}<\frac{|\cstr[i]\cap\type(i)|}{|\cstr[i]|}$. Consider this~$i$. 
Without loss of generality assume that $\type_1,\dots,\type_k$ are the types represented in~$T$, \ie those types~$\type$ with $j\in \type$ for some $j\in T$.
By assumption we have, for all~$j\in T$,
$\frac{|\cstr[j]\cap\type(j)|}{|\cstr[j]|}=\frac{|\cstr[i]\cap\type(j)|}{|\cstr[i]|}\text.$ Hence,
\[	\frac{|T\cap \baker|}{|T|}+\dots+\frac{|T\cap\type_k|}{|T|}<\frac{|\cstr[i]\cap\baker|}{|\cstr[i]|}+\dots+\frac{|\cstr[i]\cap\type_k|}{|\cstr[i]|}\text.
\]
Observe that both
\begin{align*}
	\frac{|T\cap \baker|}{|T|}+\dots+\frac{|T\cap\type_k|}{|T|}	=	1 
	\end{align*}
	\begin{align*}
	\text{\quad and \quad}
	\frac{|\cstr[i]\cap\baker|}{|\cstr[i]|}+\dots+\frac{|\cstr[i]\cap\type_k|}{|\cstr[i]|}\le 1\text.
\end{align*}
A contradiction follows.

For the other direction, assume that there are coalition $S,T\in\cstr$ and a type $\type\in\types$ such that
$	\fracc{S\cap\type}{S}>\fracc{T\cap\type}{T}\text.
$
Then, $S\cap\type\neq\emptyset$ and let $i\in S\cap\type$.
As
\[\frac{|S\cap\baker|}{|S|}+\dots+\frac{|S\cap\type_t|}{|S|}=\frac{|T\cap \baker|}{|T|}+\dots+\frac{|T\cap\type_t|}{|T|}\text,
\]
there is some type~$\type'\in\types$ such that
%\[
	$\fracc{S\cap\type'}{S}<\fracc{T\cap\type'}{T}\text.$
%\]
Accordingly, $T\cap\type'\neq\emptyset$. 

First consider the case in which both $S\cap\type'=\emptyset$ and $T\cap\type=\emptyset$. Without loss of generality, we may assume that $|S|\le|T|$. Observe that $|S|<|T\cup\set i|$. The coalition $T\cup\set{i}$ is weakly blocking, as
\[
	\fracc{(T\cup\set i)\cap\type}{T\cup\set i}
	=
	\fracc{\set i}{T\cup\set i}
	<
	\fracc{\set i}{S}
	\le
	\fracc{S\cap\type}{S}
\]
and
\[
	\fracc{(T\cup\set i)\cap\type''}{T\cup\set i}
	=
	\fracc{T\cap\type''}{T\cup\set i}
	\le
	\fracc{T\cap\type''}{T}
\]
for every type~$\type''$ distinct from~$\type$.
(The latter inequality is not strict, as $T\cap\type''$ may be empty.)

Finally, assume without loss of generality, that $T\cap\type\neq\emptyset$ and let $j\in T\cap\type$. Since~$S$ and~$T$ are distinct and both in~$\cstr$, also $i\neq j$. We show that the coalition $T'=(T\setminus\set j)\cup\set i$ is weakly blocking. Consider an arbitrary type~$\type''\in\types$. Observe that $|T|=|T'|$ and $|T\cap\type''|=|T'\cap\type''|$, whether $\type''=\type$ or not. Therefore,
$
	\fracc{T\cap\type''}{T}=\fracc{T'\cap\type''}{T'}.
$
Accordingly, every player~$k\in T\setminus\set{i,j}$ is indifferent between~$T$ and~$T'$. To conclude the proof, observe that
$	\fracc{T'\cap\type}{T'}=\fracc{\cstr[j]\cap\type}{T}.
$
 Hence,
\[
	\fracc{\cstr[i]\cap\type(i)}{S}
	=
	\fracc{S\cap\type}{S}
	>
	\fracc{T\cap\type}{T}
	=
	\fracc{\cstr[j]\cap\type}{T}
	=
	\fracc{T'\cap\type}{T'}\text,
\]
\ie~$T'\succ_i S$, as desired.
\end{proof}

\subsection*{Graphs with large girth}

We say that two vertices~$v$ and~$w$ have a \emph{neighbor in common in $(V,E)$} if either $\set{v,w}\in E$ or there is some $u\in V$ such that $\set{u,v},\set{u,w}\in E$.  This notion allows a useful characterization of graphs with girth of at least five.

\begin{lemma}\label{lemma:5girth}
	Let $(V,E)$ be a graph with $|V|\ge 3$. Then, $(V,E)$ has girth of at least five if and only if all distinct $v,w\in V$ have at most one neighbor in common.
\end{lemma}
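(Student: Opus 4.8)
The plan is to translate the phrase ``having at most one neighbor in common'' into a statement about the presence of short cycles and then verify the two implications by a direct case analysis. Reading the definition just given in the intended way, a pair of distinct vertices $v,w$ \emph{fails} to have at most one neighbor in common in exactly one of two ways: either $\{v,w\}\in E$ \emph{and} $v,w$ have a common neighbor $u$ (the edge and the vertex $u$ being two distinct witnesses), or $v,w$ have two distinct common neighbors $u_1\neq u_2$. In the first case $v,u,w$ form a triangle; in the second, $v,u_1,w,u_2$ are four distinct vertices (a vertex adjacent to $v$ or $w$ cannot equal $v$ or $w$ in a simple graph, $u_1\neq u_2$, and $v\neq w$), and $v\,u_1\,w\,u_2\,v$ is a $C_4$. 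So a pair with ``more than one neighbor in common'' always produces a cycle of length $3$ or $4$, and conversely every such cycle exhibits such a pair. The lemma is then immediate from this reformulation.

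For the implication from left to right I would argue within this equivalence. Assume the graph has girth at least five and fix distinct $v,w$. If $v$ and $w$ had two distinct common neighbors $u_1,u_2$, the cycle $v\,u_1\,w\,u_2\,v$ would have length four, contradicting the girth bound; hence they have at most one common neighbor. If, in addition, $\{v,w\}\in E$ and they shared any common neighbor $u$, then $v,u,w$ would be a triangle, again contradicting girth at least five; hence an adjacent pair has no common neighbor at all. Either way $v$ and $w$ have at most one neighbor in common.

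For the converse I would use the contrapositive. If the girth is at most four, fix a shortest cycle $\Gamma$, of length three or four. If $|\Gamma|=3$, write $\Gamma = v\,u\,w\,v$; then $v$ and $w$ are adjacent and $u$ is a common neighbor, so $\{v,w\}$ has more than one neighbor in common. If $|\Gamma|=4$, write $\Gamma = v\,u_1\,w\,u_2\,v$; then $u_1$ and $u_2$ are two distinct common neighbors of the distinct vertices $v$ and $w$, so again $\{v,w\}$ has more than one neighbor in common. Thus the right-hand condition fails, as required.

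The only genuinely delicate point is making the counting convention explicit, namely that an \emph{adjacent} pair already counts as having ``one neighbor in common'', since this is exactly what rules out triangles; once that is fixed, everything reduces to the routine case check above. The hypothesis $|V|\ge 3$ is not actually needed for the equivalence itself (it holds vacuously or trivially for smaller graphs) and is stated only because the lemma is applied to larger graphs later.
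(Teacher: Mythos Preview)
Your proof is correct and follows essentially the same approach as the paper's: both argue, via contrapositive case analysis, that a pair with more than one neighbor in common yields a triangle or a $C_4$, and conversely that any $3$- or $4$-cycle exhibits such a pair. You are simply more explicit than the paper about the counting convention (that the edge $\{v,w\}$ itself counts as one ``common neighbor'') and about why the four vertices in the $C_4$ case are distinct; the paper leaves these as ``easy to find''.
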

\begin{proof}
	For the if direction, assume that $(V,E)$ contains a cycle of length three or four. In either case, it is easy to find vertices that have at least two neighbors in common. For the only-if direction, assume that there are $v,w\in V$ that have more than one neighbor in common. That is, either $\set{v,w}\in V$ and there is some $u\in V$ such that $\set{u,v},\set{u,w}\in V$ or 
	there are $u,u'\in V$ such that $\set{u,v},\set{u,w},\set{u',v},\set{u',w}\in V$. If the former, the graph has girth of at most three. If the latter, the graph's girth is at most four. 
\end{proof}

\begin{theorem}
	For simple and symmetric FHGs represented by graphs with girth at least five, the core is non-empty.
\end{theorem}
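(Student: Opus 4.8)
The plan is to exhibit a \emph{star partition}---a partition each of whose coalitions is a star, where by a star we mean a coalition $S$ having a vertex $c$ (its \emph{centre}) adjacent to every other member of $S$, with singletons counting as degenerate stars---that lies in the core; concretely, I would take a star partition that maximises the \emph{leximin} vector of the players' utilities, and show that it is core stable.

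\medskip
First I would record the utility structure of stars in a girth-$\ge 5$ graph. Since such a graph is triangle-free, the leaves of any star coalition $S$ of size $s\ge 2$ are pairwise non-adjacent (this is exactly the ``only one common neighbour'' content of Lemma~\ref{lemma:5girth}), so in $S$ the centre has utility $\tfrac{s-1}{s}$ and each leaf has utility $\tfrac1s$, while a singleton player has utility $0$. The set of star partitions is finite and non-empty (the all-singletons partition is one), so a leximin-maximal star partition $\pi$ exists; this will be our candidate core element, which simultaneously gives the stronger claim that a core-stable star partition exists.

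\medskip
Next I would suppose, towards a contradiction, that some coalition $T$ (strongly) blocks $\pi$, and manufacture a star partition $\pi'$ that is leximin-better than $\pi$. Two preliminary observations: every coalition of $\pi$ is connected (otherwise a proper connected component would block it), and since $v_i(T)>v_i(\pi)\ge 0$ for every $i\in T$, the induced subgraph $G[T]$ has minimum degree at least one. By girth $\ge 5$ the neighbourhood of any vertex is independent, so for each $v\in T$ the set $\{v\}\cup(N(v)\cap T)$ induces a star, and, having no isolated vertices, $G[T]$ can be partitioned into induced stars $T_1,\dots,T_q$ (a spanning star forest). The modification $\pi'$ would be obtained from $\pi$ by deleting every coalition of $\pi$ that meets $T$, inserting $T_1,\dots,T_q$, and assigning every \emph{displaced} player---a member of $\bigl(\bigcup_{i\in T}\pi(i)\bigr)\setminus T$---to a singleton.

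\medskip
The crux, and the step I expect to be the main obstacle, is verifying that $\pi'$ leximin-dominates $\pi$. This is delicate precisely because both the displaced players and the members of $T$ (who are better off in $T$ but not necessarily in their own sub-star $T_j$) may lose utility, so one cannot argue coordinate-by-coordinate in the naive way; one must show instead that no player sitting at the globally lowest utility level of $\pi$ is pushed below it, and that the count of players at that level strictly decreases (or the level itself rises). This is where the girth hypothesis does the real work: because any two vertices have at most one common neighbour (Lemma~\ref{lemma:5girth}), each star of $\pi$ can donate only very few vertices to $T$ and to its sub-stars, which bounds the sizes and overlaps involved and hence limits the collateral utility loss caused by dismantling $\pi$'s stars. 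Carrying out this bookkeeping carefully yields the desired leximin improvement, contradicting the maximality of $\pi$; hence $\pi$ admits no blocking coalition and the core is non-empty. (As the excerpt notes, turning this existence argument into a polynomial-time algorithm requires the separate observation, deferred to the appendix, that suitable \emph{local} leximin optima are already core stable and reachable by local search.)
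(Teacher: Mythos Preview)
Your overall strategy---take a leximin-maximal star partition and argue by contradiction that it is core stable---is exactly the paper's approach. The gap is in the specific modification $\pi'$ you propose: wholesale deleting every $\pi$-coalition that meets $T$, star-decomposing $T$, and sending every displaced player to a singleton will in general \emph{not} leximin-dominate $\pi$. If $\pi$ contains no singletons (say every star has size at least two, so the minimum utility is $1/k$ for the largest star size $k$), then the displaced players drop to utility $0$, strictly below the old minimum; the girth hypothesis gives you no handle on how many such players there are, so the ``bookkeeping'' you allude to cannot be carried out. Concretely, a single leaf $\ell\in T$ may come from a large $\pi$-star whose centre and all other leaves are outside $T$; your construction sends all of them to $0$.

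The paper avoids this by never dismantling whole stars. It first proves, as a separate leximin argument, that in an optimal star packing no two \emph{leaves} of $\pi$ are adjacent (otherwise pairing two adjacent leaves into a new $S_2$ is a leximin improvement, since both move to $1/2$ and the remaining leaves of their old stars also improve). With that in hand, the blocking coalition $S$ is analysed by the number of $\pi$-centres it contains: with zero centres, $S$ consists of pairwise non-adjacent leaves and cannot block; with two or more centres, each needs more than half of $S$ as neighbours, forcing two common neighbours and contradicting girth $\ge 5$ via Lemma~\ref{lemma:5girth}; with exactly one centre $c$, the coalition $S$ is itself a star centred at $c$, and the paper's improvement is to move a \emph{single} leaf $\ell\in S$ from its current star $\pi(c')$ into $\pi(c)$. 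This one-leaf move is what makes the leximin comparison tractable: only $\ell$, the leaves of $\pi(c)$, and $c'$ change utility, and one checks directly that the resulting vector is leximin-larger. Replace your global reconstruction of $\pi'$ by this preliminary ``no adjacent leaves'' step followed by the three-way case split, and the argument goes through.
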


\begin{proof}	
	The reader is referred to Figure~\ref{fig:star_packing} for a graphical illustration of certain aspects of its proof.
	
We first introduce the more general notion of graph packing.   
Let $\mathcal{F}$ be a set of connected undirected graphs.
An \emph{$\mathcal{F}$-packing} of a graph~$G$ is a subgraph~$H$ of~$G$ such that each component of~$H$ is isomorphic to a member of~$\mathcal{F}$. The components of an $\mathcal F$-packing~$H$ can be seen as coalitions, and thus~$\mathcal F$-packings naturally induce a coalition partition, with each vertex not contained in a connected component forming a singleton coalition.
We will consider \emph{star-packings} of graphs, \ie $\mathcal F$-packings with $\mathcal{F}=\{S_2, S_3, S_4, \ldots, \}$ such that each $S_i$ is a star with~$i$ vertices. Each star~$S_i$ with $i>2$ has one center~$c$ and~$i-1$ leaves $\ell_1,\dots,\ell_{i-1}$. We view $S_2$ as having two centers and no leaves.

We will prove that a star packing that maximizes \emph{leximin welfare} is core stable.
Formally, with each star packing, denoted by~$\pi$, we associate an \emph{objective vector} $\vec x(\pi)=(x_1,\dots,x_{|V|})$ such that $x_i\le x_j$ if $1\le i\le j\le |V|$, and there is a bijection~$f\colon V\to \set{1,\dots,|V|}$ with $u_{v}(\pi)=x_{f(v)}$. Thus, in $\vec x(\pi)$ the vertices/players are ordered according to their value for~$\pi$ in ascending order.   We assume these objective vectors to be ordered lexicographically by $\ge$, \eg $\left(\frac{1}{2},\frac{1}{2},\frac{1}{2},\frac{1}{2}\right)\ge\left(0,\frac{1}{3},\frac{1}{3},\frac{2}{3}\right)$ but not vice versa. The goal is to compute a star packing that maximizes its objective vector. Intuitively, this balances the sizes of the stars in the star packing and does not leave vertices needlessly on their own. 
Clearly, star packings maximizing the objective are guaranteed to exist and in the remainder of the proof we argue that such star packings are in the core. 

Observe first that every graph~$(V,E)$ admits a star packing such that every vertex which is not isolated (\ie which has a neighbor) is contained in some star~$S_i$ for~$i \geq 2$. This can be seen by considering a spanning forest. Thus, every star packing of~$(V,E)$ that maximizes the objective vector must have this property.

Now, let~$\pi$ be a star packing of a graph~$(V,E)$ that maximizes the objective vector. 
%Observe first that a vertex~$v$ has utility~$0$ under~$\pi$ if and only if~$v$ has no neighbors in $(V,E)$. If each vertex has a neighbor, then there exists a star packing in which is each vertex is in some star $S_i$ for $i\geq 2$. In the other direction, if some vertex has no neighbor, then no star packing will give that vertex utility more than zero. } In that case, $v$ has utility~$0$ in every partition and in every coalition. 
For a contradiction, assume that there is a coalition~$S$ blocking~$\pi$. Then, $S$ contains no isolated vertices, as these always obtain utility~$0$ and can thus not be strictly better off in~$S$. Therefore,~$S$ consists entirely of vertices that are either centers or leaves of~$\pi$.
Also observe that, for any two leaves~$\ell,\ell'$ in~$\pi$ we have $\set{\ell,\ell'}\notin E$. For a contradiction assume that there were such leaves~$\ell,\ell'$. Then,~$\ell$ and~$\ell'$ must come from different centers, otherwise $(V,E)$ would contain a triangle. Moreover,  $\pi'=\set{\set{\ell,\ell'},\pi'_1\dots,\pi'_k}$, where $\pi_i'=\pi_i\setminus\set{\ell,\ell'}$, is a star packing for which the objective vector is larger than the one for~$\pi$, because all leaves in $\pi$ obtain weakly higher utility in $\pi'$, and $\ell$ and $\ell'$ obtain strictly higher utility.

% Now assume for a contradiction that~$S$ is a blocking coalition. 
Now three cases can be distinguished: (i) $S$ contains no centers of $\pi$, (ii) $S$ contains exactly one center of $\pi$ and (iii) $S$ contains more than one center of $\pi$.

If~(i),~$S$ only contains leaves of~$\pi$, between which we know there are no edges. Hence, every member of~$S$ has utility~$0$ and~$S$ cannot be blocking.

If~(ii), we show that~$\vec x(\pi)$ is not optimal.
 {Let~$S$ consists of one center~$c$ and $m$~leaves $\ell_1,\dots,\ell_m$ of $\pi$.} Since no leaves in~$\pi$ are neighbors, and $S$ does not contain isolated vertices, $S$ must be a star with $c$ as center and $\ell_1,\dots,\ell_m$ as leaves. Let~$\ell$ denote one of the leaves and~$c'$ the center of~$\pi$ such that $\ell\in\pi(c')$. Consider the partition~$\pi'$ such that 
\[
	\pi'(k) = 
	\begin{cases}
		\pi(c)\cup\set{\ell}	&	\text{if $k\in\pi(c)\cup\set{\ell}$, and}\\
		\pi(k)\setminus\set{\ell}& \text{otherwise.}
	\end{cases}
\]
We claim that $\vec x(\pi')>\vec x(\pi)$, contradicting our initial assumption. Observe that it suffices to prove that (a) $u_\ell(\pi')>u_\ell(\pi)$ and (b) $u_k(\pi')\ge u_\ell(\pi')$ for all~$k$ with $u_k(\pi')<u_k(\pi)$.

For (a), observe that if $u_c(\pi)< u_c(S)$ and $c$ is a center in both~$\pi$ and~$S$, then $\frac{|\pi(c)|-1}{|\pi(c)|}< \frac{|S|-1}{|S|}$. 
% Thus, $|\pi(c)|<|S|$. 
Moreover, $u_\ell(\pi)<u_\ell(S)$, \ie $\frac{1}{|\pi(\ell)|}<\frac{1}{|S|}$.
Accordingly, $|\pi(c)|<|S|<|\pi(\ell)|$. 
% It follows that $|\pi'(\ell)|=|\pi(c)\cup\set\ell|<|\pi(\ell)|$, \ie $\frac{1}{|\pi'(\ell)|}>\frac{1}{\pi(\ell)}$, and $u_\ell(\pi')>u_\ell(\pi)$.
It follows that 
$|\pi'(\ell)|=|\pi(c)\cup\set{\ell}|\le S<|\pi(\ell)|$ 
and thus $u_\ell(\pi')>u_\ell(\pi)$.

For (b), let~$k$ be such that $u_k(\pi')<u_k(\pi)$. Then either $k=c'$ or $k\in\pi(c)\setminus\set{c}$. As~$c'$ is a center and~$\ell$ a leaf in~$\pi$, $c'$ still is a center in~$\pi'$. Hence, $u_{c'}(\pi')\ge\frac{1}{2}$. Moreover, $\ell$ is also a leaf in $\pi$ and thus $u_\ell(\pi')<\frac{1}{2}$, proving the case. Now assume that $k\in\pi(c)\setminus\set{c}$. Then, with~$k$ and~$\ell$ being both leaves in $\pi'(c)$, $u_k(\pi')=u_\ell({\pi'})$.

If (iii), assume that~$S$ contains at least two centers~$c$ and~$c'$ in~$\pi$. Then, $u_c(\pi)\ge\frac{1}{2}$ and $u_{c'}(\pi)\ge\frac{1}{2}$. Either $|S|=2k+2$ or $|S|=2k+3$ for some $k\ge 1$. As both $u_c(S)>\frac{1}{2}$ and $u_{c'}(S)>\frac{1}{2}$, also $|\set{i\in S\midd \set{c,i}\in E}|\ge k+2$ and $|\set{i\in S\midd \set{c',i}\in E}|\ge k+2$. It follows that~$c$ and~$c'$ must have at least two neighbors in common, contradicting Lemma~\ref{lemma:5girth}. This completes the proof.
\end{proof}

	\begin{figure}[t]
	%\begin{center}
	%\begin{wrapfigure}{r}{3cm}
	\centering
	% \resizebox{.44\textwidth}{!}%
	{
	\scalebox{.5}{
	\LARGE
	\begin{tikzpicture}
		\tikzstyle{pfeil}=[--,draw]
	%\Large
		\tikzstyle{pfeil}=[->,>=angle 60, shorten >=1pt,draw]
		\tikzstyle{onlytext}=[]
		\tikzstyle{node}=[circle,fill=gray!45]

		\draw 
			(-5,0)
			node[circle,fill=white,draw,inner sep=.4ex](c1){$c_1$}
			++(270:3) 
			+(180:1.5) node[circle,fill=white,draw,inner sep=.4ex](l1){$\ell_1$}
			+(  0:1.5) node[circle,fill=white,draw,inner sep=.4ex](l2){$\ell_2$}
		;                      
		\draw
			(0,0)
			node[circle,fill=white,draw,inner sep=.4ex](c2){$c_2$}
			++(270:3)                     
			node[circle,fill=white,draw,inner sep=.4ex](l4){$\ell_4$}
			+(180:1.7)                     
			node[circle,fill=white,draw,inner sep=.4ex](l3){$\ell_3$}
			+(0:1.7)                     
			node[circle,fill=white,draw,inner sep=.4ex](l5){$\ell_5$}
		;
		\draw 
			(5,0)
			node[circle,fill=white,draw,inner sep=.4ex](c3){$c_3$}
			++(270:3) node[circle,fill=white,draw,inner sep=.4ex](l7){$\ell_7$}
			+(180:1.7)                     
			node[circle,fill=white,draw,inner sep=.4ex](l6){$\ell_6$}
			+(0:1.7)                     
			node[circle,fill=white,draw,inner sep=.4ex](l8){$\ell_8$}
		;

\path[draw,very thick,-] (c1) -- (l1);
\path[draw,very thick,-] (c1) -- (l2);      

\path[draw,very thick,-] (c2) -- (l3);
\path[draw,very thick,-] (c2) -- (l4);
\path[draw,very thick,-] (c2) -- (l5);

\path[draw,very thick,-] (c3) -- (l6);
\path[draw,very thick,-] (c3) -- (l7);
\path[draw,very thick,-] (c3) -- (l8);

\path[draw,dashed,thick,-] (l2) -- (c2);
\path[draw,dashed,thick,-] (l4) -- (c3);
\path[draw,dashed,thick,-]	(c3) edge [bend right] (c1);
\path[draw,dashed,thick,-]	(l3) edge [bend right] (l8);

\end{tikzpicture}
	}}

	\caption{A graph with girth~$5$ and a star packing indicated by the solid edges. This star packing does not have an optimal objective vector: a better one would result if~$\ell_3$ and~$\ell_8$ were to form a star. Note that $\set{\ell_3,\ell_8}$ is a blocking coalition.}\label{fig:star_packing}
	\end{figure}
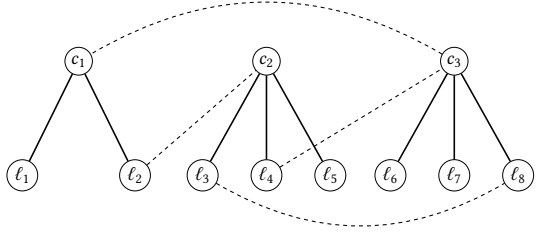

The procedure described in the proof above does not immediately yield a polynomial time algorithm that produces a core stable partition, since it is unclear whether a leximin star packing can be found in polynomial time. However, inspecting the proof further, we see that we in fact only need a \emph{local} optimum.

\begin{theorem}
	For simple and symmetric FHGs represented by graphs with girth at least five, an element of the core can be found in polynomial time.
\end{theorem}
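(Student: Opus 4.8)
The plan is to turn the existence proof for graphs of girth at least five into a polynomial algorithm by replacing the global maximisation of the leximin objective by a \emph{local} search over star packings. Examining that proof, one sees that leximin-optimality of the packing~$\pi$ is used only (a)~to guarantee that every non-isolated vertex lies in a star of size at least two, (b)~to guarantee that no two leaves of~$\pi$ are adjacent in~$G$, and (c)~in the subcase where a blocking coalition containing exactly one centre of~$\pi$ is turned into a better packing by moving a single leaf from one star into another. Accordingly, I consider two moves on a star packing~$\pi$. \textbf{Move (M1):} if two leaves $\ell,\ell'$ of~$\pi$ are adjacent in~$G$, delete each of them from its star and add the coalition $\set{\ell,\ell'}$. \textbf{Move (M2):} if there are centres $c,c'$ of~$\pi$ with $|\pi(c)|\le|\pi(c')|-2$ and a leaf $\ell\in\pi(c')$ with $\set{\ell,c}\in E$, remove $\ell$ from $\pi(c')$ and insert it as a leaf of $\pi(c)$. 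Since girth at least five forbids triangles, no leaf of $\pi(c)$ is adjacent to~$\ell$, so $\pi(c)\cup\set\ell$ is again a star (centre~$c$) and (M2) outputs a star packing; both moves also preserve the property that every non-isolated vertex lies in a star of size~$\ge 2$, and one can decide in polynomial time whether either applies and, if so, carry it out. The algorithm starts from a star packing with that property---obtained in polynomial time by partitioning a spanning forest of~$G$ into stars of size~$\ge 2$---and repeatedly applies (M1) or (M2) until neither is available.

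To bound the number of iterations I use the monovariant $\Phi(\pi)=\sum_{v\in V}u_v(\pi)$, which satisfies $0\le\Phi(\pi)\le|V|$. A direct computation shows that if the two stars touched by (M1) have sizes $a,a'\ge 3$, then $\Phi$ increases by $1-\tfrac{2}{a(a-1)}-\tfrac{2}{a'(a'-1)}\ge\tfrac13$; and that (M2), with $a=|\pi(c)|$ and $b=|\pi(c')|\ge a+2$, increases $\Phi$ by $\tfrac{2}{a(a+1)}-\tfrac{2}{b(b-1)}=2\cdot\tfrac{b(b-1)-a(a+1)}{a(a+1)b(b-1)}$, which is at least $\tfrac{4}{a\,b(b-1)}=\Omega(|V|^{-3})$ because $b(b-1)-a(a+1)\ge 2(a+1)\ge 6$. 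Hence every move raises~$\Phi$ by $\Omega(|V|^{-3})$, so at most $O(|V|^{4})$ moves are made and the search halts in polynomial time at a packing~$\pi$ to which neither (M1) nor (M2) applies.

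It remains to verify that such a~$\pi$ is in the core, for which I rerun the three-case analysis from the proof of the previous theorem. Any blocking coalition~$S$ contains no isolated vertices (they have utility~$0$), so by the maintained invariant $S$ consists of centres and leaves of~$\pi$ only, and since (M1) is inapplicable no two leaves of~$\pi$ are adjacent. If $S$ contains no centre of~$\pi$, then $G[S]$ has no edges and $S$ cannot block. If $S$ contains at least two centres of~$\pi$, one derives exactly as in that proof---using only Lemma~\ref{lemma:5girth}, and no optimality of~$\pi$---that two of these centres have more than one neighbour in common, a contradiction. Finally, if $S$ contains exactly one centre~$c$ of~$\pi$, then, as in that proof, $S$ is a star with centre~$c$ and leaves $\ell_1,\dots,\ell_m$ of~$\pi$ with $|\pi(c)|<|S|<|\pi(\ell_1)|$; letting $c'$ be the centre of the star of~$\pi$ containing~$\ell_1$, we get $c'\ne c$, $|\pi(c)|\le|\pi(c')|-2$, and $\set{\ell_1,c}\in E$, so (M2) would be applicable---a contradiction. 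Hence $\pi$ admits no blocking coalition.

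The main obstacle is precisely the polynomial termination bound: leximin-improving deviations can a priori be chained exponentially often, since the lexicographic order on objective vectors has exponentially many levels, so naive hill-climbing on~$\vec x(\pi)$ need not be polynomial. The point of restricting to the two moves above is that they still suffice to drive the case analysis through at a local optimum, while additionally admitting a genuinely polynomially bounded monovariant like~$\Phi$; checking both of these---in particular that the applicability condition of (M2) is implied by the existence of a one-centre blocking coalition---is where the work lies.
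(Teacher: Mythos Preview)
Your proposal is correct and follows essentially the same approach as the paper: the same two local moves (forming a new $S_2$ from two adjacent leaves, and transferring a leaf from a larger star to a smaller adjacent one), the same starting packing via a spanning forest, and the same case analysis showing a local optimum with respect to these moves is in the core. The only substantive difference is the monovariant used to bound the number of improvement steps: you use total utility $\sum_v u_v(\pi)$ and argue each move increases it by $\Omega(|V|^{-3})$, yielding an $O(|V|^4)$ bound, whereas the paper uses the integral potential $\Phi(\pi)=\sum_{\text{centres}}|V|+\sum_{\text{leaves}}(|V|-|\pi(i)|)$, which is bounded by $|V|^2$ and increases by at least~$1$ at each step, giving the slightly cleaner $O(|V|^2)$ bound. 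Both arguments are sound; the paper's integral potential avoids the fractional-increment bookkeeping but your choice is arguably more natural.
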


\begin{proof}
	The existence proof above showed that if a given star packing $\pi$ is blocked by some coalition, then there exists a leximin-better star packing $\pi'$ that could be obtained from $\pi$ in one of the following two ways:
	\begin{enumerate}
		\item[(a)] two leaves $\ell, \ell'$ from different stars in $\pi$ with $\{\ell, \ell'\} \in E$ are removed from their respective coalitions and form the new star $\{\ell, \ell'\}$, or
		\item[(b)] a leaf $\ell$ is moved from one star to another.
	\end{enumerate}
	Our algorithm now proceeds as follows: start by producing some star packing of $G$ in which every non-isolated vertex is in a star (such a star packing can be found by considering a spanning forest of $G$). Then improve this star packing by using operations (a) and (b) if they lead to a leximin improvement, until no more such opportunities are available. The resulting star packing is in the core by the argument in the existence proof above.
	
	It remains to analyze the runtime of this algorithm. Clearly, the initial step and each improvement step can be executed in polynomial time, so we only need to establish that the algorithm terminates after a polynomial number of improvement steps.
	
	Define the following potential function for each star packing $\pi$:
	\[  \Phi(\pi) = \sum_{i\in V \text{center}} |V| + \sum_{i\in V \text{leaf}} |V| - |\pi(i)|.    \]
	Note that this potential function is integral, non-negative, and bounded above by $|V|^2$. We show that every time we perform (a) or (b), the potential strictly increases. This implies that at most $|V|^2$ improvement steps will be required.
	
	If we perform (a), then we convert two leaves $\ell$ and $\ell'$ into centers and thereby strictly increase their contribution to $\Phi$. We also decrease the sizes of the stars that $\ell$ and $\ell'$ were part of in $\pi$, which increases the contributions to $\Phi$ of the remaining leaves in those stars. Everyone else's contribution stays fixed.
	
	If we perform (b), using the notation of the previous proof, we move $\ell$ from $\pi(c')$ to $\pi(c)$. Since by case (ii)(a) of that proof we thereby increase the utility of $\ell$, the leaf $\ell$ has moved from a large star to a smaller star; in particular $|\pi(c')| \ge |\pi(c)| + 2$. After the move of $\ell$, the contributions to $\Phi$ of the leaves of $\pi(c')$ have each increased by 1, and the contributions of leaves of $\pi(c)$ have decreased by 1. Since there are more of the former than of the latter, this is an overall strict improvement.
\end{proof}

\end{document}